\newtheorem{proposition}{Proposition}  
\newtheorem{claim}{Claim}  
\newtheorem{lemma}{Lemma}
\newtheorem{corollary}{Corollary}
\newtheorem{theorem}{Theorem}
\newtheorem{conjecture}{Conjecture}
\theoremstyle{definition}
\newtheorem{definition}{Definition}
\newtheorem{example}{Example}
\newtheorem{mechanism}{Mechanism}
\newcommand{\doctitle}{Cake Cutting Mechanisms}
\newcommand{\docauthor}{Egor Ianovski}
\newcommand{\docdate}{\today}
\title{\doctitle{}}
\author{\docauthor{}}
\date{\docdate{}}
\titleformat{\chapter}[hang]{\bfseries \huge}{\thechapter}{2pc}{}
\begin{document}

\begin{titlepage}

\begin{center}

\includegraphics[width=0.15\textwidth]{./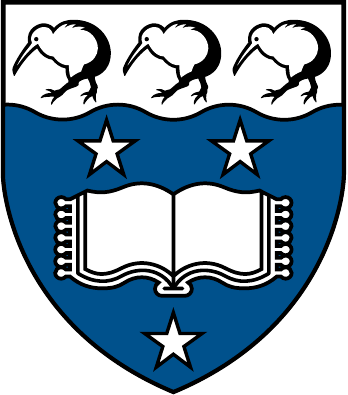}\\[1.5cm]    

\textsc{\LARGE University of Auckland}\\[1cm]

\textsc{\Large Department of Computer Science}\\[0.2cm]

\textsc{\Large Department of Philosophy}\\[0.8cm]

\hrule
\vspace{0.3cm}

{ \Large Cake Cutting Mechanisms}

\vspace{0.3cm}
\hrule
\vspace{0.9cm}
\begin{minipage}{0.4\textwidth}
\begin{flushleft} \large
\emph{Author:}\\
Egor Ianovski
\end{flushleft}
\end{minipage}
\begin{minipage}{0.4\textwidth}
\begin{flushright} \large
\emph{Supervisor:} \\
Dr.~Mark C. Wilson
\end{flushright}
\end{minipage}

\vfill

{\large Submitted in partial fulfilment of the requirements of the degree of
BSc(Hons) in Logic and Computation, 19 October 2011. Last updated}
{\large \today.}

\end{center}

\end{titlepage}

\begin{abstract}
We examine the history of cake cutting mechanisms and discuss the efficiency of
their allocations. In the case of piecewise uniform preferences, we define a
game that in the presence of strategic agents has equilibria that are not
dominated by the allocations of any mechanism. We identify that the equilibria
of this game coincide with the allocations of an existing cake cutting
mechanism. 
\end{abstract}

\setcounter{page}{1}
\renewcommand{\thepage}{\roman{page}}

\tableofcontents

\chapter{Introduction}

\setcounter{page}{1}
\renewcommand{\thepage}{\arabic{page}}

\section{To Cut a Cake}

The topic of cake cutting is a subset of fair division, having its origins in
recreational mathematics. It is the problem of dividing a resource between a
number of agents in a fashion that is ``fair", be it a cake between children or
zoning rights between property developers.

What are the qualities of this resource? Well, a cake is not a bag of sweets.
The resource is continuous, and any given piece can be subdivided into smaller
pieces. A cake is not a mousse. The resource is heterogeneous, different agents
can attach different values to different regions of cake. Finally, a cake is not
meant to be eaten alone. No agent has exclusive right to the cake; it is a
windfall good, its origin unimportant. We have a cake, and we must cut it.

To effect the division of the cake we need more than a kitchen knife. If there
is any hope that the resulting allocation is to have the properties we desire of
it, we need a mechanism: a clearly specified set of rules that incorporates
whatever information it can evoke from the agents to find an allocation that
best satisfies whatever criteria we require of it.

\section{Outline of this Work}

In Section \ref{def} we introduce the mathematical framework in which we will
work throughout the text. Section \ref{lit} we review the existing
cake cutting literature, focusing on mechanisms and their properties. In Section
\ref{eff} we look at the efficiency of allocations, and show that in general
optimal allocations cannot be produced by any cake cutting mechanism.
Part of this problem stems from the strategic behaviour of agents, so in Section
\ref{str} we will consider the equilibria induced by such in
a restricted preferences model of cake cutting. We conclude in Section
\ref{con}.

Appendix \ref{table} summaries all mechanisms mentioned in this text and
Appendix \ref{RW} gives pseudocode presentations of the Robertson-Webb
protocols.

\pagebreak

\subsection{Our Contribution}

Our main result is that the cake cutting mechanism of \cite{Chen10} attains an
allocation that is undominated in terms of utilitarian efficiency. We achieve
this with the help of a game whose equilibrium outcomes are equivalent to
the mechanism's outcomes, which better allows us to isolate the desired property.

\chapter{A Framework for Cake Cutting}\label{def}

Problems in cake cutting have been approached by authors from Mathematics,
Economics, Computer and Political Science. As such terminology is not standard;
different papers use different terms to refer to the same concepts and sometimes
the same terms to different concepts. We will therefore dedicate this section to
introducing notation and definitions as they will be used in this text, which
will allow us to use the same language throughout the literature review in
Section \ref{lit}.

\section{The Cake Cutting Situation}

Central to cake cutting is, of course, the cake. In general we take the cake to
be the unit interval, $[0,1]$, although we shall touch upon the slightly
different context of pie cutting in
Section \ref{litgeo}. Cakes of higher dimensions do arise in the literature, but
that is beyond the scope of the current work.

A cake cutting situation consists
of the
cake and a finite number of agents. If the number of agents is not explicitly
specified, we will reserve $n$ for the number of agents. Every agent has a
utility
function on subsets of cake. We denote agent $i$'s utility function by
$u_i$. We require that $u_i$ be:
\begin{itemize}
\item
Normalised: $u_i([0,1])=1$.
\item
Countably additive: $u_i(X\cup Y)=u_i(X)+u_i(Y)$, where $X$ and $Y$ are disjoint.
\item
Non-atomic: $u_i([a,a])=0$.
\item
Non-negative: $u_i(X)\geq 0$.
\end{itemize}

These requirements are standard. Occasionally (for instance, in \cite{Su99}) an
additional ``hungry agents" condition is required:
\begin{itemize}
\item
Non-zero: $u_i(X)=0$ only if $X$ has zero measure.
\end{itemize}

\pagebreak

Formally, we require that $u_i$ be a probability measure
defined on a $\sigma$-algebra of subsets of the cake. That is:

\begin{align}
u_i([a,b])=\int_a^b\! t_i(x)\, dx
\end{align}
for a probability density function $t_i$. It should be noted that
many papers on the subject do not make this explicit. For most purposes the
exact form of the function is unimportant, it is sufficient that it satisfies
the first four conditions above and that the agents are able to respond to
certain queries regarding their utility; we shall encounter this when we define
Robertson-Webb protocols. Authors that do give definitions tend to give
conflicting ones. \cite{Woodall80}, like us, defines a utility function as given
by
a probability measure, while in \cite{Dubins61} any countably additive real
valued function suffices.

A \emph{slice} of cake refers to a continuous sub-interval of $[0,1]$.
Non-atomicity of utility functions allows us to assume that all slices are
closed. A
\emph{portion} is a union of one or more slices such that any two portions are
disjoint, with the possible exception of boundary points. An $n$-tuple of
portions, $A=(A_1,...,A_n)$, is an \emph{allocation}, with portion $A_i$ being
allocation to agent $i$. Agent $i$ thus derives $u_i(A_i)$ utility from
allocation $A$.

As slices and portions will be of greater interest to us than any other subset
of cake, we will use $|X|$ to refer to the length, rather than the cardinality,
of $X$. That is:
\begin{itemize}
\item
$|[a,b]|=b-a$
\item
$|X\cup Y|=|X|+|Y|$, for disjoint $X$, $Y$.
\end{itemize}

If at any point we wish to refer to the cardinality of set $X$, we will use
$\#X$.

\subsection{Restricted Preferences}

While an arbitrary real valued density function is sufficient for many results
in cake cutting, it leads to problems from the computational side. Almost all
such functions have no finite representation (as there are uncountably many such
functions, but only countably many representations), and accordingly many
associated problems are uncomputable.

One way to circumvent such issues is to restrict the range of admissible
functions. Three such restricted functions, as used in \cite{Chen10} and
\cite{Cohler11}, are given below:

\begin{itemize}
\item
Piecewise uniform: the cake can be partitioned into a finite number of intervals
such that for some constant $c$, $t_i(x)=c$ or 0 over every interval. As
utilities are normalised, $c=1/|P_i|$ where $|P_i|$ is the total length of the
cake that agent $i$ has non-zero density over. For computational purposes, we
require that the endpoints of these intervals be rational numbers.
\item
Piecewise constant: the cake can be partitioned into a finite number of
intervals such that $t_i$ is constant over every interval. In other words, $t_i$
is an arbitrary normalised step function. For computational purposes, we require
that the endpoints of the intervals and the values of $t_i$ be rational numbers.
\item
Piecewise linear: the cake can be partitioned into a finite number of intervals
such that $t_i$ is a linear function over every interval. For computational
purposes, we require that the endpoints of the intervals, slopes and
$y$-intercepts of $t_i$ be
rational numbers.
\end{itemize}

Note that to specify piecewise uniform preferences it is sufficient to specify
which intervals of the cake the agent has non-zero density over. In other words,
the intervals the agent \emph{values}. Due to this representational ease, we
will often use piecewise uniform preferences in examples.

\section{Properties of Allocations}

As the term ``fair division" suggests, an underlying motivation of cake cutting
is the desire to cut the cake in some way that is ``fair". To speak formally of
fairness, we need to define notions of equity. Taking $A=(A_1,...,A_n)$ as an
allocation, we give three of the more prominent definitions here.

\begin{itemize}
\item
Proportionality: $u_i(A_i)\geq 1/n$ for all $i$.
\item
Envy Freeness: $u_i(A_i)\geq u_i(A_j)$ for all $i,j$.
\item
Equitability: $u_i(A_i)=u_j(A_j)$ for all $i,j$.
\end{itemize}

In words, a proportional allocation ensures each of $n$ agents feels that they
received at least $1/n$th of the cake. An envy free allocation ensures that no
agent likes the portion of another agent more than their own. Equitability
ensures all agents derive the same utility from the allocation.

\begin{example}
Given an allocation, we can visualise the equity criteria via an $n\times n$
table. For instance, if $n=3$ we construct:
\begin{center}
\begin{tabular} {l|lll}
&$A_1$&$A_2$&$A_3$\\
\hline
$u_1$&$u_1(A_1)$&$u_1(A_2)$&$u_1(A_3)$\\
$u_2$&$u_2(A_1)$&$u_2(A_2)$&$u_2(A_3)$\\
$u_3$&$u_3(A_1)$&$u_3(A_2)$&$u_3(A_3)$\\
\end{tabular}
\end{center}
The diagonal is precisely the utilities derived from $A$. Hence if all entries
are at least $1/n$, $A$ is proportional. If the
entries in the diagonal are greater than or equal to all other entries in their
row, $A$ is envy free. If all the entries in the diagonal are equal, $A$ is
equitable.
\end{example}

The reader may note that in any cake cutting situation a trivial envy free
allocation is
just $E=(\emptyset,...,\emptyset)$. That is, envy can be eliminated by throwing
the cake away. It would however be difficult to defend such
a manner of attaining equity. As such in addition to equity criteria, certain
concepts of efficiency are beneficial.

\begin{itemize}
\item
Non-wastefulness: for every interval $I$, if $u_i(I)=0$ then $I\subseteq A_i$
only if $u_j(I)=0$ for all $j$.
\item
Pareto efficiency: there is no allocation $B=(B_1,...,B_n)$ such that
$u_i(A_i)\leq u_i(B_i)$ for all $i$ and $u_j(A_j)<u_j(B_j)$ for some $j$.
\item
Utilitarian optimality: there is no allocation with a higher \emph{utilitarian
efficiency} than $A$. That is:
\begin{align}
\sum_{i=1}^n u_i(A_i)\geq\sum_{i=1}^n u_i(B_i)
\end{align}
for all allocations $B$.
\end{itemize}

Observe that the equity criteria do not, in general, imply each other. Consider
these three examples:
\begin{example}
Consider three agents with piecewise uniform preferences. Agent one values
$[0,0.1]$. Agent two and three both value $[0.4,1]$. We construct the allocation
$(A_1,A_2,A_3)=([0,0.1],[0.4,0.8],[0.8,1])$. In other words:
\begin{center}
\begin{tabular} {l|lll}
&$A_1$&$A_2$&$A_3$\\
\hline
$u_1$&1&0&0\\
$u_2$&0&2/3&1/3\\
$u_3$&0&2/3&1/3\\
\end{tabular}
\end{center}
This allocation is proportional as all the entries in the diagonal are greater
than or equal to $1/3$. It is not envy free because $u_3(A_2)>u_3(A_3)$: that
is, agent 3 envies agent 2. It is not equitable as the entries in the diagonal
are not equal.
\end{example}
\begin{example}
Consider two agents with piecewise uniform preferences. Agent one values
$[0,0.5]$, agent two values $[0.5,1]$. We construct the allocation
$(A_1,A_2)=(\emptyset,[0.5,1])$.
\begin{center}
\begin{tabular} {l|ll}
&$A_1$&$A_2$\\
\hline
$u_1$&0&0\\
$u_2$&0&1\\
\end{tabular}
\end{center}
The allocation is envy free, as the diagonal entries are the maxima of their
respective rows. It is not proportional as $u_1(A_1)<1/2$. It is not equitable
as $u_1(A_1)\neq u_2(A_2)$.
\end{example}
\begin{example}
Consider two agents with piecewise uniform preferences. Agent one values
$[0,0.6]$, agent two values $[0.4,1]$. We construct the allocation
$(A_1,A_2)=([0.5,1],[0,0.5])$
\begin{center}
\begin{tabular} {l|ll}
&$A_1$&$A_2$\\
\hline
$u_1$&1/6&5/6\\
$u_2$&5/6&1/6\\
\end{tabular}
\end{center}
The allocation is equitable, as $u_1(A_1)=u_2(A_2)$, but it is neither envy free
nor proportional.
\end{example}

On the other hand, the efficiency criteria are of increasing strength.

\begin{claim}\label{Eff Chain}
An allocation is utilitarian optimal only if it is Pareto efficent and an
allocation is Pareto efficient only if it is
non-wasteful.
\end{claim}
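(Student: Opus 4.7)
The plan is to prove both implications by contraposition, since each requires only a short construction using countable additivity of the utility functions.

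For the first implication, I would assume $A$ is not Pareto efficient and derive that it is not utilitarian optimal. By assumption, there is an allocation $B$ with $u_i(A_i)\leq u_i(B_i)$ for every $i$ and with strict inequality for at least one $j$. Summing these $n$ inequalities immediately gives $\sum_{i=1}^n u_i(A_i) < \sum_{i=1}^n u_i(B_i)$, witnessing that $A$ is not utilitarian optimal.

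For the second implication, I would assume $A$ is not non-wasteful and construct a Pareto improvement. By negating the non-wastefulness condition, there is an interval $I$ and indices $i,j$ such that $I\subseteq A_i$, $u_i(I)=0$, and $u_j(I)>0$. Define $B$ from $A$ by reassigning $I$ from agent $i$ to agent $j$, i.e.\ $B_i=A_i\setminus I$, $B_j=A_j\cup I$, and $B_k=A_k$ for all other $k$. By countable additivity, $u_i(B_i)=u_i(A_i)-u_i(I)=u_i(A_i)$ and $u_j(B_j)=u_j(A_j)+u_j(I)>u_j(A_j)$, while all other utilities are unchanged. Hence $B$ Pareto dominates $A$.

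No step here is a genuine obstacle; the only mild subtlety is that in the second argument the reassignment of $I$ must respect the partition-up-to-boundary-points condition on allocations, but since $u_j$ is non-atomic the finitely many new boundary points contribute nothing to utilities, so $B$ is a valid allocation. Together the two contrapositives give the chain of implications claimed.
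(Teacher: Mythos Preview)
Your proof is correct and follows essentially the same approach as the paper: both implications are argued by assuming the consequent fails and exhibiting a dominating allocation, via summation for the first and reassigning the wasted interval $I$ for the second. Your version is slightly more explicit (invoking additivity and noting the boundary-point subtlety), but the underlying argument is identical.
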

\begin{proof}
Suppose an allocation $A$ is utilitarian optimal but not Pareto efficient. Then
there exists an allocation $B$ such that $u_i(A_i)\leq u_i(B_i)$ for all $i$ and
$u_j(A_j)<u_(B_j)$ for some $j$. But then:
\begin{align}
\sum_{i\neq j} u_i(A_i)&\leq\sum_{i\neq j} u_i(B_i)\\
u_j(A_j)&<u_j(B_j)\\
\sum_{i=0}^n u_i(A_i)&<\sum_{i=0}^n u_i(B_i)
\end{align}
which is impossible because $A$ is utilitarian optimal.

Suppose an allocation $A$ is Pareto efficient but wasteful. That means for some
$i,j$ there exists an interval $I\subseteq A_i$ such that $u_i(I)=0$ but
$u_j(I)>0$. But then we can attain a Pareto dominant allocation by giving $I$ to
$j$, keeping everything else unchanged.
\end{proof}

Finally, though in general the equity criteria are independent, that is not the
case if other requirements are imposed. A result that we will often implicitly
invoke is that
if the entire cake is allocated, then envy freeness implies proportionality.

\begin{claim}\label{envpr}
Given an allocation $A$, if:
\begin{align}
\bigcup_{i=1}^n A_i=[0,1]
\end{align} 
then $A$ is envy free only if it is proportional.
\end{claim}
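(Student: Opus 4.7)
The plan is to exploit the fact that the entire cake is partitioned among the agents, together with countable additivity and normalisation of $u_i$, to conclude that agent $i$'s utilities over the portions $A_1,\dots,A_n$ sum to $1$. Then envy freeness, applied $n$ times for a fixed agent $i$, gives $n$ inequalities that when summed yield exactly the proportionality bound.

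Concretely, I would fix an agent $i$ and first observe that the portions $A_1,\dots,A_n$ are pairwise disjoint (up to boundary points, which have measure zero by non-atomicity), and that their union is $[0,1]$. Hence by countable additivity and normalisation,
\begin{align}
\sum_{j=1}^n u_i(A_j) = u_i\!\left(\bigcup_{j=1}^n A_j\right) = u_i([0,1]) = 1.
\end{align}
Envy freeness gives $u_i(A_i) \geq u_i(A_j)$ for every $j$, so summing these $n$ inequalities yields $n\, u_i(A_i) \geq \sum_{j=1}^n u_i(A_j) = 1$, i.e.\ $u_i(A_i) \geq 1/n$. Since $i$ was arbitrary, $A$ is proportional.

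There is no real obstacle here; the only subtlety worth flagging is the handling of boundary overlaps between portions, which is why the definition of \emph{portion} allowed disjointness ``with the possible exception of boundary points'' and why non-atomicity is invoked so that these overlaps contribute nothing to $u_i$. Everything else is a one-line application of additivity followed by a summation of the envy-freeness inequalities.
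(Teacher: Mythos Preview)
Your proof is correct and uses essentially the same idea as the paper: both exploit additivity and normalisation to see that $\sum_{j} u_i(A_j)=1$, then compare $u_i(A_i)$ to the average $1/n$. The only cosmetic difference is that the paper argues by contrapositive (if $u_i(A_i)<1/n$ then some $u_i(A_j)>1/n$ by averaging), whereas you argue directly by summing the envy-freeness inequalities; your version is arguably the cleaner of the two.
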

\begin{proof}
Suppose $A$ is not proportional. Then there exists some agent, $i$, such that
$u_i(A_i)<1/n$. Since utilities are additive and normalised, $\sum
u_i(A_j)>(n-1)/n$ for $j\neq i$. The average value of $u_i(A_j)$ is greater than
$1/n$ and as utilities are non-negative, there must exist some $j$ such that
$u_i(A_j)>1/n>u_i(A_i)$. Which is to say, agent $i$ envies this $j$.
\end{proof}
\section{Mechanisms}

To obtain an allocation, we use a \emph{cake cutting mechanism}. A cake cutting
mechanism is a game played by the agents, which effects a resulting allocation.
We give no general definition of a cake cutting mechanism: one would necessarily
be too broad to be useful. Instead we identify three classes of mechanisms and
motivate them separately.

\subsection{Moving Knife Protocols}

Some of the earliest cake cutting mechanisms, such as those proposed in
\cite{Austin82} and \cite{Stromquist80}, consist of one or more knives being
moved continuously along the cake, stopping when some player yells ``cut!".
\cite{Dubins61} say the following
regarding these protocols:

\begin{quote}
``But their solution is more than a mere existence theorem. In fact, it provides
an important practical method for effecting such a division"
\end{quote}

This is a rather curious feature of cake cutting. While more than an existence
theorem, a moving knife protocol is certainly less
than an effective procedure in the algorithmic sense: the continuous movement of
the knife cannot be captured by a finite protocol. Perhaps a close parallel are
the Japanese and Dutch auctions; in theory price is raised or lowered
continuously until a winner is determined, while in any practical application a
discrete step size would have to be used, and the auction would only approximate
the continuous solution.

A moving knife protocol consists of a finite number of rules with clearly
specified rates and directions of movement, and rules for stopping the knives.
An agent may be asked to move a knife based on information from their own
utility function: for instance, in \cite{Austin82} an agent is asked to move two
knives such that the region between them is worth a half of the cake in the
agent's estimation. However, an agent may not move a knife based on another
agent's estimation, as this information is deemed to be private.

\subsection{Robertson-Webb Protocols}

Robertson-Webb protocols, so named after the authors of \cite{Robertson98},
offer a formalisation that covers most finite cake cutting mechanisms. Agents
are treated as oracles, able to respond to the following two queries:

\begin{itemize}
\item
\texttt{eval(a,b)}: The agent evaluates the slice between $a$ and $b$. That is,
agent $i$ returns $u_i([a,b])$.
\item
\texttt{cut(a,x)}: The agent moves a knife from $a$ to the right until they
measure out a slice they value at $x$. That is, agent $i$ returns a $b$ such
that $u_i([a,b])=x$.
\end{itemize}

A Robertson-Webb protocol is thus an algorithmic procedure taking $n$ agent
oracles as input and returning an allocation of the cake. Throughout the text we
will present Robertson-Webb protocols in a high level, natural language
fashion. Pseudocode formulations are included in Appendix \ref{RW}.

The elegance of this formulation is in its ability to circumvent the
difficulties of dealing with real valued functions. Agents can be assumed to be
hypercomputational entities if need be, able to manipulate their own utility
function sufficiently to respond to the two queries allowed by the mechanism.
Whether or not their utilities have finite representations is of no concern to
the mechanism.

\subsection{Revelation Protocols}

There exist finite protocols, for instance in \cite{Cohler11}, \cite{Chen10} and
\cite{Caragiannis11}, which cannot be represented as a Robertson-Webb protocol.
Instead they take the form of routines which take the agents' utility functions
as input (accordingly some finite representation is required). As agents are
required to directly submit their preferences to the mechanism, these bear some
resemblance to the direct revelation mechanisms of implementation theory. We
will thus refer to them as revelation protocols.

Formally, a revelation protocol is a function mapping $(u_1,...,u_n)\mapsto
(A_1,...,A_n)$. That is, it takes an $n$-tuple of utility functions to an
allocation. This function is not necessarily computable; we will in Section
\ref{littruth} see a mechanism for which no computable implementation is known.
Such mechanisms we call \emph{non-constructive} to distinguish them from
mechanisms proper.

It may not be \emph{a priori} obvious that it is not possible to simulate a
revelation
protocol using a Robertson-Webb protocol. We claim that this is indeed the case,
based on the following observation:

\begin{claim}\label{RW wasteful}
There exist cake cutting situations with piecewise uniform preferences where a
Robertson-Webb protocol cannot create a non-wasteful allocation.
\end{claim}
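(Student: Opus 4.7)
The plan is an adversary argument. Given any Robertson-Webb protocol $\mathcal{P}$, I would construct a piecewise uniform cake cutting situation on which $\mathcal{P}$ necessarily outputs a wasteful allocation. The core fact to exploit is that $\mathcal{P}$ terminates after finitely many \texttt{eval} and \texttt{cut} queries, and the resulting transcript captures strictly less information than a revelation protocol receives directly.

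First, I would collect every cut point that appears as an argument to or response from one of $\mathcal{P}$'s queries; these partition $[0,1]$ into finitely many subintervals $I_1,\ldots,I_m$. All of $\mathcal{P}$'s information about agent $i$ is then summarised by the numbers $u_i(I_j)$, because any internal structure of $u_i$ within a given $I_j$ is invisible. For piecewise uniform preferences $u_i(I_j)=|V_i\cap I_j|/|V_i|$, which fixes only the length of $V_i$ within $I_j$, not its position. Consequently, any two piecewise uniform profiles that agree on these lengths induce identical transcripts, and $\mathcal{P}$ must return the same allocation on both.

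Next, I would choose a starting profile with two agents whose valued sets overlap so that, after $\mathcal{P}$ terminates, some interval $I_j$ allocated to agent $1$ satisfies $|V_2\cap I_j|>0$ and $|V_1\cap I_j|+|V_2\cap I_j|\leq |I_j|$. Within this $I_j$ I would then rearrange $V_1\cap I_j$ and $V_2\cap I_j$, preserving rational endpoints, so that they become disjoint. This rearrangement changes no $u_i(I_j)$, so $\mathcal{P}$ produces the same transcript and hence the same allocation on the modified profile. But now $I_j\subseteq A_1$ contains a subinterval on which $u_1$ is zero and $u_2$ is positive, witnessing wastefulness.

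The main obstacle is guaranteeing the existence of a suitable $I_j$ for every possible $\mathcal{P}$. A protocol might try to refine its partition until every $I_j$ lies entirely inside or outside each $V_i$, leaving no slack for rearrangement. The key step is therefore to show that no finite transcript can force such a perfect resolution against all admissible profiles: each query imposes only one linear constraint on an infinite-dimensional space of piecewise uniform valuations, so the adversary can always select an initial profile for which at least one allocated $I_j$ remains unresolved and hence rearrangeable into a witness of wastefulness.
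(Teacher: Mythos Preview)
Your overall strategy---fix the protocol, run it, collect the induced intervals $I_j$, then perturb the profile while preserving the transcript---is exactly the paper's. Where you diverge is in the perturbation: you want to rearrange $V_1\cap I_j$ and $V_2\cap I_j$ inside some allocated $I_j$ so that they become disjoint, which forces the slack condition $|V_1\cap I_j|+|V_2\cap I_j|\le|I_j|$ (or at least $|V_1\cap I_j|<|I_j|$ with $|V_2\cap I_j|>0$) on the \emph{starting} profile. As you yourself flag, this is the hard part: the $I_j$ depend on the protocol's queries, which depend on the profile, so the choice is circular. Your dimension-counting sketch gestures at why an adaptive adversary should succeed, but it does not actually produce a profile (or a query-answering strategy) that guarantees an unresolved $I_j$ with the right overlap and the right owner for an arbitrary $\mathcal{P}$. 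That is a genuine gap, and closing it is more work than the claim warrants.

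The paper sidesteps the circularity by taking both agents uniform on $[0,1]$---so there is \emph{no} slack initially---and then modifying only one agent after the transcript is fixed: in each piece $[i,j]$ between consecutive marks it replaces that agent's valued region by $[i,p]\cup[q,j]$ of total length $|[i,j]|/c$ for some $c>1$. Because piecewise uniform utilities are ratios of valued lengths, scaling every piece by the same factor leaves every \texttt{eval} and \texttt{cut} response unchanged, and the unmodified agent still values all of $[0,1]$, so any hole landing in the modified agent's portion is automatically waste. If none does, the same construction is applied to the other agent instead. The slack is thus \emph{manufactured} by the perturbation rather than demanded of the starting profile, and this is precisely what lets the argument go through without the obstacle you identified.
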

\begin{proof}
Consider a cake cutting situation with two agents where both agents value the
entire cake uniformly. A non-wasteful allocation in this case is any that
allocates the entire cake.

We will show that after a finite number of Robertson-Webb queries there exists a
different cake cutting situation that would generate the same responses to all
the queries, but where some agent, without loss of generality 1, does not value
the entire cake.

Suppose a finite number of \texttt{eval} and \texttt{cut} queries has been made.
In order to construct the different situation we wish to divide the cake into
intervals. These will be determined by the queries made.

Place a mark on the cake at 0 and 1, all $a,b$ for
every \texttt{eval(a,b)} query to agent 1 and all $a,b$ for every
\texttt{cut(a,x)}
query for agent 1 that returns $b$. Our desired intervals are between
consecutive marks thus placed. We refer to them as pieces.

For every piece, $[i,j]$, in our new situation agent 1 will value
$[i,p]\cup[q,j]$ such that $|[i,p]\cup[q,j]|=|[i,j]|/c$ for some $c>1$. We refer
to all $[p,q]$ so defined as holes.

We claim that the \texttt{eval} queries return the same values: every interval
$[a,b]$ evaluated this way will consist of a finite number of pieces. We reduced
the length of valued cake in every piece by a factor of $1/c$, so the length of
valued cake in $[a,b]$ will be reduced by the same factor. As pieces cover the
entire cake, the total length of valued cake is likewise reduced by a factor of
$1/c$. As the utility derived from a slice of cake with piecewise uniform
preferences is just the length of valued cake in the slice divided by the total
length of valued cake, \texttt{eval} must return the same value.

We claim that the \texttt{cut} queries return the same values: every slice
marked by a \texttt{query} consists of a finite number of pieces, and we have
already seen that the utility derived from the pieces is the same in both
situations.

Now suppose agent 1 is allocated $A_1$ by the mechanism. If $A_1$ has a hole in
it, then this allocation is wasteful. If $A_1$ does not have a hole in it, then
we can replace agent 1 with 2 in the above construction. Agent 2 is allocated
$[0,1]\backslash A_1$, which must then have a hole in it, thus creating waste.
\end{proof}

In other words, a Robertson-Webb protocol cannot find the breakpoints between
intervals an agent values and intervals an agent does not. No such problems
occur with a revelation protocol, as agents simply submit these breakpoints to
the mechanism.

On the other hand revelation protocols cannot, in general, be said to be
stronger than Robertson-Webb protocols. If agents' preferences have some finite
representation then indeed we can simulate a Robertson-Webb protocol with a
revelation protocol, but if we do not have this guarantee then we cannot run a
revelation protocol, while a Robertson-Webb protocol functions equally well.

\subsection{Behavioural Assumptions}\label{defbeh}

A mechanism is a game, and games offer players a choice of strategies to
maximise their utility. In the case of moving
knife protocols, how to stop or move the knife. For Robertson-Webb protocols,
whether to respond to the queries sincerely or otherwise. For revelation
protocols, to submit one's actual utility function or some other which may
result in a preferable outcome.

The allocations produced by a mechanism, therefore, must be understood in terms
of the behaviour the mechanism expects from the agents. A weakly truthful envy
free mechanism faced with fully strategic agents may no longer produce envy free
allocations.

We identify three classes of mechanism in the literature.

\begin{itemize}
\item
Na\"{i}ve mechanism: Agents are assumed to be sincere. When we say a na\"{i}ve
mechanism creates an allocation $A$, we mean that $A$ is the outcome if all the
agents follow exactly the rules specified by the mechanism.
\item
Weakly truthful mechanism: Weak truthfulness was the norm in classical cake
cutting mechanisms. The concept is aptly explained in \cite{Steinhaus48}:

\begin{quote}
``It is easy to prove that the methods explained
here secure to every partner at least a part equal in value to the $1/n$th
of the whole. The greed, the ignorance, and the envy of other partners
can not deprive him of the part due to him in his estimation; he has
only to keep to the methods described above. Even a conspiracy of all
other partners with the only aim to wrong him, even against their own
interests, could not damage him."
\end{quote}

A weakly truthful mechanism, therefore, is one that guarantees every agent a
strategy that will secure them either a proportional or an envy free portion,
regardless of the strategies chosen by other agents. Weak truthfulness has
little meaning outside of proportional or envy free mechanisms: equitability and
the efficiency criteria are essentially global, it makes no sense to say an
agent is guaranteed an equitable portion regardless of the portions of others.

The behaviour that is expected by weakly truthful mechanisms, therefore, is one
of extreme risk aversion. Agents will deviate from sincerity if they can do so
without risk, but failing that will stick to the guaranteed proportional/envy
free portion provided by the mechanism.
\item
Truthful mechanism: These are a recent development in cake cutting and offer
strategy-proofness in the conventional sense - sincerity is a weakly dominant
strategy, and an agent can never increase their expected utility by submitting
an
insincere strategy.
\end{itemize}

The first and last category is relatively sparse, both appearing in rather
recent papers. The better part of the mechanisms we will survey are weakly
truthful, so we will take it as given in Section \ref{lit} that if the
behavioural assumptions of a mechanism are not explicitly specified, the
mechanism is weakly truthful.

\chapter{Literature Review}\label{lit}

We present an overview of the historical developments in cake cutting. In many
ways the core area in cake cutting was the development of mechanisms to procure
envy free allocations and we look at the main results of this in Section
\ref{litenv}. Subsequent sections examine selected themes, mainly those
pertinent to efficiency or strategy. To motivate the subject, we look at a
prehistoric fair division protocol and a generalisation of it presented in
\cite{Steinhaus48}.

\section{Origins}\label{litor}

While the modern treatment of cake cutting can be traced from the middle
20$^{\textrm{th}}$ century, problems of fair division predate recorded history.
Humans are social beings, sensitive to issues of equity, and means of ensuring
it have been around as long as we have.
In particular, our first mechanism is certainly too ancient to be attributed
authorship.

\begin{mechanism}[Cut and Choose]\label{Cut and Choose}
Given two agents, agent $1$ cuts the cake into two slices, $X$ and $Y$ such
that $u_1(X)=u_1(Y)$. Agent $2$ gets assigned a slice of
their choice and agent $1$ gets assigned the remaining slice.
\end{mechanism}

\begin{proposition}
Cut and Choose produces an envy free and proportional allocation.
\end{proposition}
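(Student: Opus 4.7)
The plan is to verify envy freeness directly from the mechanism's two-step structure, and then obtain proportionality for free by appealing to Claim \ref{envpr}, since Cut and Choose clearly allocates the entire cake.

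First I would unpack the cut step. By normalisation $u_1([0,1])=1$ and additivity, the condition $u_1(X)=u_1(Y)$ together with $X\cup Y=[0,1]$ forces $u_1(X)=u_1(Y)=1/2$. (That such a cut exists is a standard consequence of non-atomicity and countable additivity of $u_1$, via an intermediate-value-style argument on $u_1([0,t])$; this is built into the measure-theoretic setup the paper adopts.) Then I would examine each agent's view of the resulting allocation $A=(A_1,A_2)$. For agent $2$, the selection rule gives $u_2(A_2)\geq u_2(A_1)$ by definition, so agent $2$ does not envy agent $1$. For agent $1$, whichever slice remains satisfies $u_1(A_1)=1/2=u_1(A_2)$, so agent $1$ is indifferent and again does not envy. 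Hence $A$ is envy free.

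To conclude proportionality, I would observe that $A_1\cup A_2=[0,1]$, so the hypothesis of Claim \ref{envpr} is met, and envy freeness immediately yields proportionality. Alternatively one could read proportionality off directly: agent $1$ gets exactly $1/2$, and agent $2$ must get at least $1/2$ since $u_2(A_1)+u_2(A_2)=1$ and $u_2(A_2)$ is the larger of the two.

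There is no real obstacle here; the only subtlety worth a sentence is the existence of the bisecting cut, which relies on the probability-measure structure of $u_1$ rather than on anything specific to the mechanism. Everything else is a direct unwinding of the definitions together with the already-proved implication that envy freeness plus full allocation yields proportionality.
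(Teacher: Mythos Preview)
Your proposal is correct and matches the paper's own proof essentially line for line: show each agent is non-envious directly from the mechanism's rules, then invoke Claim~\ref{envpr} for proportionality. You add a little extra---the value $1/2$, the existence of the bisecting cut, and an alternative direct proportionality check---but the core argument is identical.
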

\begin{proof}
For envy freeness, observe that agent 1 cannot be envious because
$u_1(X)=u_1(Y)$. Agent 2 cannot be envious because if $u_2(X)>u_2(Y)$ then they
would choose and be assigned $X$, if $u_2(Y)>u_2(X)$ they would choose and be
assigned $Y$.

For proportionality, we invoke claim \ref{envpr}.
\end{proof}

The idea behind Cut and Choose is simple. Agent 1 is not envious because they
effect the allocation in such a manner that they are indifferent between any
permutation of the portions, agent 2 is not envious because they determine which
permutation is allocated.

Letting one agent choose which possible world to be in is a intuitively
appealing way of eliminating envy for that particular agent. This approach has
been known as far back as Hesiod:

\begin{quote}
`Son of Iapetus, most glorious of all lords, good sir, how unfairly you have
divided the portions!'\\
So said Zeus whose wisdom is everlasting, rebuking him. But wily Prometheus
answered him, smiling softly and not forgetting his cunning trick:\\
`Zeus, most glorious and greatest of the eternal gods, take which ever of these
portions your heart within you bids.'\footnote{Theogony, ll. 543-558. Translated
by Hugh G. Evelyn-White.} \\
\end{quote}

As simple as the mechanism is, the strategic implications of Prometheus'
``cunning trick" illustrate well the distinction between the behaviour expected
by truthful and weakly truthful mechanisms.

Cut and Choose is a weakly truthful mechanism. Regardless of the behaviour of
agent 2, agent 1 can secure an envy free outcome for themselves by following the
rules of the mechanism: if $u_1(X)=u_1(Y)$, agent 1 is indifferent between the
possible allocations. Likewise, no matter how agent 1 cuts the cake, agent 2
gets to pick the piece they value most, hence they have no reason to envy the
other agent.

However, truthful behaviour is not a dominant strategy for agent 1. If agent 1
behaves sincerely they derive $1/2$ utility from the allocation, while if
$u_2(X)\neq u_2(Y)$ agent 2 will attain more than $1/2$. That is not to say
agent 1 has the short end of the stick, however. By anticipating the decision of
agent 2, agent 1 can cut the cake so that $u_2(X)=u_2(Y)+\epsilon$ and
$u_1(X)<u_1(Y)$, agent 2 will pick slice $X$ for slightly more than $1/2$
utility, whereas agent 1 gets slice $Y$ which they value more than the $1/2$
they would have received had they acted sincerely. \cite{Kolm02} dedicates a
section to the behaviour of an expected utility maximising agent under different
information assumptions in the Cut and Choose scenario.

On the other hand, if agent 1 is fully strategic then Cut and Choose may fail to
produce an envy free allocation. If agent 1 has imperfect knowledge of agent 2's
utility, then in trying to cut the cake so that $u_2(X)=u_2(Y)+\epsilon$, agent
1 may underestimate $u_2(Y)$ and agent 2 will pick $Y$, leaving agent 1 with $X$
and envy of agent 2's portion. We thus reiterate the remark at the end of
Section \ref{defbeh} that throughout the literature review if the behavioural
assumptions of a mechanism are not specified, it is taken to be weakly truthful.

During the 1940s Steinhaus, Banach and Knaster sought to extend the Cut and
Choose mechanism to an arbitrary number of agents. In \cite{Steinhaus48} they
present a proportional mechanism for $n$ agents:

\begin{mechanism}[Last Diminisher]\label{Last Diminisher}
Given $n$ agents, the first agent cuts a slice $X$ such that $u_1(X)=1/n$. If
there exists an agent $i$ such that $u_i(X)>1/n$, agent $i$ trims $X$ into $X'$
such that $u_i(X')=1/n$. The trimmings are returned to the cake. The process
continues until no such $i$ exists. The trimmed slice is allocated to the last
agent to trim it, and the procedure recurses on the remaining agents and the
remaining cake.
\end{mechanism}

\begin{proposition}
Last Diminisher produces a proportional allocation.
\end{proposition}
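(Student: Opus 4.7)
The plan is to partition the $n$ agents into two groups and handle each separately. In each round of Last Diminisher exactly one agent receives a slice and drops out; after $n-1$ such rounds only one agent remains, and they inherit whatever cake has not yet been distributed. I would show that each of the first $n-1$ agents values their slice at exactly $1/n$, while the final agent values the leftover cake at at least $1/n$. Adding these two facts yields proportionality without any need for induction.

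For the first group, the key observation is that the agent who wins the slice in a given round is precisely the one who most recently fixed its size. Either no one trimmed, in which case the original cutter (who set the slice to value $1/n$ in their own estimation) keeps it; or some trimming occurred, in which case the last trimmer set the slice to be worth exactly $1/n$ to them. Subsequent remaining agents had the opportunity to trim further and declined, so the slice retains this $1/n$ valuation for its recipient.

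For the final agent, the argument is purely additive. In each of the $n-1$ prior rounds this agent was still in the pool and chose not to trim the slice at the moment it was awarded, so they value it at most $1/n$ (otherwise they would have been eligible to trim). Summing the valuations of the $n-1$ awarded slices and invoking countable additivity of $u_i$, the total cake distributed before the last round is worth at most $(n-1)/n$ to them, leaving at least $1/n$ as their own portion.

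The main point requiring care is the inference from ``the slice at some intermediate stage was worth at most $1/n$ to me'' to ``the \emph{final} awarded slice is worth at most $1/n$ to me''. This is the only place the argument could slip, because the slice changes over the course of the round. The resolution is immediate: trims only remove cake, never add to it, so by non-negativity and additivity of $u_i$ the valuation of the final awarded slice is bounded above by the valuation at the moment the non-trimming agent declined to trim. With this monotonicity noted, the counting argument closes and proportionality follows.
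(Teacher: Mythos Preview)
Your argument is correct and differs from the paper's in organization. The paper proceeds by induction on the number of allocated agents, maintaining the invariant that after $i$ rounds every remaining agent values the leftover cake at least $(n-i)/n$; setting $i=n-1$ then handles the final agent. Your decomposition into ``first $n-1$ agents'' versus ``last agent'' is more direct for the stated conclusion: the summation $\sum_{k=1}^{n-1} u_{\text{last}}(\text{slice}_k) \leq (n-1)/n$ reaches the endpoint without the inductive scaffolding, and your monotonicity remark about trims only shrinking the slice is exactly the right patch for a single-pass reading of the protocol. What the paper's invariant buys that your write-up leaves implicit is the well-definedness of the mechanism itself --- that in round $k$ the designated cutter still has at least $1/n$ worth of cake available to carve out. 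This follows immediately from your own counting argument applied to the cutter rather than to the last agent, but it is worth stating, since otherwise ``each of the first $n-1$ agents values their slice at exactly $1/n$'' presupposes that such a slice can always be cut.
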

\begin{proof}
It is clear that if an agent is allocated a slice, they perceive that slice to
be at least $1/n$ of the cake. It remains to show that the mechanism can always
make such an allocation. That is, after $i$ agents have been allocated, the
remaining cake is perceived to be at least $(n-i)/n$ of the original by the
remaining agents. We proceed by induction.

Base case: without loss of generality, relabel the agents such that 1 be the
first agent to be allocated a slice. Let $X$ be the
slice allocated to 1. We claim that $u_j(X)\leq 1/n$ for all $j$. Assume
otherwise: that is, for some $j$, $u_j(X)>1/n$. Then abiding by the rules of the
protocol, $j$ would have trimmed $X$ to some smaller $X'$ such that
$u_j(X')=1/n$, and $j$ would have been allocated the first slice instead of 1.
As such, by the additivity of utility functions, $u_j([0,1]\backslash
X)\geq(n-1)/n$ for all remaining $j$.

Inductive case: relabel the agents such that $1,...,i$ are the first $i$ agents
to be allocated a slice.
By the inductive hypothesis, $u_j(\mathcal{R})\geq (n-i)/n$ for
$j\notin\{1,...,i\}$ where $\mathcal{R}$ is the remaining cake. Let $i+1$ be the
next agent to be allocated a slice. Call it $X$. Observe that $u_j(X)\leq 1/n$
for all $j\notin\{1,...,i\}$ by the same argument as before. So by additivity,
$u_j(\mathcal{R}\backslash X)\geq (n-i-1)/n$ for all remaining $j$.

We have thus established that for any $i$, the remaining $n-i$ agents view the
remaining cake as at least $(n-i)/n$ of the original cake. As such the
mechanism can always allocate an agent a slice they perceive to be at least
$1/n$ of the cake.
\end{proof}

Unfortunately, this mechanism fails to be envy free. While an agent can never
envy those who have been allocated before them, it is entirely possible for them
to envy some agent that gets allocated a slice later in the protocol.

\begin{example}
Consider three agents with piecewise uniform preferences. Agent 1 values the
entire cake, agent 2 values $[2/5,1]$, agent 3 values $[4/5,1]$. Agent 1
will be allocated $[0,1/3]$ first, then agent 2 $[5/15,9/15]$, and agent 3 the
remaining $[9/15,1]$.
\begin{center}
\begin{tabular} {l|lll}
&$A_1$&$A_2$&$A_3$\\
\hline
$u_1$&$1/3$&$4/15$&$2/5$\\
$u_2$&0&$1/3$&$2/3$\\
$u_3$&0&0&1\\
\end{tabular}
\end{center}
So agents 1 and 2 envy 3. The envy of 2 towards 3 could be eliminated by using
Cut and Choose once only two agents remain, but agent 1 would still envy 3.
\end{example}

As it turns out, the problem of finding envy free allocations is far more
difficult.

\section{Envy Free Protocols}\label{litenv}

To avoid dealing with uninteresting cases, for the duration of this section we
will only consider mechanisms that allocate the entire cake, as this prevents
the empty allocation from being a solution.

An envy free protocol for the three agent case was discovered by Selfridge,
first published in \cite{Woodall80}.

\begin{mechanism}[Selfridge's Algorithm]\label{Selfridge's Algorithm}
Agent $1$ cuts the cake into slices $X,Y,Z$ such that $u_1(X)=u_1(Y)=u_1(Z)$.
Without loss of generality, we can relabel the slices such that $u_2(X)\geq
u_2(Y)\geq u_2(Z)$. Agent
$2$
trims slice $X$ into $X'$ and $T$ such that $u_2(X')=u_2(Y)$. Agent $3$ picks
whichever of $X'$, $Y$ and $Z$ they prefer, agent $1$ picks one of the two
remaining and agent $2$ gets the last slice. It remains to divide $T$.

There are two cases in the division of $T$. If agent 1 chose slice $X'$ then
$T$ is divided between $2$ and $3$ using Cut and Choose.

Otherwise let whichever of $2$ and $3$ chose slice $X'$ be $x$ and the other
$y$.
Agent $y$
cuts $T$ into $U,V,W$ such that $u_y(U)=u_y(V)=u_y(W)$. Agent $x$ picks
whichever
slice they prefer, $1$ picks from the remaining two and $y$ is allocated the
last
slice.
\end{mechanism}

\begin{proposition}[\cite{Woodall80}]
Selfridge's Algorithm produces an envy free allocation.
\end{proposition}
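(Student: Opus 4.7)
The plan is to verify envy-freeness separately in the two phases of the protocol: the allocation of $X'$, $Y$, $Z$ among the three agents, and the subsequent division of the trimmings $T$. For the overall allocation to be envy-free it suffices that no agent envies another with respect to their full portion, so I would argue agent by agent, leaning on an ``irrevocable advantage'' that agent 1 enjoys over whoever ends up holding $X'$.

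For the first phase the argument is direct. Agent 3 picks first from $\{X', Y, Z\}$ and so is non-envious over those pieces. Agent 2 values $X'$ and $Y$ equally by construction and both at least as much as $Z$, so they do not envy in phase one as long as they receive $X'$ or $Y$; this can be guaranteed by resolving agent 1's indifference in their favour, since agent 1 values $Y$ and $Z$ equally at $1/3$ and $X'$ at no more than $1/3$, so agent 1 can always avoid $X'$ unless $u_1(T)=0$. For phase two I would split on who holds $X'$. If agent 1 holds it, the preceding remark forces $u_1(T)=0$, so any split of $T$ between agents 2 and 3 adds nothing to agent 1's valuation and Cut and Choose on $T$ handles envy between the other two. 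Otherwise let $x \in \{2,3\}$ be the holder of $X'$ and $y$ the other; agent $y$ cuts $T$ into three pieces equal in their own measure, $x$ picks first, agent 1 second, and $y$ takes the last. Agent $y$ is indifferent among the three $T$-pieces and already holds a favourite phase-one slice, so they envy nobody. Agent $x$ picks first in $T$ and, as shown, is not envious from phase one, so their total weakly dominates the others' in their own valuation.

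The hard part is showing that agent 1 does not envy agent $x$, who both holds $X'$ and picks first in $T$. The cushion is that $x$'s final portion $P$ lies inside $X' \cup T = X$, so $u_1(P) \le u_1(X) = 1/3$, while agent 1's own phase-one slice already contributes $1/3$ to their total and their $T$-piece only adds to that. The comparison with agent $y$ is easier: both hold phase-one slices worth $1/3$ to agent 1, and agent 1 selects their $T$-piece before $y$. I expect this irrevocable-advantage argument to be the main formal step; the rest is careful case-splitting according to which of $\{2,3\}$ plays the role of $x$.
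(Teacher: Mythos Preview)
The paper does not supply its own proof of this proposition; it simply states the result and attributes it to \cite{Woodall80}. Your sketch is the standard argument and is correct. Phase-one envy-freeness comes from the picking order together with the tie $u_2(X')=u_2(Y)$ that agent~2 creates, and the phase-two analysis hinges on the ``irrevocable advantage'' agent~1 holds over whoever ends up with $X'$: that agent's entire final portion lies inside $X'\cup T=X$, so agent~1 values it at most $u_1(X)=1/3$, which agent~1's own phase-one slice already matches. The paper itself invokes this irrevocable-advantage idea by name a couple of paragraphs later when motivating the Brams--Taylor four-agent procedure, so your emphasis on it is exactly right.

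One point worth making explicit. The paper's phase-one order is $3,1,2$, not the more common $3,2,1$ with the rule that agent~2 must take $X'$ if it is still on the table. You correctly note that this forces agent~1 to break indifferences so that agent~2 is left with $X'$ or $Y$; without that convention the mechanism as literally stated can leave agent~2 envious (e.g.\ agent~3 selects $X'$, agent~1 is indifferent and selects $Y$, agent~2 is stuck with $Z$, and the equal three-way split of $T$ by agent~2 in phase two does not repair this). So you are implicitly patching the mechanism description while proving it correct, which is fine, but it should be stated as an assumption rather than buried in a subordinate clause.
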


It pays to note that while we have defined slices as distinct from portions,
thus far the two notions have been used interchangeably. Selfridge's Algorithm
is the first we cover where an agents' portion consists of more than one slice.
As it turns out, this is not coincidental. While proportional mechanisms can and
do allocate contiguous intervals to agents, envy free Robertson-Webb protocols
need necessarily fragment the portions.

\begin{theorem}[\cite{Stromquist08}]
A Robertson-Webb protocol cannot produce an envy free allocation for more than
two agents if the agents' portions consist of a single slice each.
\end{theorem}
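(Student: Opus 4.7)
The plan is to argue by contradiction using an adversary strategy similar in spirit to the one deployed for Claim \ref{RW wasteful}. Suppose a Robertson-Webb protocol $\mathcal{P}$ always delivers a contiguous envy-free allocation. It suffices to treat $n=3$, since for $n>3$ the extra agents can be given utility concentrated on tiny disjoint intervals so that their single-slice portions are essentially forced, reducing the problem to the three-agent case. The adversary's objective is to answer $\mathcal{P}$'s queries consistently and then, once $\mathcal{P}$ halts with output $(A_1,A_2,A_3)$, exhibit a utility profile compatible with every answer under which the allocation is not envy-free.

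A run of $\mathcal{P}$ makes only finitely many queries, placing finitely many marks $0=m_0<m_1<\cdots<m_k=1$; call the intervals between consecutive marks pieces. The contiguous output is determined by two cut points $x_1<x_2$, and since the protocol's behaviour depends only on query responses these points must themselves be marks, since otherwise the adversary could shift density across an unmarked boundary and falsify the cut. Hence $A_2$ is a union of whole pieces. The crucial observation is that \texttt{eval} and \texttt{cut} queries only constrain the total mass each agent places on each piece; the shape of the density within a piece is free. Given a piece wholly contained in $A_2$, the adversary can redistribute agent 2's mass there so that in the final profile some of that mass effectively ``belongs'' closer to an edge piece, producing $u_2(A_1)>u_2(A_2)$ or $u_2(A_3)>u_2(A_2)$; an analogous move inside a piece of $A_1$ or $A_3$ creates envy in the opposite direction. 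Running through the possible assignments of pieces to agents, one always finds a perturbation that breaks envy-freeness.

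The principal difficulty is ensuring that after an arbitrary adaptive sequence of queries the adversary still has enough shape freedom to flip an envy inequality. This requires (i) verifying that the cuts $x_1,x_2$ really must lie at marks; (ii) exhibiting a piece in which mass can be shifted for the relevant agent without perturbing any prior \texttt{eval} total or \texttt{cut} threshold, while keeping the density non-negative and normalised; and (iii) showing that the magnitude of the permitted shift is strictly positive. The adversary therefore cannot commit to a full profile at the outset; instead it must maintain a family of consistent profiles and answer each query so as to preserve a pocket of unresolved density that can be exploited later. Establishing this invariant against every possible adaptive query strategy of $\mathcal{P}$ is the technical heart of the argument, and is what makes the three-agent case genuinely harder than the wastefulness obstruction of Claim \ref{RW wasteful}.
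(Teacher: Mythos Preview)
The paper does not supply a proof of this theorem; it is quoted as a result of \cite{Stromquist08}, and the surrounding text merely remarks that the obstruction is algorithmic rather than measure-theoretic. So there is nothing in the paper to compare your proposal against.

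That said, your sketch has a genuine gap at its centre. You first argue that the cut points $x_1,x_2$ must lie at marks, so each portion $A_j$ is a union of whole pieces. You then propose to redistribute agent $2$'s density \emph{within} a piece wholly contained in $A_2$ in order to create envy. But this cannot work: if the piece lies entirely inside $A_2$, reshaping the density on it leaves $u_2(A_2)$ unchanged (it is still the total mass agent $2$ assigns to that piece, plus the masses on the other pieces of $A_2$), and it leaves $u_2(A_1)$ and $u_2(A_3)$ unchanged as well, since those portions are disjoint from the piece in question. Shape freedom inside a piece only matters when a cut falls in the \emph{interior} of that piece---precisely the situation you have just ruled out. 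The two halves of your argument undercut each other, and the concluding paragraph essentially concedes that the hard part (``establishing this invariant'') has not been carried out.

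Stromquist's actual argument proceeds differently. He works with a family of ``rigid'' measure systems in which the contiguous envy-free division is essentially unique, so that the correct cut points are pinned down to zero tolerance. The adversary does not merely reshape density inside pieces; it perturbs the measures so as to move the unique envy-free cut away from whatever pair of points the protocol commits to. Your outline has the right adversarial flavour but is missing this structural ingredient---a class of instances in which envy-freeness forces exact cut locations---and without it the perturbation you describe does not produce envy.
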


The above theorem hinges on the nature of such mechanisms, not of the nature of
the cake. In fact, envy free allocations where portions consist of single slices
always exist (\cite{Stromquist80}). This is not the first time we will run into
mathematical existence and algorithmic impossibility: this should not be
surprising as measure theory lives among the Reals, while algorithmics with the
Integers. If we allow the mechanism to be non-algorithmic, there is no
impossibility. In the same paper, a continuous mechanism is
presented to effect such an allocation for three agents:

\begin{mechanism}[Four Knives]\label{Four Swords}
A sword is moved continuously left to right across the cake, dividing it into
left and right slices, $X$ and $Y$. Three agents move knives across $Y$ such
that each agents' knife splits $Y$ into what they consider two even slices,
$Y_1$ on the left and $Y_2$ on the right. Whenever $u_i(X)=1/3$, agent $i$ yells
``cut!". The cake is cut by the sword and the middle knife, splitting it into
$X$, $Y_1$ and $Y_2$. Agent $i$ receives $X$. If the agent whose knife is
nearest to the sword is not $i$, they take $Y_1$. If the agent whose knife is
farthest from the sword is not $i$, they take $Y_2$. If the agent whose knife
cut the cake is not $i$, they take whichever slice is left over.
\end{mechanism}
\begin{proposition}[\cite{Stromquist80}]
Four Knives produces an envy free allocation.
\end{proposition}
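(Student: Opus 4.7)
The plan is to verify envy-freeness by direct case analysis on the caller $i$ and the two non-callers $j \ne i$, using the equal-split property of each knife together with the key claim that the caller's knife coincides with the middle knife at the moment of the call.

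For a non-caller $j$: since $i$ called first, $u_j(X) < 1/3$, so $u_j(Y) > 2/3$, and by the equal-split property $u_j([s, m_j]) = u_j([m_j, 1]) = u_j(Y)/2 > 1/3$, where $s$ is the sword position and $m_j$ is $j$'s knife position. If $m_j < m_{\mathrm{mid}}$ (nearest knife among the non-callers), then $[s, m_j] \subseteq Y_1$ gives $u_j(Y_1) \geq u_j(Y)/2 \geq u_j(Y_2)$, and the protocol hands $j$ precisely $Y_1$; the case $m_j > m_{\mathrm{mid}}$ is symmetric with $Y_2$; and if $m_j = m_{\mathrm{mid}}$ the equal-split gives $u_j(Y_1) = u_j(Y_2)$ and $j$ receives one of them as the leftover. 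In every case $j$'s share is worth at least $u_j(Y)/2 > 1/3 > u_j(X)$, so $j$ envies neither the caller nor the third agent.

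For the caller $i$: we have $u_i(X) = 1/3$, and envy would require either $u_i(Y_1) > 1/3$ or $u_i(Y_2) > 1/3$. The task therefore reduces to showing that $i$'s own knife is the middle knife at the call, for then the equal-split property applied to $i$ forces $u_i(Y_1) = u_i([s, m_i]) = 1/3$ and $u_i(Y_2) = u_i([m_i, 1]) = 1/3$, leaving $i$ indifferent between all three pieces.

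This alignment claim is the main obstacle. It requires a continuity/intermediate-value argument on the joint evolution of the sword, the three knife positions, and the functions $s \mapsto u_i([0,s])$: essentially, one shows that the first sword position at which some agent reaches $u_i(X) = 1/3$ must also be one at which that same agent's knife lies in the median position among the three, since otherwise a continuity-based perturbation would produce an earlier crossing by some other agent. The technical execution of this argument is the content of \cite{Stromquist80}; once it is in hand, the preceding paragraph closes out the proof.
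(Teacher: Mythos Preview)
There is a genuine gap. Your argument for the caller rests on the claim that, at the first sword position where some agent $i$ reaches $u_i(X)=1/3$, agent $i$'s knife is the median of the three. This is false. Take piecewise uniform preferences $P_1=[0,1/2]$, $P_2=[1/2,1]$, $P_3=[0,1]$. Agent~1 is first to satisfy $u_1(X)=1/3$, at $s=1/6$; the bisecting knives then lie at $m_1=1/3$, $m_3=7/12$, $m_2=3/4$, so agent~1's knife is the \emph{nearest}, not the median. Cutting at the median $7/12$ yields $u_1(Y_1)=u_1([1/6,7/12])=2/3>1/3=u_1(X)$, and agent~1 envies the recipient of $Y_1$. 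No intermediate-value argument can rescue the alignment claim, since its conclusion fails outright in this instance; the heuristic that ``otherwise some other agent would have crossed $1/3$ earlier'' simply does not hold here.

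The root of the difficulty is that the calling rule ``shout when $u_i(X)=1/3$'' in Mechanism~\ref{Four Swords} is a loose paraphrase of Stromquist's protocol. In \cite{Stromquist80} an agent shouts when $u_i(X)\geq\max\bigl(u_i(Y_1),u_i(Y_2)\bigr)$ under the current median cut; with that rule the caller's envy-freeness is immediate from the calling condition itself, no knife-alignment lemma is needed, and continuity is invoked only to guarantee that some agent eventually shouts. The paper gives no proof of its own here, deferring entirely to \cite{Stromquist80}. Your non-caller analysis is essentially sound, but the caller case must be handled through Stromquist's actual calling rule rather than a median-alignment property that does not hold.
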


Neither of these two mechanisms generalise to larger numbers of agents as Last
Diminisher did. Part of the difficulty lies in the fact that the proportionality
of the portions allocated thus far will not be affected by whatever allocations
the mechanism may make in the future. Once an agent is allocated a portion they
perceive to be worth at least $1/n$, whatever portions the other agents receive
will not alter the fact that the agent's portion is a proportional one. This is
not the case with envy free procedures; envy can rear its head at any stage of
the allocation.

One approach to this difficulty draws on a moving knife procedure of
\cite{Austin82} that allows two agents to find an allocation where both agents
consider either piece to be worth half the cake - what is called a
\emph{perfect} allocation,
a concept to which we will return in Section \ref{littruth}.

\begin{mechanism}[Austin's Scheme]\label{Austin's Scheme}
A knife is moved from the left across the cake, separating it into $X$ and $Y$.
When $u_i(X)=1/2$, agent $i$ yells ``stop". Agent $i$ takes the knife, adds a
new knife to the left edge of the cake and moves the two knives across in such a
manner such that the region between the knives is always $1/2$ of the cake in
$i$'s estimation. When the region outside the knives is worth $1/2$ in the
second agent's estimation, that agent yells ``stop". $i$ gets the slice between
the knives and the other agent gets the rest of the cake.
\end{mechanism}

\begin{proposition}[\cite{Austin82}]
Austin's Scheme produces a perfect allocation.
\end{proposition}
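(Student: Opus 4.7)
The plan is to verify that both agents value both pieces at $1/2$. Agent $i$'s side is immediate from the construction: throughout the second phase agent $i$ moves the two knives so as to keep the region between them at exactly $1/2$ in her own estimation, and by normalisation the complementary outer region is then also worth $1/2$ to $i$. So regardless of when agent $j$ stops the motion, the allocation is perfect from $i$'s perspective.

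The work is to show that some instant during the sweep gives $j$-value exactly $1/2$ to the outer region. I would let $p$ be the point where the first knife stops, so $u_i([0,p])=1/2$, and parameterise the second phase by the position $s\in[0,p]$ of the left knife. Non-atomicity of $u_i$ gives, for each $s$, a position $r(s)$ of the right knife with $u_i([s,r(s)])=1/2$, varying continuously in $s$, and satisfying $r(0)=p$ and $r(p)=1$ (the latter because $u_i([p,1])=1/2$). Define $f(s)=u_j([0,s]\cup[r(s),1])$, the $j$-value of the outer region. Since $u_j$ is a non-atomic measure, $f$ is continuous, with $f(0)=u_j([p,1])$ and $f(p)=u_j([0,p])$. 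Because $u_j([0,p])+u_j([p,1])=1$, one of these endpoint values is $\le 1/2$ and the other $\ge 1/2$, so by the intermediate value theorem there is $s^{\ast}\in[0,p]$ with $f(s^{\ast})=1/2$. Agent $j$ yells ``stop'' at that instant, and both the inside and the outside are then worth $1/2$ to both agents, yielding a perfect allocation.

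The main obstacle is merely checking that the map $s\mapsto r(s)$ is well-defined and continuous throughout $[0,p]$; this follows directly from $u_i$ being a non-atomic probability measure, with the identity $u_i([0,p])=u_i([p,1])=1/2$ ensuring that the sweep genuinely connects the two extreme configurations. Everything else is a standard intermediate-value argument, analogous to the one that justified Cut and Choose's envy-freeness for the second agent.
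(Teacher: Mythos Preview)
The paper does not supply its own proof of this proposition; it simply records the statement and attributes it to \cite{Austin82}. Your argument is correct and is exactly Austin's intermediate-value argument: agent $i$'s constraint keeps both pieces at $1/2$ in $i$'s measure throughout the sweep, while the two extreme configurations swap $[0,p]$ and $[p,1]$ from $j$'s perspective, forcing $u_j$ of the outer region through $1/2$ by continuity.

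One small technical remark: the map $s\mapsto r(s)$ need not be single-valued or continuous if $u_i$ vanishes on a sub-interval of $[p,1]$ (then several right-knife positions give the same $u_i$-value), so ``follows directly from $u_i$ being non-atomic'' is slightly too quick. This is easily patched without changing your idea: since in the mechanism both knives move continuously, parameterise by time $t$ (or equivalently by the right knife's position) rather than by $s$, and apply the intermediate value theorem to $t\mapsto u_j\bigl([0,s(t)]\cup[r(t),1]\bigr)$. The endpoint values are still $u_j([p,1])$ and $u_j([0,p])$, and the conclusion is unchanged. Your closing analogy to Cut and Choose is a little off---agent~2's envy-freeness there comes from choosing the larger piece, not from an IVT---but that is cosmetic.
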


By iterating Austin's Scheme one can cut the cake into $2^m$ slices such that
agents 1 and 2 think all slices are worth the same. This idea is used by
\cite{Brams97} to create an envy free moving knife mechanism for four agents.

\begin{mechanism}[Four Agent Moving Knife]\label{4 Agent Moving Knife}
Agents $1$ and $2$ use Austin's Scheme to cut the cake into $U$ and $V$, then
use Austin's Scheme on $U$ and $V$ to end up with four slices, $X,Y,Z,W$,
such that $1$ and $2$ consider each of the slices to be $1/4$ of the entire
cake. Agent $3$ trims the most valuable slice in their estimation, without loss
of generality $X$,
into $X'$ such that there exists a tie between $X'$ and the second most valuable
slice.

Agent $4$ picks the slice they value most. If agent $4$ did not pick $X'$, agent
$3$ is allocated $X'$. Otherwise, agent $3$ picks the slice they value most.
Agents $1$ and $2$ pick the remaining slices in any order. It remains to divide
the trimmings.

Rename agents 3 and 4 into $x$ and $y$ where $x$ is the agent that picked $X'$.
Agent
$y$ and 2 use Austin's Scheme on the trimmings to divide it into four slices
they consider to be all worth the same, $T_1,$ $T_2,$ $T_3$ and $T_4$. Agent $x$
picks
a slice of their choice, then 1, then $y$, then 2.
\end{mechanism}

\begin{proposition}[\cite{Brams97}]
Four Agent Moving Knife produces an envy free allocation.
\end{proposition}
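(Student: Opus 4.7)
The plan is to verify envy-freeness pair by pair, leveraging three structural invariants the mechanism guarantees: after the two iterated applications of Austin's Scheme, agents 1 and 2 both value each of $X, Y, Z, W$ at exactly $1/4$; after the final Austin's Scheme on the trimmings, agents $y$ and 2 both value each $T_i$ at exactly $u_i(T)/4$; and agent 3's initial trim ensures $u_3(X') \ge u_3(S)$ for every $S \in \{Y,Z,W\}$, with equality for 3's second-most-valued slice.

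First I would handle agents 1 and 2. The rules force $x$ and $y$ to take $X'$ together with one slice from $\{Y,Z,W\}$ between them, so 1 and 2 always receive unwhittled slices worth exactly $1/4$. To show 1 does not envy $x$, I use the additive identity $u_1(X) = u_1(X') + u_1(T) = 1/4$, which bounds $x$'s combined portion by $1/4$; to show 1 does not envy $y$ or 2, I use 1's priority over both in the trimmings round. Agent 2 is easier because 2's valuation of the $T_i$ is uniform, so the analysis reduces to the main-round picks. Next I would handle agents 3 and 4 by case analysis on who is $x$. When $x = 3$, the trim makes $X'$ maximal among $\{X',Y,Z,W\}$ in 3's view and 3 picks first in the trimmings, so 3's combined portion dominates; meanwhile agent 4, as $y$, picked its favourite main slice and has a uniform valuation of the $T_i$, so 4 has no envy. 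When $x = 4$, agent 4 chose $X'$ as its favourite and picks first in the trimmings, so 4's non-envy is immediate; for agent 3, the tie $u_3(X') = u_3(\text{3's main pick})$ from the trim, combined with 3's uniform valuation of the $T_i$ (as $y$), yields the required inequalities against every other agent's combined portion.

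The main obstacle will be the sub-case $x = 4$, where 4 enjoys priority in both the main and trimmings rounds, so envy-freeness for 3 rests on two facts working in tandem: that $y = 3$ cut the trimmings with 2, forcing 3's indifference among the $T_i$, and that the trim of $X$ was calibrated to tie $X'$ with 3's second-most-valued main slice. Checking that 1 does not envy $x$ in either sub-case, while routine, is the other place where the additive identity $u_1(X) = u_1(X') + u_1(T)$ must be invoked explicitly; without it the fact that $x$ recovers some of $T$ on top of $X'$ would leave a gap.
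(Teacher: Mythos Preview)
The paper does not actually supply a proof of this proposition; it merely states the result and attributes it to \cite{Brams97}. There is therefore nothing in the paper to compare your argument against line by line.

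That said, your proposal is a correct and complete direct verification, organised exactly as the standard argument runs: first exploit that agents $1$ and $2$ always receive untrimmed main slices worth $1/4$, then use the irrevocable-advantage identity $u_1(X') + u_1(T) = u_1(X) = 1/4$ to cap $x$'s bundle from $1$'s (and $2$'s) viewpoint, and finally split the analysis of agents $3$ and $4$ on whether $x=3$ or $x=4$. The two delicate points you flag---that in the $x=4$ sub-case agent $3$'s non-envy rests jointly on the trim tie $u_3(X') = u_3(\text{3's main pick})$ and on $3$'s indifference over the $T_i$ as co-cutter, and that $1$'s non-envy of $x$ genuinely needs the additive identity---are precisely the places where a careless argument would leave a gap, and you have handled both correctly. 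Nothing is missing.
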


What allows the mechanism to divide the trimmings without generating envy is
that agents 1 and 2 have an ``irrevocable advantage" over the player that chose
$X'$. Even if that player were to be allocated the entirety of the trimming, 1
and 2 would not envy that player because that would only bring their portion
back
up to $X$, which 1 and 2 value as much as their own.

\cite{Brams95} capitalise on the idea of irrevocable advantage to create an envy
free mechanism for any number of agents. Unfortunately the details of the
mechanism are too complex to give here. The general procedure involves having
one agent cut the cake into $n$ slices they consider equal, and a preliminary
allocation of these slices made. Whenever this creates envy, a subroutine is run
between the envied and the envier until the envier has an irrevocable advantage
over the envied.

This mechanism is guaranteed to produce an envy free allocation in a finite
number of steps, but this number is unbounded: for any $c$ there exists a cake
cutting situation in which the mechanism will run for more than $c$ steps. No
bounded Robertson-Webb protocol for four or more agents is known.

\subsection{Summary}

The mechanisms presented in this section do not represent the entirety of the
envy free cake cutting literature, but they do cover all cases for which a
solution is known.  

\begin{center}
\begin{tabular}{l | l | l}
& Robertson-Webb & Moving knife\\
\hline
2 agents&Cut and Choose & \cite{Austin82}\\
3 agents & Selfridge, presented in \cite{Woodall80} & \cite{Stromquist80}\\
4 agents & \cite{Brams95} (unbounded) & \cite{Brams97}\\
5 or more & \cite{Brams95} (unbounded) & None known\\
\end{tabular}
\end{center}

Note that we did not cover any revelation protocols: these require that the
agents have finitely representable utility functions and the core areas of cake
cutting do not allow that assumption.

To date no bounded protocol, Robertson-Webb or moving knife, is known for five
or more agents. While we did not explicitly state so, the reader can easily
verify that all mechanisms presents before that of \cite{Brams95} do terminate
in a bounded number of ``steps": queries in the case of Robertson-Webb
protocols, cuts in the case of moving knife mechanisms. However the fact that
this fails in \cite{Brams95} suggests that the query and cut complexity of
mechanisms may be interesting in its own right. We examine this in the next
section.

\section{Query Complexity}\label{litcomp}

The standard approach to measuring the complexity of procedures in Computer
Science is to bound the growth of the running time with respect to the input. To
do so in the context of cake cutting, we need a procedure that can run on an
arbitrary number of agents. We have already seen such a procedure in Mechanism
\ref{Last Diminisher}: Last Diminisher. A natural starting point is to inquire
as to the complexity of this mechanism.

\begin{claim}
The query complexity of Last Diminisher is $O(n^2)$.
\end{claim}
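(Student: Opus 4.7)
The plan is to count queries round by round and sum. Call a \emph{round} the process by which the mechanism allocates one slice to one agent, after which it recurses on the remaining cake and the remaining agents. It suffices to show that a round involving $k$ agents costs $O(k)$ queries, since then the total cost is $\sum_{k=2}^{n} O(k)=O(n^2)$.

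First I would describe a round in Robertson--Webb terms. The agent whose turn it is to propose starts by issuing one \texttt{cut} query, say \texttt{cut(a,1/n)}, where $a$ is the left boundary of the remaining cake; this produces a candidate slice $X$. Every other agent in that round then issues one \texttt{eval} query to determine $u_i(X)$, and issues a \texttt{cut} query if and only if $u_i(X)>1/n$, in order to trim $X$ down to $1/n$ in their estimation. So a round with $k$ agents uses at most $1+2(k-1)=2k-1$ queries.

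Next I would bound the number of rounds. After each round exactly one agent is allocated a slice and departs, so there are at most $n-1$ rounds (the final remaining agent is handed the rest without further queries; even if one counted a round for them, this only changes the constant). Summing the per-round bound gives
\begin{align}
\sum_{k=2}^{n}(2k-1)\leq\sum_{k=1}^{n}(2k-1)=n^{2},
\end{align}
which is $O(n^2)$.

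The argument is essentially elementary arithmetic once the per-round query count is pinned down, so there is no real obstacle; the only subtlety is to be explicit that each non-proposing agent needs both a comparison query (\texttt{eval}) and a potential trimming query (\texttt{cut}), rather than a single oracle call, so that the constant $2$ in the per-round count is justified within the Robertson--Webb model defined earlier.
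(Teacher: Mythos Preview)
Your proof is correct and follows essentially the same approach as the paper: both count at most a constant number of queries per remaining agent in each round and sum over the $n$ rounds. Your per-round count of $2k-1$ is slightly sharper than the paper's blanket $2n$, but the structure of the argument is identical.
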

\begin{proof}
The reader may find the pseudocode formulation in Appendix \ref{RW} helpful.

There is a nested loop at play here: we have one agent cut a slice of cake, then
all the remaining agents evaluate and possibly trim the slice. For every agent
allocated we thus have at worst $2n$ queries, and as we allocate $n$ agents the
upper bound is $O(n^2)$.
\end{proof}

\cite{Even84} improve on this bound. They present a proportional mechanism
that takes $O(n\log n)$ queries.

\begin{mechanism}\label{Even nlogn}
Have every agent mark the midpoint, rounding down in the case of an odd number of agents, of the cake in their own valuation. That is,
agent $i$ marks $m_i$ such that $u_i([0,m_i])/u_i([m_i,1])=\lfloor n+1\rfloor / \lceil n+1\rceil$. Observe that such
an $m_i$ is just a real number, so we can define $\prec$ as follows: if
$m_i<m_j$ set $m_i\prec m_j$, if $m_i=m_j$ break the tie arbitrarily. Let $m_j$
be the $\lfloor n/2\rfloor$th mark in this order. Recurse on two subroutines:
one with agents $i$ for $m_i\preceq m_j$ and cake $[0,m_j]$ and the other on
agents $i$ for $m_j\prec m_i$ and cake $[m_j,1]$.

If only one agent is left in a subroutine, allocate them all the cake in the
subroutine.
\end{mechanism}

\begin{proposition}[\cite{Even84}]
Mechanism \ref{Even nlogn} takes $O(n\log n)$ queries and produces a
proportional allocation.
\end{proposition}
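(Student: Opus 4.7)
The plan is to prove the two claims—query complexity and proportionality—separately, both by induction on the recursion.

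For the query bound, I would first observe that within a single recursive call with $k$ agents on a sub-cake $[a,b]$, each of the $k$ agents is asked a single \texttt{cut} query (to find their mark $m_i$); the subsequent selection of the $\lfloor k/2\rfloor$-th mark and the partition of the agents into two groups are purely combinatorial steps that cost no oracle queries. Hence a call with $k$ agents contributes $O(k)$ queries. Let $T(k)$ be the number of queries made by a call with $k$ agents. Since the agents are partitioned into groups of size $\lfloor k/2\rfloor$ and $\lceil k/2\rceil$ which then recurse independently, we have
\begin{align}
T(k)\leq T(\lfloor k/2\rfloor)+T(\lceil k/2\rceil)+O(k),
\end{align}
and the master-theorem-style unrolling gives $T(n)=O(n\log n)$.

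For proportionality, I would prove by induction on the recursion depth the following invariant: whenever a recursive call is made on a sub-cake $C$ with a set $S$ of $k$ agents, every agent $i\in S$ satisfies $u_i(C)\geq k/n$. The base case is the initial call, where $|S|=n$, $C=[0,1]$, and $u_i([0,1])=1=n/n$ by normalisation. For the inductive step, suppose the invariant holds on a call with cake $C$ and $k$ agents; each agent $i$ marks $m_i$ splitting $C$ in the fixed ratio so that $u_i([a,m_i])=(\lfloor k/2\rfloor/k)\cdot u_i(C)$. For any agent $i$ sent into the left subroutine we have $m_i\preceq m_j$, hence $m_i\le m_j$ and consequently
\begin{align}
u_i([a,m_j])\geq u_i([a,m_i])=\frac{\lfloor k/2\rfloor}{k}\,u_i(C)\geq\frac{\lfloor k/2\rfloor}{k}\cdot\frac{k}{n}=\frac{\lfloor k/2\rfloor}{n},
\end{align}
which is exactly $k'/n$ for $k'=\lfloor k/2\rfloor$, the size of the left subroutine. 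The right subroutine is handled symmetrically using $m_i\succ m_j$ and additivity. When the recursion bottoms out at $k=1$, the single remaining agent receives the whole of their sub-cake, which by the invariant is worth at least $1/n$, establishing proportionality.

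The least routine step is making the floor/ceiling arithmetic line up cleanly: the cut ratio used by an agent depends on the size $k$ of the current subroutine (not the original $n$), and one must check that both $\lfloor k/2\rfloor/k$ on the left and the corresponding $\lceil k/2\rceil/k$ on the right combine with the inductive bound $k/n$ to yield the sharper bound required by the next level. Everything else—including the fact that the median selection genuinely produces groups of sizes $\lfloor k/2\rfloor$ and $\lceil k/2\rceil$ after arbitrary tie-breaking—is a straightforward book-keeping exercise.
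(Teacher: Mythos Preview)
The paper does not actually supply a proof of this proposition; it is stated with a citation to \cite{Even84} and left at that. Your argument is the standard one and is correct: the divide-and-conquer recurrence $T(k)\le T(\lfloor k/2\rfloor)+T(\lceil k/2\rceil)+O(k)$ for the query count, together with the invariant $u_i(C)\ge k/n$ carried through the recursion, is exactly how Even and Paz establish the result. One small inaccuracy: in the Robertson-Webb formulation given in Appendix~\ref{RW}, each agent is issued an \texttt{eval} query (to learn $u_i(C)$ on the current sub-cake) before the \texttt{cut}, so it is two queries per agent per level rather than one; this of course does not affect the $O(n\log n)$ bound. Your closing remark about the floor/ceiling arithmetic is well placed---that is indeed the only point requiring care, and your handling of it (cut at fraction $\lfloor k/2\rfloor/k$ so that both halves inherit the sharpened invariant) is the right way to resolve it; note that the paper's displayed ratio $\lfloor n+1\rfloor/\lceil n+1\rceil$ is evidently a typo for $\lfloor n/2\rfloor/\lceil n/2\rceil$, which is the reading you have implicitly adopted.
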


As it turns out this is the best we can do. While in the same paper
\cite{Even84} present a randomised protocol that takes $O(n)$ cuts on average,
as far as worst case complexity goes a lower bound was proven by
\cite{Edmonds06}.

\begin{proposition}[\cite{Edmonds06}]
The lower bound on the query complexity of proportional Robertson-Webb
mechanisms is $\Omega(n\log n)$.
\end{proposition}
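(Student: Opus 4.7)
The plan is to establish this bound via an adversary argument combined with an information-theoretic counting bound. The intuition: to guarantee every agent $1/n$ of their value, the protocol must, for each agent, localize some region the agent values highly; each such localization carries $\Omega(\log n)$ bits of information, and there are $n$ agents, yielding $\Omega(n\log n)$ total.

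First, I would construct a family of hard instances. Partition $[0,1]$ into $n$ equal blocks $J_1,\ldots,J_n$ of width $1/n$. For each permutation $\sigma$ of $[n]$, define a profile $U_\sigma$ in which agent $i$ places $1-\varepsilon$ of their mass uniformly on a very thin ``rich'' sub-interval $R_i$ sitting inside $J_{\sigma(i)}$, and spreads $\varepsilon$ of their mass uniformly across the rest of $[0,1]$. The hidden parameter is the matching $\sigma$, which the adversary is free to fix only lazily. A short calculation shows that for small enough $\varepsilon$, any proportional allocation under $U_\sigma$ must hand agent $i$ a portion whose intersection with $R_i$ has nearly full measure $|R_i|$; otherwise $u_i$ of the allocated portion cannot reach $1/n$. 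Hence different permutations force essentially different proportional allocations, so the protocol's transcript must distinguish $\sigma$ in order to output a correct answer.

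Next, I would run an adversary that maintains the set $S$ of permutations still consistent with the responses so far, initially $|S|=n!$. Queries are answered as if the densities were piecewise-constant on the block structure, with the precise location of each $R_i$ left unspecified within its block. The key technical lemma to prove is that each \texttt{eval} or \texttt{cut} query eliminates at most a constant fraction of $S$. For \texttt{eval(a,b)} this is straightforward: the answer depends only on how many blocks the queried interval covers and the incidental $\varepsilon$-mass, which is insensitive to $\sigma$ up to a constant. For \texttt{cut(a,x)} it is subtler, since the returned breakpoint $b$ can in principle encode finer information, but a careful choice of the rich-interval widths and of the $\varepsilon$-background ensures that the response $b$ is determined up to a range containing many consistent positions, so at most a constant factor of $S$ is killed per query.

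Finally, combining the two ingredients: after $q$ queries we have $|S| \ge n!/C^{q}$ for some absolute constant $C$. By the first step, termination with a correct output forces $|S|$ to collapse (essentially to a single $\sigma$), giving $q = \Omega(\log n!) = \Omega(n\log n)$. The main obstacle is the information-per-query lemma for \texttt{cut} queries: naively a \texttt{cut} query looks like a binary search that could reveal $\log n$ bits at once, and the heart of the Edmonds--Pruhs argument is a careful potential/encoding argument showing that against the right adversarial distribution this apparent power is illusory. Everything else is either adversary bookkeeping or a routine counting bound.
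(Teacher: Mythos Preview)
The paper does not actually prove this proposition; it merely states it with a citation to Edmonds and Pruhs and moves on. There is therefore no in-paper argument to compare your proposal against.

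On the merits of your sketch itself: the overall architecture (an adversary maintaining a set of $n!$ consistent matchings, plus an information-theoretic halving bound) is the right shape, and you rightly identify the \texttt{cut} query case as the crux. The gap is that, with the instance family you describe, the \texttt{cut}-query lemma you need is likely \emph{false}. If agent $i$ carries $1-\varepsilon$ of its mass in a thin spike $R_i\subset J_{\sigma(i)}$, then a single query $\texttt{cut}_i(0,\tfrac12)$ forces the adversary to return a point $b$ lying essentially at $R_i$; that response pins down $\sigma(i)$ outright and leaks $\Theta(\log n)$ bits in one step. Your fix is stated only as ``a careful choice of the rich-interval widths and of the $\varepsilon$-background ensures that the response $b$ is determined up to a range containing many consistent positions'', but no such choice is exhibited, and with a concentrated spike it is hard to see how one could exist: any \texttt{cut} at a moderate target value lands on the spike. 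The actual Edmonds--Pruhs proof uses a quite different adversarial family (valuations built adaptively as the protocol runs, not a fixed permutation family) together with a tailored potential argument precisely to neutralise the apparent binary-search power of \texttt{cut}. Your sketch names that obstacle but does not overcome it, so as written the key lemma is asserted rather than proved, and the specific construction you propose does not support it.
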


Given Mechanism \ref{Even nlogn}, this bound is clearly tight.

A lower bound for envy free mechanisms was given in \cite{Procaccia09}. However
as thus far no bounded, $n$-agent envy free procedures are known, the actual bound may
well be higher.

\begin{proposition}[\cite{Procaccia09}]
The lower bound on the query complexity of any envy free Robertson-Webb
mechanism
is $\Omega(n^2)$.
\end{proposition}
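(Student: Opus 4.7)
The plan is to establish the lower bound via a standard adversary argument tailored to envy freeness. I would construct a family of hard input profiles, design an adversary that answers the mechanism's queries in a way that keeps a large sub-family of consistent profiles alive after every query, and finally show that if the mechanism halts after $o(n^2)$ queries then the surviving sub-family still contains an instance for which the output allocation is not envy free.

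First I would describe the hard family. Start from a baseline in which every agent values the entire cake uniformly. Perturb this by attaching to each ordered pair $(i,j)$ with $i\neq j$ a short disjoint ``bump'' interval $I_{ij}\subseteq[0,1]$ on which only agent $i$'s density is slightly elevated, together with a compensating ``dip'' placed elsewhere so that $u_i([0,1])=1$ is preserved. The locations of the $n(n-1)$ bump/dip pairs are the adversary's private state: each pair is confined to a pre-declared reserve region, and can be slid freely inside that reserve as long as every past \texttt{eval} and \texttt{cut} answer is respected exactly. If the bumps are short and the reserves are sufficiently fine and well separated, any single query issued to agent $i$ constrains the placement of only a constant number of her own bumps --- namely those whose reserve is straddled by the query endpoint --- and constrains no bumps of any other agent.

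Next I would run the standard bookkeeping. After $q$ queries, at most $cq$ of the $n(n-1)$ bumps have been pinned down for some constant $c$. Once the mechanism halts with output $A=(A_1,\dots,A_n)$, I pick any pair $(i,j)$ whose bump $I_{ij}$ remains free, slide $I_{ij}$ entirely inside $A_j$, and relocate the compensating dip of agent $i$ inside $A_i$. This update is consistent with every past answer, yet it strictly raises $u_i(A_j)$ above $u_i(A_i)$, breaking envy freeness. Such a free pair exists whenever $cq<n(n-1)$, which yields $q=\Omega(n^2)$.

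The main obstacle is the tension between two conflicting demands on the bump size. The bumps must be small enough --- and the reserves fine enough --- that the adversary really can defer placement without contradicting any of the exact \texttt{eval} and \texttt{cut} values she has already returned, so that every admissible finalisation reproduces those numbers on the nose. At the same time the bumps must be large enough that relocating a single pair genuinely flips the envy inequality $u_i(A_j)\geq u_i(A_i)$, even though $A_i$ and $A_j$ already carry roughly $1/n$ of agent $i$'s baseline mass. Calibrating these two scales, and in particular verifying the $O(1)$ commitment-per-query bound for \texttt{cut} queries --- which reveal mass to the left of the returned cut point and could in principle straddle many reserves at once --- is the delicate technical core that the proof must get right.
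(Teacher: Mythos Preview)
The paper does not prove this proposition. It appears in the literature-review chapter as a cited result from \cite{Procaccia09}, stated without argument; there is therefore no proof in the present paper to compare your sketch against.

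Regarding the sketch itself: the adversary framework is the right genre, but the final step has a genuine gap. You say that once the mechanism outputs $A$ you will ``slide $I_{ij}$ entirely inside $A_j$'' and ``relocate the compensating dip of agent $i$ inside $A_i$''. But the mechanism, not the adversary, chooses $A$, and nothing in your construction forces the reserve attached to a still-free pair $(i,j)$ to meet $A_j$, nor the dip's reserve to meet $A_i$. With $n(n-1)$ disjoint reserves packed into $[0,1]$, each has length $O(1/n^2)$; a portion $A_j$ of length roughly $1/n$ will miss the overwhelming majority of them, and you need it to hit the reserve of a pair whose \emph{second} coordinate is specifically $j$ while the companion dip-reserve simultaneously lands in $A_i$. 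Counting free pairs tells you the adversary retains informational freedom, but it does not by itself guarantee that this freedom is geometrically aligned with the allocation the mechanism happens to output. Procaccia's argument avoids this by coupling the perturbation structure to constraints that \emph{any} envy-free allocation must satisfy, rather than to fixed reserve locations chosen in advance; your sketch would need an analogous coupling, or a pigeonhole argument over both coordinates of the free pairs against the $n$ portions, to close the gap.
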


\section{Cutting Pies}\label{litgeo}

The distinction between cakes and pies, in the eyes of a mathematician, is that
cakes are square and pies are round. A pie is then identified with $[0,1]$ where
0
and 1 are topologically identical. That is to say, $[0.9,0.1]$ is a slice of
pie, but not of cake.

Note that if we allow portions to consist of any number of slices, there is no
difference between the two problems. $[0.9,0.1]$ may not be a slice, but
$[0.9,1]\cup[0,0.1]$ is clearly a portion, and given additive utilities it is
valued the same. In this section, then, we take it that a portion can consist of
only one slice.

Pies are of interest to us primarily because of two impossibility results.

\begin{proposition}[\cite{Stromquist07},\cite{Thomson07}]\label{piefair}
There exist pie cutting situations where no allocation is both envy free and
Pareto efficient.
\end{proposition}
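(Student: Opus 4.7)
The plan is to exhibit a specific pie-cutting situation in which every envy-free allocation is Pareto-dominated, thereby establishing the claim. I would focus on the simplest case of two agents, where the envy-free condition reduces (by the natural pie analogue of Claim~\ref{envpr}, since both arcs together cover the pie) to the requirement that each agent value their own arc at $\geq 1/2$. Allocations with connected portions on a pie are parametrised by a pair of cut points $(x,y) \in \mathbb{R}/\mathbb{Z} \times \mathbb{R}/\mathbb{Z}$ together with an assignment of the two resulting arcs to the agents.

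The strategy within this parameter space is to show that the envy-free region and the Pareto-efficient region do not meet. The envy-free region is described by two continuous inequalities on the valuations of the arcs and generically has positive measure in cut-space. Pareto efficiency is considerably more restrictive: it requires that no other connected allocation dominates in both coordinates, which, combined with local first-order conditions at the cut points (a marginal shift of a cut cannot weakly benefit both agents), typically pins the Pareto-efficient cuts down to a sparse set of candidates depending on the density profiles $t_1, t_2$. The proof would then check directly that each of these candidates gives one agent strictly less than $1/2$ in their own valuation.

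The main obstacle is the calibration of the densities. A naive symmetric choice (for instance, $t_1$ and $t_2$ being reflections of each other across some rotation of the pie) tends to admit a balanced allocation at $(u_1, u_2) = (1/2, 1/2)$ which is itself Pareto-efficient, so the construction must break the symmetry in a controlled way. Following \cite{Stromquist07} and \cite{Thomson07}, a piecewise constant density pair can be engineered so that every Pareto-efficient allocation lies outside the envy-free quadrant of $(u_1, u_2)$-space. The circular topology of the pie is essential here: unlike a cake, where one cut can effectively be anchored at an endpoint (leaving a single free parameter that can be tuned for envy-freeness along the Pareto frontier), both cut points on a pie are free to rotate, and the construction exploits precisely this extra degree of freedom to let the Pareto frontier ``rotate'' clear of the envy-free region. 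Verifying disjointness for a specific construction then reduces to a finite case analysis.
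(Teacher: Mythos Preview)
The paper does not supply its own proof of this proposition; it is stated with attribution to \cite{Stromquist07} and \cite{Thomson07} and then used only as motivation. So there is no argument in the text against which to compare yours directly.

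That said, your plan has a genuine mathematical gap: the two-agent case cannot furnish a counterexample. For two agents with absolutely continuous measures on a pie, an envy-free and Pareto-efficient allocation into two connected arcs always exists; this positive result appears in the very sources cited for the proposition. Intuitively, with only two arcs the utility-possibility set in $(u_1,u_2)$-space is a compact region whose Pareto frontier is a connected curve running from a point with $u_1=1$ (agent~1 gets the whole pie) to a point with $u_2=1$, and a continuity argument forces this curve to pass through the quadrant $\{u_1\geq 1/2,\ u_2\geq 1/2\}$. No amount of calibrating piecewise constant densities for two agents will separate the Pareto frontier from the envy-free region.

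The counterexamples in \cite{Stromquist07} and \cite{Thomson07} require at least three agents. With three arcs and three measures the combinatorics is rich enough that every Pareto-efficient configuration can be made to leave some agent envious. Once you move to $n=3$, the shape of your argument---piecewise constant densities, a finite list of candidate Pareto-efficient cut configurations, and a case check that each violates envy-freeness---is indeed how the cited constructions proceed; it is only the choice of $n$ that needs to change.
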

\begin{proposition}[\cite{Thomson07}]\label{pietruth}
Truthful pie cutting mechanisms cannot produce Pareto efficient allocations.
\end{proposition}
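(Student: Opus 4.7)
The plan is to argue by contradiction, reducing to Proposition \ref{piefair}. Assume $M$ is a truthful pie cutting mechanism always returning a Pareto efficient allocation; I aim to derive that $M$ must also always return an envy-free allocation, which together with Proposition \ref{piefair} gives the contradiction.

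The reduction step is to establish a ``truthful plus Pareto efficient implies envy-free'' implication in the pie cutting setting. Suppose $A = M(u_1, \ldots, u_n)$ has agent $i$ envying $j$, so $u_i(A_j) > u_i(A_i)$. Consider the profile in which agent $i$ reports $u_j$ in place of $u_i$, producing allocation $A'$. The aim is to show that the Pareto efficiency of $A'$, combined with the fact that $i$ and $j$ now report the same utility and that portions on the pie are single arcs, forces $u_j(A'_i) \geq u_j(A_j)$. Given this, $u_i(A'_i) = u_j(A'_i) \geq u_j(A_j) = u_i(A_j) > u_i(A_i)$, so agent $i$ has a profitable deviation from reporting $u_i$ truthfully, and $M$ is not truthful after all.

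The main obstacle is that naive ``equal reports imply equal values'' arguments fail in generic allocation problems, since permuting the bundles of agents with coinciding utilities preserves the total and is still Pareto efficient. What makes the pie setting different, and what I would try to exploit, is the single-slice constraint together with the circular topology: any Pareto efficient allocation of arcs on the circle is pinned down up to rotations of the cut position separating agents with the same utility, and such a rotation can be used to shift utility continuously between $i$ and $j$ without affecting the other agents' portions. This should contradict the Pareto efficiency of $A'$ unless $u_j(A'_i) \geq u_j(A_j)$. Making this shifting argument fully rigorous while respecting the measurability of the utility functions and the boundaries with other agents' arcs is where I expect the technical work to concentrate; it is also the step that genuinely uses the pie structure rather than the cake structure, since on a cake the analogous rotation is blocked by the two free endpoints.
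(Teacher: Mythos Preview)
The paper does not actually prove this proposition: it appears in the literature review (Section~\ref{litgeo}) as a cited result from \cite{Thomson07}, with no accompanying argument. So there is no ``paper's own proof'' to compare against, and your attempt must stand or fall on its own.

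Unfortunately it falls. The chain
\[
u_i(A'_i) = u_j(A'_i) \geq u_j(A_j) = u_i(A_j) > u_i(A_i)
\]
is broken at both equalities. When agent $i$ misreports $u_j$, the mechanism treats $i$ as if they had utility $u_j$, but agent $i$'s \emph{actual} utility for the resulting slice is still $u_i(A'_i)$, not $u_j(A'_i)$; there is no reason for these to coincide since $u_i\neq u_j$ in general. The same objection applies to $u_j(A_j)=u_i(A_j)$. What you would need for a profitable deviation is $u_i(A'_i)>u_i(A_i)$, and nothing in your argument controls $u_i$ on the new slice $A'_i$. The ``report the envied agent's utility'' trick does not by itself produce a manipulation; standard ``truthful plus efficient implies envy free'' results (where they hold at all) require additional structure such as anonymity or non-bossiness and proceed by quite different arguments. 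Your later paragraph about rotating cuts between $i$ and $j$ addresses only the inequality $u_j(A'_i)\geq u_j(A_j)$, which is the wrong target: even granting it, the conclusion does not follow. The reduction to Proposition~\ref{piefair} therefore does not go through as written, and a direct construction in the style of \cite{Thomson07} is likely needed instead.
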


To this point we have dealt solely with the equity side of the problem, so it is
interesting that Proposition \ref{piefair} suggests that issues of efficiency
may be more closely intertwined with equity than first apparent. Proposition
\ref{pietruth} hints that there are difficulties involved in inducing truthful
behaviour, which we shall return to in Section \ref{eff}.

\section{The Price of Fairness}\label{litfair}

An important concept in Economics is the tradeoff between equity and efficiency.
Cake cutting is no different, and the efficiency loss imposed by our equity
criteria has been studied in \cite{Caragiannis09} and \cite{Aumann10}. The first
paper is connected with utilitarian efficiency only, the second introduces the
notion of egalitarian efficiency:
\begin{align}
\mathcal{UE}(A)=\sum_{i=1}^n u_i(A_i)
\end{align}
\begin{align}
\mathcal{EE}(A)=\min_{i} u_i(A_i)
\end{align}

The authors define the \emph{price of proportionality} (respectively: envy
freeness and equitability) with respect to utilitarian efficiency (respectively:
egalitarian) to be the ratio of the utilitarian optimum to the proportional
allocation with the highest utilitarian efficiency. The reader will note that
this value will be different in different cake cutting situations. We are
typically interested in the worst case: that is, the highest possible value of
this instance. By picking extreme situations, therefore, this allows one to
place bounds on the price of these criteria. We will present a result of
\cite{Caragiannis09} to demonstrate this procedure.
\begin{proposition}[\cite{Caragiannis09}]
The price of proportionality is at least $\sqrt{n}/2$.
\end{proposition}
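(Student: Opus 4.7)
The plan is to exhibit, for each $n$, a cake-cutting situation in which the utilitarian optimum is at least $\sqrt{n}$ while every proportional allocation has utilitarian efficiency at most $2$; dividing yields the claimed lower bound. For clarity I assume $n$ is a perfect square and set $k := \sqrt{n}$; the non-square case works with $k = \lfloor\sqrt{n}\rfloor$ at the cost of lower-order terms.

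The construction uses piecewise uniform preferences. Partition the $n$ agents into a \emph{sharp} group $S$ of size $k$ and a \emph{broad} group $B$ of size $n-k$. For $i=1,\dots,k$, sharp agent $i$ values only the interval $I_i := [(i-1)/k,\;i/k]$, with density $k$ on $I_i$ and $0$ elsewhere. Each broad agent values the whole cake uniformly with density $1$. All utilities integrate to $1$, and the $I_i$ partition $[0,1]$. The allocation that assigns $I_i$ to sharp agent $i$ (broad agents get nothing) gives each sharp agent utility $1$, so the utilitarian optimum is at least $k = \sqrt{n}$.

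Now let $A=(A_1,\dots,A_n)$ be any proportional allocation. Each broad agent's utility equals $|A_j|$, and proportionality forces $|A_j|\geq 1/n$; summing over $j\in B$,
\begin{align}
\sum_{j\in B}|A_j| \;\geq\; \frac{n-k}{n} \;=\; 1-\frac{1}{\sqrt{n}},
\end{align}
so the total length available to $S$ is at most $1/\sqrt{n}$. Sharp agent $i$'s utility is exactly $k\cdot|A_i\cap I_i|$ and the portions are pairwise disjoint, so
\begin{align}
\sum_{i\in S} u_i(A_i) \;=\; k\sum_{i\in S}|A_i\cap I_i| \;\leq\; k\sum_{i\in S}|A_i| \;\leq\; k\cdot\frac{1}{\sqrt{n}} \;=\; 1.
\end{align}
The broad agents' total utility is $\sum_{j\in B}|A_j|\leq 1$, hence $\mathcal{UE}(A)\leq 2$.

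Taking the ratio, this instance witnesses a price of proportionality of at least $\sqrt{n}/2$. The crux — and the only step that requires any thought — is the observation that proportionality for the many broad agents almost fills the cake by length, leaving so little room for the sharp agents that even their high density $k$ can only lift their aggregate utility back to $1$, the same ceiling the broad agents face. Once this is in place, the computation is routine bookkeeping.
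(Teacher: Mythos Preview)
Your proof is correct and uses exactly the same construction and argument as the paper: $\sqrt{n}$ ``sharp'' agents concentrated on disjoint intervals and $n-\sqrt{n}$ ``broad'' agents valuing the whole cake, with the key observation that proportionality for the broad agents leaves only $1/\sqrt{n}$ length for the sharp agents. If anything, your write-up is slightly more careful than the paper's, which argues somewhat heuristically that one should ``minimise the amount of cake we give'' to the broad agents rather than directly bounding each group's contribution by~$1$ as you do.
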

\begin{proof}
Consider a cake cutting situation with piecewise uniform preferences, where $n$
is a square. That is, $n=m^2$ for some $m\in\mathbb{N}$. For $i\in\{1,...,m\}$,
$i$ values $[\frac{i-1}{m},\frac{i}{m}]$. All other agents value the entire cake
uniformly. One can verify that the utilitarian optimum would involve allocating
$[\frac{i-1}{m},\frac{i}{m}]$ to $i$, and nothing to $i\notin\{1,...,m\}$. The
utilitarian efficiency of this allocation is $m=\sqrt{n}$.

Next, consider a proportional allocation. In order to maximise utilitarian
efficiency we should minimise the amount of cake we give to agents
$i\notin\{1,...,m\}$. The least we could give each is $1/n$ of the cake, which
would yield $(n-\sqrt{n})\cdot 1/n=\frac{n-\sqrt{n}}{n}$ efficiency. For a large
enough $n$ this is close to 1. This will leave us with $1/m$ of the cake to
divide between the first $m$ agents. No means of doing this can give us more
than 1 efficiency, so the total utilitarian efficiency is at most 2.

The price of proportionality, therefore, is bounded below by $\sqrt{n}/2$.
\end{proof}

We summarise their results in a table:
\begin{center}
\begin{tabular}{l l | l l l}
&Price of:&Proportionality&Envy freeness&Equitability\\
\hline
Countable portions&$\mathcal{UE}$&At least:
$\sqrt{n}/2$&$\sqrt{n}/2$&$(n+1)^2/4n$\\
\cite{Caragiannis09}&&At most: $2\sqrt{n}-1$&$n-1/2$&$n$\\
Single slice portions&$\mathcal{UE}$&At least:
$\sqrt{n}/2$&$\sqrt{n}/2$&$n-1+1/n$\\
\cite{Aumann10}&&At most: $\sqrt{n}/2+1$&$\sqrt{n}/2+1$&$n$\\
&$\mathcal{EE}$&1&$n/2$&1\\
\end{tabular}
\end{center}

\section{Truthful Mechanisms}\label{littruth}

The treatment of fully strategic behaviour in the literature is a recent
development. Aside from the results of \cite{Thomson07} in the context of pies,
there are two papers on the subject, giving us two mechanisms, one of which is
non-constructive. Both of these are revelation protocols, thus require the
additional assumption that agents' preferences have some finite representation.
In the case of Mechanism \ref{procaccia} this is guaranteed by piecewise uniform
preferences, while in Mechanism \ref{nonconstructive} one must bear in mind that
the mechanism may not function on an arbitrary cake cutting situation.

The non-constructive mechanism, discovered independently by \cite{Chen10} and
\cite{Mossel10}, relies on the concept of a \emph{perfect} allocation.

\begin{definition}
An allocation $A$ is \emph{perfect} if $u_i(A_j)=1/n$ for all $i,j$. That is,
every agent thinks every slice is exactly $1/n$ of the cake.

Using our previous means of a table to visualise an equity criteria, a perfect
allocation is where all the entries in the table are $1/n$.
\begin{center}
\begin{tabular} {c|clc}
&$A_1$&$\cdots$&$A_n$\\
\hline
$u_1$&$1/n$&$\cdots$&$1/n$\\
$\vdots$&$\vdots$&$\ddots$&$\vdots$\\
$u_n$&$1/n$&$\cdots$&$1/n$\\
\end{tabular}
\end{center}
\end{definition}

A result of \cite{Alon87} guarantees the existence of perfect allocations.
However this result is purely existential. In fact, such an allocation cannot be
attained by a Robertson-Webb protocol, even for two agents \cite{Robertson97}.
If we could find such an allocation, however, we could use the following
mechanism:

\begin{mechanism}\label{nonconstructive}
Given the agents' preferences, construct a perfect partition,
$(\pi_1,...,\pi_n)$. Randomly assign $\pi_i$ to
some agent. Remove that agent and recurse on the remaining agents.
\end{mechanism}

\begin{proposition}[\cite{Chen10},\cite{Mossel10}]
Mechanism \ref{nonconstructive} is truthful in expectation and produces a
perfect allocation.
\end{proposition}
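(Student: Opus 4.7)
The plan is to treat the two claims separately and in each case to exploit the fact that, on a perfect partition, every agent values every piece identically.

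For perfectness, I would first note that on a truthful profile the partition $(\pi_1,\ldots,\pi_n)$ produced in step one satisfies $u_i(\pi_j)=1/n$ for all $i,j$ by the very definition of a perfect partition. The random assignment then produces an allocation in which each $A_j$ equals some $\pi_k$, so $u_i(A_j)=u_i(\pi_k)=1/n$ for every $i,j$, which is exactly the definition of a perfect allocation. Existence of a perfect partition itself is supplied by the cited result of \cite{Alon87}, so the mechanism is well defined (though in general non-constructively).

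For truthfulness in expectation, the plan is to prove the stronger statement that every agent's expected utility equals $1/n$ regardless of what they (or anyone else) report. First, a short induction on $n$ shows that the recursive rule ``assign the next piece to a uniformly random remaining agent and recurse'' realises a uniformly random bijection between the $n$ pieces and the $n$ agents; hence each agent receives each piece with probability exactly $1/n$. Next, fix an agent $i$ with true utility $u_i$, fix any strategy profile, let $(\pi_1,\ldots,\pi_n)$ be the partition the mechanism constructs on that profile, and set $v_j = u_i(\pi_j)$. Because $u_i$ is normalised and additive and the $\pi_j$ cover $[0,1]$ pairwise-disjointly (up to boundary points), $\sum_{j=1}^n v_j = 1$. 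The agent's expected utility is therefore
\[
\sum_{j=1}^n v_j \cdot \frac{1}{n} \;=\; \frac{1}{n}.
\]
Since this value is independent of $i$'s own report, no deviation from truthfulness can yield a higher expected payoff, so truth-telling is weakly dominant in expectation.

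The only point requiring real care is the interpretation of the recursive random assignment; once it is pinned down as a uniformly random permutation, the rest is bookkeeping with the defining properties of utility measures and of perfect partitions. In particular, the key observation driving truthfulness is invariance: whatever partition a misreport induces, its pieces still have true values summing to $1$, which is exactly what the $1/n$ uniform draw needs in order to wash out any strategic advantage.
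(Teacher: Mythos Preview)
The paper does not actually give a proof of this proposition; it is stated with attribution to \cite{Chen10} and \cite{Mossel10} and left without argument, so there is nothing in the paper to compare against directly. Your argument is correct and is essentially the standard one from those references: the random assignment is a uniform permutation, and since any partition the mechanism constructs (whether from sincere or manipulated reports) still covers $[0,1]$, additivity and normalisation force the true values of the pieces to sum to $1$, making every agent's expected payoff exactly $1/n$ independent of their report.

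One small remark on presentation: the mechanism's recursion is stated somewhat loosely in the paper, and you are right to flag that interpreting it as generating a uniformly random bijection between pieces and agents is the step that needs to be made explicit. Your inductive justification is adequate; you might also simply observe that assigning $\pi_1,\ldots,\pi_n$ in order to a uniformly chosen remaining agent at each step is a standard way of sampling a uniform permutation.
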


Such a solution leaves much to be desired. Even the non-constructive nature
aside, the fact that this mechanism is only truthful in expectation means that
it is not robust enough to handle risk seeking agents: a single agent willing to
take a gamble on the outcome could submit an insincere utility function, thereby
the partition constructed by the mechanism would not be perfect at all, and
could well lead to loss of envy freeness and proportionality for the sincere
agents.

Given piecewise uniform preferences, however, a deterministic mechanism which
avoids these difficulties exists.

\begin{mechanism}\label{procaccia}
Let $\mathfrak{A}$ be a subset of agents and $\mathfrak{X}$ a subset of the
cake. Let $D(\mathfrak{A},\mathfrak{X})$ be all
the intervals of $\mathfrak{X}$ that are valued by at least one agent in
$\mathfrak{A}$. Define:
\begin{align}
avg(\mathfrak{A},\mathfrak{X})=\frac{D(\mathfrak{A},\mathfrak{X})}{\#\mathfrak{A
}}
\end{align}
An allocation is said to be \emph{exact} with respect to $\mathfrak{A}$ and
$\mathfrak{X}$ if it
assigns to every agent in $\mathfrak{A}$ a portion of $\mathfrak{X}$ of length
$avg(\mathfrak{A},\mathfrak{X})$ consisting
only of the intervals that the agent values.

Given the set of agents $\mathcal{A}$ and the cake $[0,1]$, find
$\mathfrak{A}\subseteq\mathcal{A}$ such that $\mathfrak{A}$ minimises the value
of
$avg(\mathcal{A},[0,1])$. Produce an exact allocation with respect to
$\mathcal{A}$ and $[0,1]$. Recurse on $\mathcal{A}\backslash \mathfrak{A}$ and
$[0,1]\backslash D(\mathfrak{A},[0,1])$.
\end{mechanism}
\begin{proposition}
Mechanism \ref{procaccia} is truthful and produces an envy free allocation.
\end{proposition}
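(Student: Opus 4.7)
The plan is to prove envy-freeness first, leveraging a structural monotonicity property of the recursion, and then tackle truthfulness, which I expect to be the harder half.

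Write $\mathfrak{A}_1, \mathfrak{A}_2, \ldots$ for the subsets selected in successive rounds, $\mathfrak{X}_k$ for the cake remaining at the start of round $k$, and $avg_k := avg(\mathfrak{A}_k, \mathfrak{X}_k)$ for the common length of each portion allocated in round $k$. The central lemma is the monotonicity $avg_1 \leq avg_2 \leq \cdots$. I would prove this by contradiction. Since $\mathfrak{X}_{k+1} = \mathfrak{X}_k \setminus D(\mathfrak{A}_k, \mathfrak{X}_k)$, a short set-theoretic computation gives $|D(\mathfrak{A}_k \cup \mathfrak{A}_{k+1}, \mathfrak{X}_k)| = |D(\mathfrak{A}_k, \mathfrak{X}_k)| + |D(\mathfrak{A}_{k+1}, \mathfrak{X}_{k+1})|$, so $avg(\mathfrak{A}_k \cup \mathfrak{A}_{k+1}, \mathfrak{X}_k)$ is a weighted mean of $avg_k$ and $avg_{k+1}$. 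Were $avg_{k+1}$ strictly smaller than $avg_k$, that mean would be strictly less than $avg_k$, contradicting the minimality of $\mathfrak{A}_k$ among subsets of the surviving agent set.

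Granting the monotonicity, envy-freeness splits into three cases. If $i$ and $j$ lie in the same round $k$, both portions have length $avg_k$ and each is contained in the cake its recipient values, so $u_i(A_i) = avg_k/|P_i|$ while $u_i(A_j) = |P_i \cap A_j|/|P_i| \leq avg_k/|P_i|$. If $i$ is in a strictly earlier round $k_i$ than $j$, then $A_j \subseteq \mathfrak{X}_{k_i+1}$, which is disjoint from $P_i$ because every interval in $\mathfrak{X}_{k_i}$ valued by $i \in \mathfrak{A}_{k_i}$ is contained in $D(\mathfrak{A}_{k_i}, \mathfrak{X}_{k_i})$; hence $u_i(A_j)=0$. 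If $j$ is in a later round than $i$, the monotonicity yields $u_j(A_i) \leq |A_i|/|P_j| = avg_{k_i}/|P_j| \leq avg_{k_j}/|P_j| = u_j(A_j)$.

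Truthfulness is the delicate half. Under a truthful profile agent $i$ obtains utility $avg_k/|P_i|$, where $k$ is $i$'s round. Under a misreport $P_i'$, the mechanism allocates $i$ only intervals lying in $P_i'$, so $i$'s true utility is at most $|A_i' \cap P_i|/|P_i|$, where $A_i'$ is the resulting allocation. The obstacle is that a single deviation can perturb the minimising subset in every round and shift $i$ into a different round. My plan is to induct on the round $r$ in which $i$ ends up under the misreport. If $r \geq k$, I would compare the truthful $\mathfrak{A}_k$ against the hypothetical subset consisting of $i$ (reporting truthfully) together with the agents allocated strictly before round $r$ under the misreport, using the minimality that defined $\mathfrak{A}_k$ to bound $|A_i' \cap P_i|$ by $avg_k$. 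If $r < k$, I would argue that any apparent gain from landing in an earlier round is wiped out by the fact that intervals in $A_i' \setminus P_i$ contribute nothing to true utility, and that $|A_i' \cap P_i|$ itself cannot exceed $avg_k$ without contradicting the minimality defining $\mathfrak{A}_k$ in the truthful execution. The main obstacle I anticipate is precisely this transplantation: carefully tracking how the removal operation $D(\cdot,\cdot)$ interacts with the chain of minimality conditions when $P_i$ is swapped for $P_i'$, so that witness subsets from one execution of the mechanism can be validly invoked against the other.
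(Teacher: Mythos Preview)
The paper does not actually prove this proposition: it is stated as a result of \cite{Chen10} and left without proof, so there is no in-paper argument to compare against. What follows is therefore an assessment of your proposal on its own merits.

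Your envy-freeness argument is correct and is essentially the standard one. The monotonicity lemma $avg_1\leq avg_2\leq\cdots$ is the right structural fact, and your proof of it is clean: the identity $|D(\mathfrak{A}_k\cup\mathfrak{A}_{k+1},\mathfrak{X}_k)|=|D(\mathfrak{A}_k,\mathfrak{X}_k)|+|D(\mathfrak{A}_{k+1},\mathfrak{X}_{k+1})|$ holds because $\mathfrak{X}_{k+1}$ is exactly $\mathfrak{X}_k$ with $D(\mathfrak{A}_k,\mathfrak{X}_k)$ removed, and the weighted-mean contradiction goes through. The three cases in the envy analysis are all handled correctly; the only cosmetic issue is that your third case is phrased as ``$j$ in a later round than $i$'' and then argues $u_j(A_i)\leq u_j(A_j)$, which is really the case ``$i$ later than $j$'' after relabelling. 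It would read more smoothly if you fixed $i$ throughout and phrased the third case as ``$i$ in a later round than $j$''.

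Truthfulness, however, remains a genuine gap. What you have written is a plan rather than a proof, and you are candid about this. The difficulty you flag is real: the minimality condition that pins down $\mathfrak{A}_k$ lives in the truthful execution, while the sets you want to compare it against (``$i$ together with the agents allocated before round $r$ under the misreport'') live in a different execution with a different profile. Simply invoking minimality of $\mathfrak{A}_k$ does not directly bound $|A_i'\cap P_i|$, because the competing subset you describe is not a subset of the surviving agents in the truthful round-$k$ subproblem, and the ambient cake $\mathfrak{X}_k$ need not coincide with anything in the deviated run. The actual argument in \cite{Chen10} requires a more careful invariant relating the two executions round by round; your sketch gestures in the right direction but does not supply the bridge. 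As written, the truthfulness half is not yet a proof.
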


To date no extensions to more complicated preferences are known.

\chapter{Efficiency of Allocations}\label{eff}

We examine the notions of utilitarian and egalitarian efficiency, asking what it
means for an allocation to be optimal in either of these measures. We place
bounds on their values and discuss the conditions for their existence. In both
cases we demonstrate that such allocations cannot, in general, be produced by
cake cutting mechanisms if the agents are allowed to be strategic.

\section{Utilitarian Efficiency}\label{effut}

We recall the notion of utilitarian efficiency:
\begin{align}
\mathcal{UE}(A)=\sum_{i=1}^n u_i(A_i)
\end{align}

A utilitarian optimal allocation is therefore one which attains the highest
possible utilitarian efficiency. It is easy to put bounds on this value, but due
to the flexibility of a cake cutting situation these aren't very interesting:

\begin{proposition}\label{boundue}
The utilitarian efficiency of a utilitarian optimal allocation is bounded above
by $n$, below by 1, and these bounds are tight.
\end{proposition}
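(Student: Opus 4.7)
The plan is to establish the two bounds and then exhibit witness situations that make them tight, treating upper and lower separately.

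For the upper bound, I would argue that since each $u_i$ is normalised and non-negative and $A_i\subseteq[0,1]$, we have $u_i(A_i)\le u_i([0,1])=1$, so $\mathcal{UE}(A)\le n$ for any allocation, hence in particular for a utilitarian optimum. To show tightness, I would construct a cake cutting situation with $n$ agents having piecewise uniform preferences that are pairwise disjoint on intervals that together tile the cake: let agent $i$ value only $[(i-1)/n,i/n]$. Allocating $A_i=[(i-1)/n,i/n]$ to agent $i$ then yields $u_i(A_i)=1$ for every $i$, giving $\mathcal{UE}(A)=n$. Since this is also the maximum of the previous inequality, this allocation is utilitarian optimal.

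For the lower bound, I would observe that a utilitarian optimum has efficiency at least as high as the specific allocation $([0,1],\emptyset,\ldots,\emptyset)$, whose efficiency is $u_1([0,1])=1$; hence the optimum is at least $1$. For tightness I would use the case where all $n$ agents have the same uniform density over $[0,1]$. Then for any allocation whose portions are pairwise disjoint, $\sum_i u_i(A_i)=\sum_i |A_i|\le 1$, with equality whenever the allocation covers the cake. So here the utilitarian optimum equals exactly $1$.

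Since the arguments are all elementary consequences of normalisation, non-negativity, and additivity, I do not foresee a real obstacle. The only thing to be a little careful about is the tightness of the lower bound: one must verify that when all agents share the same uniform density, no allocation can exceed $\mathcal{UE}=1$, which uses the disjointness of portions (up to boundary points, which have measure zero by non-atomicity) to bound the sum of lengths by the length of the cake.
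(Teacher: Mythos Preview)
Your proposal is correct and matches the paper's own proof essentially line for line: the same normalisation argument for the upper bound, the same disjoint-interval example for its tightness, the same ``give the whole cake to one agent'' allocation for the lower bound, and the same all-uniform situation for its tightness. The only difference is that you spell out a little more explicitly why $u_i(A_i)\le 1$ (via monotonicity from non-negativity and additivity) and why the lengths sum to at most $1$ (disjointness up to measure-zero boundaries), which is fine.
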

\begin{proof}
For the upper bound, we observe that the maximum utility attained by any one
agent is 1 due to normalisation. A sum of $n$ terms, each bounded by 1, is
bounded by $n$. To see that this bound is tight, consider a cake cutting
situation with piecewise uniform preferences where agent $i$ values
$[\frac{i-1}{n},\frac{i}{n}]$. That is, all preferences are disjoint, so we can
allocate every
agent a portion that they value as much as the entire cake.

For the lower bound, $u_i([0,1])=1$ for any $i$, so we can always give the
entire cake to one agent. To see that this is bound is tight, consider a cake
cutting situation with piecewise uniform preferences where all agents value the
entire cake. That is the utility derived from any portion by any agent is
precisely the portion's length, $u_i(A_i)=|A_i|$. As no allocation can allocate
portions with combined length exceeding that of the cake, the utilitarian
efficiency cannot exceed 1.
\end{proof}

Another simple, yet important, result concerns the existence of utilitarian
optimal allocations.

\begin{theorem}\label{uoexistence}
A necessary and sufficient condition for the existence of a utilitarian optimal
allocation is that it be possible to divide the cake into a finite number of
slices, $S_i$, such that for all $i$, for some $j$, for every sub-interval
$S'_i\subseteq
S_i$, for all $k$, $u_j(S'_i)\geq u_k(S'_i)$.
\end{theorem}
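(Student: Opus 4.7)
The plan is to prove both directions of the biconditional, with the sufficient direction being essentially a computation and the necessary direction relying on a local swap argument.

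For sufficiency, suppose we are given slices $S_1,\ldots,S_m$ partitioning $[0,1]$ together with dominant agents $j(1),\ldots,j(m)$ satisfying the stated pointwise condition. I would construct the allocation $A$ by setting $A_k = \bigcup_{i:\,j(i)=k} S_i$ and argue this is utilitarian optimal. Given any competing allocation $B$, decompose $u_k(B_k) = \sum_i u_k(B_k\cap S_i)$ since the $S_i$ partition the cake. Each $B_k\cap S_i$ is a finite union of sub-intervals of $S_i$ (as $B_k$ is a portion), so additivity lifts the hypothesis ``$u_{j(i)}(S'_i) \geq u_k(S'_i)$ for every sub-interval $S'_i$'' to $u_{j(i)}(B_k\cap S_i) \geq u_k(B_k\cap S_i)$. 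Summing over $k$ and using that the $B_k\cap S_i$ partition $S_i$ gives $\sum_k u_k(B_k\cap S_i) \leq u_{j(i)}(S_i) = \sum_k u_k(A_k\cap S_i)$. Summing over $i$ yields $\mathcal{UE}(B) \leq \mathcal{UE}(A)$.

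For necessity, suppose $A = (A_1,\ldots,A_n)$ is utilitarian optimal. Since each $A_k$ is by definition a finite union of slices, I would form the partition of $[0,1]$ by collecting all these intervals along with the finitely many intervals of $[0,1]\setminus\bigcup_k A_k$. For an interval $S$ that is part of $A_k$, I claim $k$ serves as the dominant agent $j$: if not, there would be a sub-interval $S'\subseteq S$ and an agent $k'$ with $u_{k'}(S') > u_k(S')$, and then the allocation $A'$ obtained from $A$ by transferring $S'$ from $k$ to $k'$ (leaving all other portions unchanged) would satisfy $\mathcal{UE}(A') - \mathcal{UE}(A) = u_{k'}(S') - u_k(S') > 0$, contradicting optimality of $A$. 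For an unallocated interval $S$, utilitarian optimality (or equivalently, non-wastefulness via Claim~\ref{Eff Chain}) forces $u_k(S)=0$ for all $k$, so the condition holds vacuously with any $j$.

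The main obstacle is the necessity direction, and specifically ensuring that the pointwise dominance condition really holds on every sub-interval, not just the intervals arising in the allocation. The swap argument handles this cleanly: since the contradictory sub-interval $S'$ can be chosen arbitrarily small and the reallocation only affects $S'$, the global structure of $A$ is irrelevant and the contradiction is purely local. A secondary subtlety is that the definition of ``portion'' must be interpreted as a \emph{finite} union of slices, which is consistent with the paper's earlier usage, and is required for the partition to have finitely many pieces as the theorem demands.
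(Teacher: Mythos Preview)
Your proof is correct and follows essentially the same approach as the paper: for sufficiency, both construct the allocation that assigns each slice $S_i$ to its dominant agent $j(i)$ and argue no competing allocation can do better; for necessity, both use the local swap argument (transferring a sub-interval $S'$ from its owner to an agent who values it more) to derive a contradiction with optimality. Your write-up is somewhat more careful than the paper's---you do sufficiency as a direct computation rather than by contradiction, you treat the unallocated intervals explicitly via non-wastefulness, and you flag the finite-portion assumption---but the underlying ideas are identical.
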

\begin{proof}
Suppose this condition is not satisfied. Let $A$ be any allocation. There must
be some slice, $S$, in some portion, $A_i$, such that for some sub-interval
$S'\subseteq S$, $u_j(S')>u_i(S)$ for some $j$. Then the allocation obtained by
moving $S'$ to $A_j$ and keeping all else equal will have a higher utilitarian
efficiency. As it is always possible to create an allocation with a higher
utilitarian
efficiency, there can be no maximum.

Suppose this condition is satisfied. We claim that the allocation produced by
allocating $S_i$ to the $j$ so defined is optimal. Suppose otherwise. This would
mean that it is possible to improve on this allocation by giving the interval
$I\subseteq [0,1]$ to some other agent. We consider two cases.

Case one: $I\subseteq S_i$ for some $i$. In this case $I$ is already allocated
to a $j$ such that $u_j(I)\geq u_k(I)$ for all $k$. As preferences are additive,
giving $I$ to any other agent cannot increase the utilitarian efficiency.

Case two:$I\subseteq\bigcup S_i$ where $i\in X$ for some $X$. We split $I$ into
$I_i$, such that $I_i=I\cap S_i$. Given additivity of preferences, allocating
each $I_i$ to the agent that values it most will yield at least as much utility
as allocating all of $I$ to some agent. With $I_i$ so defined, we can return to
case one.
\end{proof}

Theorem \ref{uoexistence} may seem to merely restate what a utilitarian optimal
allocation is, rather than provide the conditions for its existence. However
this circumlocution is necessary, and allows us to prove that cake cutting
situations with piecewise linear preferences always admit utilitarian optimal
allocations.

\begin{proposition}\label{plexistence}
Given a cake cutting situation with piecewise linear preferences, a utilitarian
optimal allocation exists.
\end{proposition}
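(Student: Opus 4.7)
The plan is to verify the hypothesis of Theorem \ref{uoexistence} directly, by constructing a finite partition of the cake on which, piece by piece, a single agent pointwise dominates in density.

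First I would collect all the breakpoints of the piecewise linear densities $t_1,\dots,t_n$ into a finite set $B\subseteq[0,1]$. Between consecutive points of $B$, every $t_i$ restricts to an affine function. Next, on each such interval I would add to $B$ all pairwise intersection points of the functions $t_i$ and $t_j$. Since two distinct affine functions intersect in at most one point, and coinciding pairs need no intersection point at all, this adds only finitely many points per interval, so the resulting refined set $B'$ is still finite. Let $S_1,\dots,S_m$ be the closed intervals between consecutive points of $B'$.

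On each $S_\ell$, every $t_i$ is affine and no two of the $t_i,t_j$ cross in the interior, so the pointwise order of $\{t_i(x)\}_{i=1}^n$ is constant on $S_\ell$. Pick any agent $j(\ell)$ attaining the maximum (breaking ties arbitrarily); then $t_{j(\ell)}(x)\geq t_k(x)$ for every $x\in S_\ell$ and every $k$. For an arbitrary subinterval $S'\subseteq S_\ell$, integrating the pointwise inequality gives
\begin{equation}
u_{j(\ell)}(S')=\int_{S'} t_{j(\ell)}(x)\,dx\ \geq\ \int_{S'} t_k(x)\,dx = u_k(S'),
\end{equation}
for every $k$, which is exactly the condition of Theorem \ref{uoexistence}. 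Hence a utilitarian optimal allocation exists, obtained by giving each $S_\ell$ to agent $j(\ell)$.

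The only subtle step is the second one: ensuring finiteness of the partition once pairwise intersections are added. The hard part conceptually is not the dominance argument (which is immediate from monotonicity of the integral) but recognizing that linearity forces constancy of the density ordering between breakpoints and crossings, so that the ``best agent'' function is piecewise constant with finitely many pieces. Piecewise constant densities would admit the same argument trivially, while non-linear densities could produce infinitely many crossings; linearity is precisely what keeps the construction finite.
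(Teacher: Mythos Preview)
Your proposal is correct and is essentially the paper's own argument: mark all breakpoints of the $t_i$ together with all pairwise crossings, use linearity to conclude this is a finite set, and observe that on each resulting slice one agent's density pointwise dominates, so integrating gives the hypothesis of Theorem~\ref{uoexistence}. Your treatment is in fact slightly more careful than the paper's in noting that coinciding affine pieces contribute no crossing points (so finiteness is not threatened), but the structure of the two proofs is identical.
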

\begin{proof}
We will divide the cake into a finite number of slices, $S_i$, satisfying the
hypotheses of Theorem \ref{uoexistence}.

Recall that with piecewise linear preferences, the cake can be partitioned into
a finite number of intervals such that $t_i$ is linear over every interval. We
will refer to these intervals as pieces. Mark a point on the cake
wherever:

\begin{itemize}
\item
A piece of some $t_i$ begins or ends, or
\item
$t_i(x)=t_j(x)$ for $i\neq j$. That is, wherever the density functions of two
agents intersect.
\end{itemize}

Observe that this constitutes a finite number of marks: we have a finite number
pieces for each of a finite number of agents, and as $t_i$ is linear over every
piece it can only intersect other $t_j$ a finite number of times.

Let $S_i$ then be the slice between the $i$th and $(i+1)$th mark, taking the
mark at 0 to be the first. Observe that there exists a $j$ such that
$t_j|_{S_i}(x)\geq t_k|_{S_i}(x)$ for all $k$. For if not, then either some
$t_k$ must intersect $t_j$ over $S_i$, or there are two pieces of $t_k$ in
$S_i$, such that over one piece $t_j$ is larger, over the other $t_k$. But
neither of these is possible, because we placed marks at every intersection and
every piece endpoint, so over every slice we only have non-intersecting, linear
functions.

As agent $j$'s density therefore is greater over all of $S_i$, it is easy to see
that for every sub-interval
$S'_i\subseteq S_i$, $u_j(S'_i)\geq u_k(S'_i)$.
\end{proof}

\begin{corollary}
Given a cake cutting situation with piecewise constant or piecewise uniform
preferences, a utilitarian optimal allocation exists.
\end{corollary}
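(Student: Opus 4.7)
The plan is to observe that the two preference classes named in the corollary are special cases of piecewise linear preferences, so that Proposition \ref{plexistence} applies directly. Concretely, a piecewise constant density $t_i$ is piecewise linear with slope $0$ on each piece, and piecewise uniform preferences are in turn a special case of piecewise constant ones (with the constant taking only the values $0$ or $1/|P_i|$). Hence any cake cutting situation with piecewise constant or piecewise uniform preferences is in particular a cake cutting situation with piecewise linear preferences.

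The proof then reduces to invoking Proposition \ref{plexistence} to conclude that a utilitarian optimal allocation exists. There is no genuine obstacle here: the only thing to verify is that the rationality conditions in the definitions of piecewise constant and piecewise uniform preferences are compatible with the rationality conditions required for piecewise linear preferences, which is immediate since constants and zero slopes are rational. I would present this as a short paragraph inside a \textbf{proof} environment, noting the chain of inclusions and citing Proposition \ref{plexistence}. No calculation is required beyond this observation.
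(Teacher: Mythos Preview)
Your proposal is correct and takes essentially the same approach as the paper: the paper's proof consists of the single sentence that piecewise constant and piecewise uniform preferences are easily seen to be special cases of piecewise linear preferences, which is precisely the inclusion you spell out before invoking Proposition \ref{plexistence}.
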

\begin{proof}
Either of the two can easily be seen to be a special case of piecewise linear
preferences.
\end{proof}

Piecewise linear preferences are extremely general and can be used to
approximate a wide range of utility functions, so it may well be the case that
for every non-pathological case a utilitarian optimum exists. However, the cake
cutting framework is general enough to admit pathologies where one does not. It
is not terribly difficult to construct such an example using density functions
which oscillate an infinite number of times over the unit interval.

\begin{example}
Consider a cake cutting situation with two agents, their density functions given
by $t_1(x)=\alpha(\sin(\frac{x}{1-x})+1)$ with normalisation constant
$\alpha\approx 0.744391$, and a constant $t_2(x)=1$.

For any partition of the cake into a finite number of slices, we can always
improve on the allocation by splitting a slice on the right of the cake into
two, and allocating each to whichever agent derives more utility from it.
\end{example}

While an infinitely oscillating function is necessary for a counter example, it
is not sufficient. If we replace $t_2$ in the above example with a
piecewise defined:
\begin{align}
t'_2(x)= \left\{
     \begin{array}{lr}
       0 & : x \in [0,0.5)\\
       2 & : x \in [0.5,1]
     \end{array}
   \right.
\end{align}
Then an optimum allocation clearly exists: give $[0,0.5]$ to agent 1, $[0.5,1]$
to agent 2.

The problem arises from the fact that in some situations we can always increase
efficiency by making a finer division of the cake. As such we speculate that the
problem would disappear if the agents' portions were a fixed number of slices.

\begin{conjecture}
A utilitarian optimal allocation always exists in the context where the agents'
portions are restricted to a constant $c$ number of slices.
\end{conjecture}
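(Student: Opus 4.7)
The plan is to prove the conjecture by a standard compactness-and-continuity argument on the finite-dimensional space of allocations. The crucial point is that bounding each portion to consist of at most $c$ slices turns the search space from an infinite-dimensional object (where we could always add more cuts, as in the oscillating counterexample) into a compact subset of Euclidean space, on which the utilitarian efficiency is continuous.

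First, I would parameterize the space of admissible allocations. An allocation in which each agent's portion consists of at most $c$ slices can be encoded as a tuple $(a_{ij},b_{ij})_{i\in\{1,\dots,n\},\,j\in\{1,\dots,c\}}\in[0,1]^{2nc}$, where $A_i=\bigcup_{j=1}^{c}[a_{ij},b_{ij}]$, with the conventions $a_{ij}\le b_{ij}$ and $a_{ij}=b_{ij}$ representing an empty slice. The allocation is valid iff the interiors of distinct slices are disjoint, that is, for every pair of indices $(i,j)\ne(i',j')$ either $b_{ij}\le a_{i'j'}$ or $b_{i'j'}\le a_{ij}$. I would then verify that this constraint defines a closed subset $\mathcal{C}\subseteq[0,1]^{2nc}$: the ordering constraints on each pair are closed half-spaces, their pairwise disjointness conditions are finite unions of closed half-spaces, and the global constraint is a finite intersection of such, hence closed. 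Being closed and bounded, $\mathcal{C}$ is compact.

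Second, I would show that $\mathcal{UE}\colon\mathcal{C}\to\mathbb{R}$ is continuous. Non-atomicity of $u_i$ implies that the cumulative function $F_i(x):=u_i([0,x])$ is continuous (if $x_n\to x$, countable additivity gives $F_i(x_n)\to F_i(x)$ because $u_i(\{x\})=0$). Therefore each map $(a,b)\mapsto u_i([a,b])=F_i(b)-F_i(a)$ is continuous, and so is
\begin{align}
\mathcal{UE}(A)=\sum_{i=1}^{n}\sum_{j=1}^{c}\bigl(F_i(b_{ij})-F_i(a_{ij})\bigr).
\end{align}
Applying the extreme value theorem to this continuous function on the compact set $\mathcal{C}$ yields an allocation achieving the supremum.

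The main obstacle I anticipate is not the analytic content, which is routine, but rather the bookkeeping around the disjointness condition: one must be slightly careful because allowing each agent up to $c$ \emph{slices} (rather than exactly $c$) introduces degenerate configurations where slices collapse to points or share endpoints. These are absorbed by non-atomicity, which ensures that boundary coincidences and empty slices contribute zero utility and do not affect either continuity or admissibility, so they can be safely included in $\mathcal{C}$. A minor secondary check is that allocations arising from the parameterisation genuinely satisfy the definition of a portion from the framework in Section~\ref{def}, i.e.\ that adjacent or overlapping slices within a single portion can be merged without loss, which again follows because $u_i$ is determined by the union of the underlying intervals up to a set of measure zero.
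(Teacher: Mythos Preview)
The paper does not prove this statement at all: it is explicitly stated as a \emph{conjecture} and left open, motivated only by the observation that the oscillating counterexample relies on being able to refine the partition indefinitely. Your compactness argument is therefore not a comparison target but a genuine resolution of the open question.

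Your proof is correct. Parameterising each allocation by the $2nc$ endpoints places the admissible set inside $[0,1]^{2nc}$; the ordering constraints $a_{ij}\le b_{ij}$ and the pairwise disjointness conditions (each a finite union of closed half-spaces) intersect to a closed, hence compact, set. Non-atomicity makes each distribution function $F_i(x)=u_i([0,x])$ continuous, so $\mathcal{UE}$ is a finite sum of continuous terms and the extreme value theorem applies. The only point worth making explicit in a final write-up is that imposing disjointness on \emph{all} pairs $(i,j)\neq(i',j')$, including slices within the same agent's portion, loses no generality: a union of $c$ intervals is always a union of at most $c$ pairwise disjoint intervals, and padding with degenerate slices $[0,0]$ handles portions with fewer than $c$ pieces. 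With that remark, the formula $\mathcal{UE}(A)=\sum_i\sum_j\bigl(F_i(b_{ij})-F_i(a_{ij})\bigr)$ is exact rather than an overcount, and the argument goes through cleanly.
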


\subsection{Non-Existence of Mechanisms}

We round off our discussion of the utilitarian optimum by observing that such
allocations are, in general, unattainable.

A utilitarian optimal allocation will generally require a very precise partition
of the cake, and Robertson-Webb protocols cannot obtain enough information about
the agents'
utility functions to do so.

\begin{corollary}\label{RW nonoptimal}
Robertson-Webb protocols cannot always produce utilitarian optimal allocations.
\end{corollary}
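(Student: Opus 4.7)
The plan is to derive this as an immediate consequence of the two earlier results that have already been proved, namely Claim \ref{Eff Chain} (the efficiency chain: utilitarian optimal $\Rightarrow$ Pareto efficient $\Rightarrow$ non-wasteful) and Claim \ref{RW wasteful} (Robertson-Webb protocols cannot guarantee non-wasteful allocations, even under piecewise uniform preferences). The whole argument is a short syllogism, so I would simply chain these together rather than construct any new machinery.

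More concretely, I would first invoke Claim \ref{RW wasteful} to exhibit a cake cutting situation in which every Robertson-Webb protocol is forced to return a wasteful allocation after finitely many queries. Second, I would appeal to Claim \ref{Eff Chain}, which tells us that any utilitarian optimal allocation must in particular be non-wasteful (via Pareto efficiency as an intermediate step). Putting these two facts together, in the exhibited situation every output of a Robertson-Webb protocol is wasteful, hence not Pareto efficient, hence not utilitarian optimal. This suffices to conclude that Robertson-Webb protocols cannot always produce utilitarian optimal allocations.

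There is no real obstacle here; the only thing worth being careful about is the logical direction of Claim \ref{Eff Chain}. It is stated in the form \emph{utilitarian optimal only if non-wasteful}, which is exactly what is needed: the contrapositive (wasteful $\Rightarrow$ not utilitarian optimal) is what applies the negative result of Claim \ref{RW wasteful} to rule out utilitarian optimality. No additional piecewise-uniform construction or query argument is required, since Claim \ref{RW wasteful} has already done the work of producing the hard instance. The proof therefore reduces to one or two lines citing the two earlier results.
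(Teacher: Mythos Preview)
Your proposal is correct and matches the paper's own proof essentially verbatim: the paper simply cites Claim~\ref{RW wasteful} to get a situation where no Robertson-Webb protocol is non-wasteful, then invokes Claim~\ref{Eff Chain} to conclude non-optimality. Your remark about taking the contrapositive of Claim~\ref{Eff Chain} is the only logical step, and it is exactly what the paper does implicitly.
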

\begin{proof}
By Claim \ref{RW wasteful}, Robertson-Webb protocols cannot always produce
non-wasteful
allocations, so by Claim \ref{Eff Chain} they cannot produce utilitarian optimal
allocations.
\end{proof}

More generally, a utilitarian optimum may be against the interests of individual
agents. As such it should be no surprise that mechanisms fail in the face of
strategic agents.

\begin{claim}
There is no cake cutting mechanism that attains a utilitarian optimal allocation
in every cake cutting situation if the agents are strategic.
\end{claim}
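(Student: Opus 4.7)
The plan is to show that any mechanism achieving the utilitarian optimum under sincere reports must fail strategy-proofness, so strategic agents will deviate and the resulting allocation will not coincide with the true utilitarian optimum.

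First, I would fix such a mechanism $M$ and construct an explicit two-agent situation with piecewise uniform preferences: agent 1 truly values $[0,1]$ uniformly, and agent 2 truly values $[\epsilon, 1]$ uniformly for some small fixed $\epsilon > 0$. By Theorem \ref{uoexistence}, comparing densities point by point, the unique utilitarian optimum (up to measure zero) is $A_1 = [0, \epsilon]$, $A_2 = [\epsilon, 1]$, so under sincere reports agent 1 receives only $\epsilon$ utility while agent 2 gets 1.

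Next, I would exhibit a profitable deviation for agent 1: have them report $u_1' = u_{[\epsilon,\, 1-\delta]}$ uniform, for any $\delta$ with $0 < \delta < 1 - 2\epsilon$. Under the report profile $(u_1', u_2)$, agent 1's reported density $1/(1-\epsilon-\delta)$ strictly exceeds agent 2's density $1/(1-\epsilon)$ on $[\epsilon, 1-\delta]$. By Theorem \ref{uoexistence}, any utilitarian-optimal allocation of the reports must then assign $[\epsilon, 1-\delta]$ to agent 1, so agent 1's true utility under the deviation is at least $1 - \epsilon - \delta$, which strictly exceeds $\epsilon$ by choice of $\delta$.

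Since agent 1 strictly gains by misreporting, truthful play is not a best response in this situation. Consequently, under any strategic equilibrium, agent 1 reports something other than their true preferences, so $M$ operates on misreports and its output cannot coincide with the unique true utilitarian optimum (which, by uniqueness, is attained only from the sincere report profile).

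The main obstacle is extending this uniformly across mechanism classes and solution concepts. For Robertson-Webb protocols, Corollary \ref{RW nonoptimal} already yields a stronger impossibility (even under sincere behaviour), and an analogous finite-information argument applies to moving knife protocols. For revelation-style mechanisms the deviation argument above is what is needed, though a fully rigorous treatment would have to be careful about how agent 2's own strategic responses interact with agent 1's; the skew of the utilitarian optimum in this situation makes the deviation robust enough that such bookkeeping can be made to go through.
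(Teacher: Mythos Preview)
Your approach is the same two-situation indistinguishability idea the paper uses: exhibit two profiles differing only in one agent's type, note the utilitarian optima differ, and observe that the variable agent prefers the ``wrong'' optimum and can force it by misreporting. The paper does this with three agents (agents 1 and 2 fixed on $[0,0.5]$ and $[0.5,1]$, agent 3 varying between $[0,1]$ and $[0.4,0.6]$); your two-agent construction is structurally identical and arguably tidier.

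There is, however, a genuine gap in your concluding step. You write that the true optimum ``by uniqueness, is attained only from the sincere report profile.'' That does not follow: uniqueness of the optimal \emph{allocation} says nothing about which \emph{report profiles} lead $M$ to output it; $M$ could in principle produce $([0,\epsilon],[\epsilon,1])$ from many non-truthful inputs. The correct way to close the argument is via payoffs: since agent 2's type is identical in your two situations, agent 2 plays the same strategy in both, so in situation $A$ agent 1 can secure true utility at least $1-\epsilon-\delta$ by mimicking the situation-$B$ report. Hence agent 1's payoff under any strategic outcome is at least $1-\epsilon-\delta>\epsilon$, which is incompatible with $M$ outputting the true optimum (where agent 1 gets exactly $\epsilon$). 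The paper avoids this wrinkle by phrasing the conclusion as ``the mechanism cannot distinguish the two situations and so produces a suboptimal allocation in one of them,'' which sidesteps any direct reasoning about equilibrium report profiles. Your final paragraph about Robertson-Webb and moving-knife protocols is unnecessary here; the claim is about arbitrary mechanisms, and the revelation-style deviation argument already covers them.
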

\begin{proof}
Consider two situations with three agents with piecewise uniform preferences. In
the first agent 1 values $[0,0.5]$, agent 2 $[0.5,1]$, agent 3 $[0,1]$. In the
second agent 1 values $[0,0.5]$, agent 2 $[0.5,1]$, agent 3 $[0.4,0.6]$.

Observe that in the first situation the unique utilitarian optimal allocation is\\
$A_1=([0,0.5],[0.5,1],\emptyset)$ while in the second
$A_2=([0,0.4],[0.6,1],[0.4,0.6])$.

Suppose a mechanism, given the second situation, produces $A_2$. When faced with
the first situation, agents 1 and 2 have the same preferences as before, and as
such will respond to the mechanism in the same manner. Agent 3 has different
preferences, but that information is not available to the mechanism. As agent 3
derives more utility from $A_2$ than $A_1$, they can pretend to value
$[0.4,0.6]$ instead of $[0,1]$ and the mechanism would be unable to distinguish
between the two situations and would produce a suboptimal allocation in one of
the cases.
\end{proof}

We can also show that there exists no mechanism that always attains a greater or
equal utilitarian efficiency than any other mechanism. This suggests that a
better candidate for the ``best possible" mechanism may be one that is never
dominated, rather than one that always dominates - we return to this in
Section \ref{strchar}.

\begin{claim}
There is no mechanism that in every cake cutting situation produces an
allocation with utilitarian efficiency greater or equal to that produced by any
other mechanism, if the agents are strategic.
\end{claim}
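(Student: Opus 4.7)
The plan is to reduce directly to the previous claim by exhibiting, for any candidate dominating mechanism $M$, a very simple ``adversary" mechanism $M'$ that beats it on a carefully chosen situation. First I would assume for contradiction that some $M$ always attains utilitarian efficiency at least that of any other mechanism, across every cake cutting situation, under strategic play. By the previous claim, there is at least one situation $S$ on which $M$ fails to produce a utilitarian optimal allocation; in particular $\mathcal{UE}(M(S)) < \mathcal{UE}(A^*)$, where $A^*$ denotes a utilitarian optimal allocation for $S$ (whose existence, in the strategic-proof-sketch above, is guaranteed since the constructed situation has piecewise uniform preferences, so Proposition~\ref{plexistence} applies).

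Next I would define $M'$ to be the constant mechanism that ignores the agents' inputs and always outputs $A^*$. Nothing in the paper's loose definition of a mechanism forbids this: a mechanism is simply some procedure that maps agent behaviour to an allocation, and a constant map qualifies. Because $M'$ is input-independent, strategic behaviour is irrelevant and $M'(S) = A^*$ regardless of what the agents do. Hence $\mathcal{UE}(M'(S)) = \mathcal{UE}(A^*) > \mathcal{UE}(M(S))$, contradicting the assumption that $M$ dominates every mechanism on every situation.

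The main obstacle, as I see it, is being careful about what ``produces an allocation" means in the presence of strategic agents: one must compare $\mathcal{UE}$ of whatever $M$ actually outputs when the agents play strategically against $\mathcal{UE}$ of $M'$'s output. The constant-mechanism device sidesteps this cleanly, since no strategic deviation can alter its output, so the only remaining worry is that the $S$ supplied by the previous claim really is a situation where $M$'s strategic-play outcome is strictly suboptimal; this is exactly what the previous claim establishes, so the reduction goes through.
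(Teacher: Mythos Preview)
Your proposal is correct and follows essentially the same approach as the paper: both use constant mechanisms (which are immune to strategic manipulation) together with the previous claim to derive the contradiction. The only cosmetic difference is ordering---the paper first observes that the family of all constant mechanisms forces a putative dominating mechanism to be utilitarian optimal everywhere and then invokes the previous claim, whereas you first invoke the previous claim to locate a bad situation $S$ and then exhibit a single constant mechanism beating $M$ there.
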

\begin{proof}
Take an arbitrary allocation, $A$, and consider the mechanism that always
allocated $A$, regardless of the situation it is in. If $A$ is  non-empty, we
can
always find a cake cutting situation in which $A$ is actually utilitarian
optimal, so the mechanism will produce a utilitarian optimal allocation in at
least one situation.

Clearly we can define such a mechanism for every possible allocation. If there
existed a mechanism that did at least as well as all of these, it would
necessarily produce a utilitarian optimal allocation in any cake cutting
situation, but if the agents are strategic this is impossible.
\end{proof}

\section{Egalitarian Efficiency}

Recall that egalitarian efficiency was defined in \cite{Aumann10} as:
\begin{align}
\mathcal{EE}(A)=\min_{i} u_i(A_i)
\end{align}

As with utilitarian efficiency, we can define an egalitarian optimal allocation
as one which maximises this value and likewise prove bounds on it.

\begin{proposition}
The egalitarian efficiency of an egalitarian optimal allocation is bounded above
by $1$, below by $1/n$, and these bounds are tight.
\end{proposition}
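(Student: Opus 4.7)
The plan is to handle the two bounds separately and then exhibit a witness for each.

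For the upper bound, I would simply invoke normalisation: for any allocation $A$ and any $i$, $u_i(A_i) \leq u_i([0,1]) = 1$, so $\min_i u_i(A_i) \leq 1$. To see tightness, I would reuse the disjoint-preferences construction from Proposition \ref{boundue}: let agent $i$ have piecewise uniform preferences valuing only $[(i-1)/n, i/n]$, and allocate that interval to $i$. Then every $u_i(A_i) = 1$, so $\mathcal{EE}(A) = 1$, and any egalitarian optimal allocation in this situation attains the upper bound.

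For the lower bound, the key observation is that an egalitarian optimal allocation is at least as good (in the egalitarian sense) as any proportional allocation. Since a proportional allocation exists in any cake cutting situation (witnessed for instance by Last Diminisher, Mechanism \ref{Last Diminisher}), and a proportional allocation $A$ satisfies $u_i(A_i) \geq 1/n$ for all $i$, we get $\mathcal{EE}(A) \geq 1/n$, whence the egalitarian optimum is at least $1/n$.

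For tightness of the lower bound, I would take a cake cutting situation in which every agent values the cake identically and uniformly, so that $u_i(X) = |X|$ for all $i$. Then for any allocation $A$, since the portions $A_1, \dots, A_n$ are (essentially) disjoint sub-intervals of $[0,1]$, we have $\sum_i |A_i| \leq 1$, and by pigeonhole some $A_j$ must satisfy $|A_j| \leq 1/n$, so $\mathcal{EE}(A) \leq 1/n$. The even division $A_i = [(i-1)/n, i/n]$ achieves $\mathcal{EE}(A) = 1/n$, so the egalitarian optimum here is exactly $1/n$.

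No part of this argument looks hard; the only point that requires a sentence of justification rather than a formula is the lower bound, which depends on the fact that proportional allocations always exist, but this is already established by the mechanisms discussed in Section \ref{litor}. I would simply cite Mechanism \ref{Last Diminisher} for that step.
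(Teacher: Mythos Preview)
Your proof is correct and covers all four parts (two bounds, two tightness witnesses) cleanly. The structure matches the paper's, and your tightness examples are identical to those the paper uses.

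The one point of genuine difference is the technical device. The paper establishes the inequality $n\cdot\mathcal{EE}(A)\leq\mathcal{UE}(A)$ and routes both the upper bound and the tightness of the lower bound through it together with Proposition~\ref{boundue}: the upper bound follows because $\mathcal{UE}$ is bounded by $n$, and tightness of $1/n$ follows because in the all-uniform situation the utilitarian optimum is exactly $1$. You bypass this inequality entirely, arguing the upper bound directly from normalisation and the lower-bound tightness directly by pigeonhole on portion lengths. Your arguments are more elementary and self-contained; the paper's approach has the virtue of making explicit the quantitative link between the two efficiency notions, which it then comments on immediately after the proof. Either route is perfectly adequate here.
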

\begin{proof}
It is easy to see that for any allocation $A$,
\begin{align}
n\cdot\mathcal{EE}(A)\leq\mathcal{UE}(A)
\end{align}

In particular, if $A$ is utilitarian optimal then for any $B$,
\begin{align}\label{eeue}
n\cdot\mathcal{EE}(B)\leq\mathcal{UE}(A)
\end{align}
as otherwise the utilitarian efficiency of $B$ would have been higher than of
$A$.

With Proposition \ref{boundue}, this immediately gives us the upper bound. To
see that it is tight, consider again the case of pairwise disjoint piecewise
uniform preferences.

For the lower bound, we can create an allocation with egalitarian efficiency of
at least $1/n$ by running Last Diminisher or any other proportional mechanism.
To see that it is tight, we invoke \eqref{eeue} and Proposition \ref{boundue}.
\end{proof}

In both situations used in the proof, we in fact had a stronger relation than
that of \eqref{eeue}. The egalitarian efficiency of the egalitarian optimum was
equal to $1/n$ of the utilitarian efficiency of the utilitarian optimum. One
may ask if this is always the case. The answer is no.

\begin{example}
Consider a cake cutting situation with piecewise uniform preferences with three
agents where
agents 1 values $[0,0.5]$, agent 2 values $[0.5,1]$ and agent 3 values the cake
uniformly. The unique utilitarian optimal allocation is
$([0,0.5],[0.5,1],\emptyset)$, with utilitarian efficiency of 2. The egalitarian
optimum, however, is $([0,0.25],[0.75,1],[0.25,0.75])$ with egalitarian
efficiency of only $1/2$.
\end{example}

\subsection{Non-Existence of Mechanisms}

Observe that to attain an egalitarian efficiency higher than $1/n$ is to give
every agent a portion they value at more than $1/n$ of the cake. This coincides
with an equity criterion examined in \cite{Dubins61}.

\pagebreak

\begin{definition}
An allocation $A$ is \emph{super proportional} if $u_i(A_i)>1/n$ for all $i$.
\end{definition}

\cite{Dubins61} prove the existence of super proportional allocations, provided
at least two agents have different utility functions. However \cite{Mossel10}
present an impossibility result for attaining such allocations.

\begin{proposition}[\cite{Mossel10}]
There is no mechanism that produces a super proportional allocation in every
cake cutting situation if the agents are strategic.
\end{proposition}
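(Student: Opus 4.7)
My plan is to exhibit a single cake cutting situation in which no super proportional allocation exists at all; this forces every mechanism, whether operating strategically or sincerely, to fail to produce a super proportional allocation in that situation, and thereby yields the proposition a fortiori in the strategic case stated.

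The construction is implicit in the Dubins--Spanier existence hypothesis cited immediately above: take $n$ agents all endowed with the same utility function $u$, for concreteness the uniform one with $t_i(x)=1$ for every $i$. For any allocation $A=(A_1,\dots,A_n)$ the portions $A_i$ are pairwise disjoint up to boundary points of measure zero, so countable additivity, non-negativity, and normalisation of $u$ give
\[
\sum_{i=1}^{n} u(A_i) \;=\; u\!\left(\bigcup_{i=1}^n A_i\right) \;\le\; u([0,1]) \;=\; 1.
\]
By averaging, $u(A_j) \le 1/n$ for at least one agent $j$, and since super proportionality demands the strict inequality $u_i(A_i) > 1/n$ for every $i$, no allocation in this situation is super proportional.

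Because the impossibility lives at the level of the cake cutting situation rather than at the mechanism, the conclusion is independent of how the agents behave: no strategy profile, whatever the mechanism's protocol and whatever the agents' responses, can induce a super proportional output in this situation. This suffices to prove the claim. The strategic hypothesis in the statement plays no additional essential role in this argument and is present mainly for parallelism with the impossibility results in Section \ref{effut}; the ``hard'' direction is already discharged by the Dubins--Spanier observation that a necessary condition for a super proportional allocation to exist is that at least two agents have distinct utility functions.

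The only point where one would have to do more work is if one insisted on an argument that genuinely invokes strategy, mirroring the two-situation mimicking proofs in Section \ref{effut}. There one would fix a situation in which a super proportional allocation does exist, let a strategic agent report another agent's preferences to collapse the input that the mechanism observes, and argue that at least one of the two resulting scenarios forces the mechanism to output a non-super-proportional allocation. That route would be the main obstacle, because super proportional allocations form an open condition and pinning down a contradiction from mimicking alone is fiddlier than in the utilitarian case. The uniform-preferences construction above sidesteps this entirely and is the most direct route to the stated result.
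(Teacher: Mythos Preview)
Your argument is correct for the proposition as literally stated: the situation with all agents sharing the uniform utility admits no super proportional allocation whatsoever, so no mechanism of any kind can produce one there. The paper supplies no proof of its own for this proposition---it simply attributes the result to \cite{Mossel10}---so there is nothing in the text to compare against directly.

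Your diagnosis that the strategic hypothesis is doing no work in your argument is accurate, and worth flagging. The result actually proved in \cite{Mossel10} is the stronger one you sketch in your final paragraph: even restricting to situations in which a super proportional allocation \emph{exists} (equivalently, at least two agents have distinct utilities), no truthful mechanism can always deliver one. That stronger form is what the subsequent corollary on egalitarian optima tacitly relies on; with your identical-preferences witness the egalitarian optimum is exactly $1/n$ for everyone and is \emph{not} super proportional, so the corollary's one-line deduction would not go through from your version alone. None of this is a defect in your proof of the proposition as written, but it explains why the strategic clause is present and why the mimicking route you outline is the one \cite{Mossel10} actually takes.
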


\begin{corollary}
There is no mechanism that produces an egalitarian optimal allocation in every
cake cutting situation if the agents are strategic.
\end{corollary}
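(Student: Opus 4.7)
The plan is to deduce the corollary as a direct consequence of the immediately preceding Proposition of \cite{Mossel10}. The bridging observation is that whenever a super proportional allocation is available in a cake cutting situation, the egalitarian optimum in that situation must itself be super proportional. Indeed, if $B$ is super proportional then $\mathcal{EE}(B) > 1/n$, and any egalitarian optimum $A$ satisfies $\mathcal{EE}(A) \geq \mathcal{EE}(B) > 1/n$, which is just the defining condition of super proportionality.

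Given this, the argument is a short contrapositive. Suppose, for contradiction, that some mechanism $M$ returns an egalitarian optimal allocation in every cake cutting situation, even when agents play strategically. By the result of \cite{Dubins61} cited above, super proportional allocations exist in every situation in which at least two agents have distinct utility functions. Applying the bridging observation to each such situation, the allocation returned by $M$ is super proportional whenever a super proportional allocation exists. This exhibits $M$ as a mechanism that produces super proportional allocations in every situation where the preceding Proposition asserts impossibility, yielding the desired contradiction.

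There is not really a hard step here; the entire content is the reduction. The only point that requires any care is making sure the quantifier in the preceding Proposition is being used on the correct class of situations, namely those in which super proportional allocations are known to exist, so that the impossibility transfers without gap. Once that is pinned down, the proof is essentially a one line invocation of \textbf{Proposition} (Mossel) via the observation $\mathcal{EE}(A) > 1/n \iff A$ is super proportional.
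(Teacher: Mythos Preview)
Your proof is correct and follows the same route as the paper: in any situation where at least two agents' utilities differ, a super proportional allocation exists, so the egalitarian optimum there is itself super proportional, and hence an egalitarian-optimal mechanism would violate the preceding Proposition of \cite{Mossel10}. You are in fact slightly more explicit than the paper in spelling out the bridging inequality $\mathcal{EE}(A)\geq\mathcal{EE}(B)>1/n$, but the argument is otherwise identical.
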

\begin{proof}
If a situation has two agents with different utility functions, a super
proportional allocation exists. Any egalitarian optimum, therefore, must be
super proportional.
\end{proof}

\chapter{Strategic Cake Cutting}\label{str}

Motivated by the existence result of Proposition \ref{plexistence}, we ask
whether we can
construct mechanisms to find utilitarian optimal allocations in the simplest
case - that of piecewise uniform preferences. It turns out that as far as
na\"{i}ve mechanisms go the problem is trivial, which leads us to consider
strategic behaviour.

\section{Two Non-Wasteful Mechanisms}

As we saw in Claim \ref{Eff Chain}, a necessary condition for an allocation
being utilitarian optimal is it being non-wasteful. As such given Claim \ref{RW
wasteful} we can exclude Robertson-Webb protocols from consideration, and in the
interests of keeping our mechanisms conventionally computable we will also
exclude moving knife protocols. This leaves us with revelation protocols, which
suits us well as given our definition of piecewise uniform preferences we are
guaranteed to have a finite representation: an agent need only submit the end
points of every interval they value.

Since for the duration of this section we restrict our attention to piecewise
uniform preferences, we no longer need the generality of our previous definition
of a utility function. We will denote by $P_i$ the preferences of agent $i$:
that is, the union of all the slices agent $i$ values. A cake cutting situation
with piecewise uniform then is completely specified by the $n$-tuple
$(P_1,...,P_n)$. A mechanism will be a
function $(S_1,...,S_n)\mapsto (A_1,...,A_n)$, where $S_i$ is the
\emph{strategy} of agent $i$: a union of slices, not necessarily equal to $P_i$.
The utility an agent attains from an allocation can be seen to be just:
\begin{align}\label{PU utility}
u_i(A_i)=\frac{|A_i\cap P_i|}{|P_i|}
\end{align}

Non-wastefulness is not an onerous condition, but it is sufficient for the most
obvious of trivial solutions - give all the cake to one agent - to fail. If the
agent does not value the entire cake, there may be waste where they receive some
slice from which they derive no utility, but some other agent would have. We
need to be a bit more sophisticated, but not by much.

\pagebreak

\begin{mechanism}[Lex Order]\label{Lex Order}
Form a linear order, $\prec$, over the agents. Allocate every agent $i$:
\begin{align}
A_i=S_i\backslash\bigcup_{j\prec i}S_j
\end{align}
\end{mechanism}

In other words, Lex Order simply gives every agent what they asked for, minus
what was already given to agents coming earlier in the order. It's easy to see
that Lex Order is a truthful mechanism: agents' payoffs are determined solely by
the order, which is exogenous. Submitting some $S_i$ where $|P_i\backslash
S_i|\neq 0$ would certainly not help agent $i$: it will only reduce the chance
of them getting some of the cake that they value. Submitting some $S_i$ where
$|S_i\backslash P_i|\neq 0$ likewise cannot increase their utility. However in
this case it cannot decrease it either. Being truthful in this respect is only
weakly dominant. Unfortunately pretending to value some of the cake that one
does not
can potentially harm the welfare of other agents. If the first agent in the
order claims to value $[0,1]$ while they only value $[0,0.1]$ Lex order may no
longer be non-wasteful. This leads us to define a behavioural restriction.

\begin{definition}
Agents are said to be \emph{well behaved} if $S_i\subseteq P_i$ for all $i$.
\end{definition}

We will generally assume agents are well behaved. This involves the implicit
assumption that ceteris paribus, agents have a bias in favour of truthfulness
and derive no misanthropic pleasure from causing harm to others. If need be,
this behaviour could be enforced by, for instance, imposing an $\epsilon$ cost
on the length of an agent's strategy. If $\epsilon$ is small enough it should
not deter the agent from choosing a strongly dominant strategy if one exists,
but faced with multiple equivalent strategies will choose the smallest -
which would be in accord with well behavedness.

We observe that if one had some prior knowledge of the agents' preferences, one
could easily construct $\prec$ such that Lex Order would produce a utilitarian
optimal allocation: simply order the agents by the length of their preferences.
The question, then, is how one should behave in the absence of such information.
The obvious approach would be to ask the agents, and that is what we will
consider.

\section{Length Game and its Equilibria}

We modify Lex Order to construct $\prec$ based on the lengths of the agents'
strategies.

\begin{mechanism}[Length Game]\label{Length Game}
Form a linear order $\prec$, over the agents such
that if $|S_i|<|S_j|$, $i\prec
j$. If $|S_i|=|S_j|$ order the two in any order. Allocate agent $i$:
\begin{align}
A_i=S_i\backslash\bigcup_{j\prec i}S_j
\end{align}
\end{mechanism}

\pagebreak

\begin{proposition}
With sincere agents, Length Game produces a utilitarian optimal allocation.
\end{proposition}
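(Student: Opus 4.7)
My plan is to reduce the claim to a pointwise ``best agent'' argument using the structure of piecewise uniform densities, and then invoke Theorem \ref{uoexistence}. Recall that on the support of $P_i$ the density is $t_i \equiv 1/|P_i|$, so the marginal utility of a point $x$ to agent $i$ depends only on whether $x \in P_i$ and, if so, is monotone decreasing in $|P_i|$. Sincere play gives $S_i = P_i$, so the ordering $\prec$ built by Length Game is exactly the sort of the agents by increasing $|P_i|$ (with an arbitrary tie-break).

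First I would partition the cake into the finitely many maximal intervals on which the set $V(x) := \{i : x \in P_i\}$ is constant; this is finite because each $P_i$ is a finite union of intervals. On any such slice $S$ with $V(S) \neq \emptyset$, every $t_i$ is constant and the maximum of $\{t_i : i \in V(S)\}$ is attained by any $j^* \in V(S)$ minimizing $|P_{j^*}|$. Consequently, for every sub-interval $S' \subseteq S$ and every $k$, $u_{j^*}(S') \geq u_k(S')$. This exhibits precisely the decomposition required by Theorem \ref{uoexistence}, so the allocation that assigns each slice $S$ to such a minimizer is utilitarian optimal (slices with $V(S) = \emptyset$ may be disposed of arbitrarily, since they contribute $0$ to $\mathcal{UE}$).

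Next I would verify that Length Game yields exactly such an allocation. Fix a slice $S$ with $V(S) \neq \emptyset$ and let $j^*$ be the first element of $V(S)$ in $\prec$; by construction $j^*$ minimizes $|P_{j^*}|$ over $V(S)$. Every agent $j \prec j^*$ lies outside $V(S)$, so $S \cap S_j = S \cap P_j = \emptyset$, and no point of $S$ is removed from $j^*$'s share in earlier rounds. Since $S \subseteq P_{j^*} = S_{j^*}$, the whole of $S$ enters $A_{j^*}$. Thus each slice goes to an agent of maximal per-point valuation, so Theorem \ref{uoexistence} concludes optimality.

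The only subtlety, hardly an obstacle, is the role of ties in $\prec$: when several agents in $V(S)$ share the minimal $|P_j|$, the identity of the one appearing first is arbitrary, but all such candidates yield the same per-unit utility $1/|P_j|$ on $S$, so $\mathcal{UE}(A)$ is unaffected. Equation \eqref{PU utility} and the disjointness of the $A_i$ guarantee that summing the contributions slice by slice recovers the utilitarian efficiency, matching the optimum identified above.
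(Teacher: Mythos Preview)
Your argument is correct. The route differs from the paper's: the paper gives a direct exchange argument, taking an arbitrary alternative allocation $A'$, picking an interval $I$ that lies in $A_i$ but in $A'_j$, observing that the Length Game order forces $|P_i|\le|P_j|$, and concluding that the contribution of $I$ to $\mathcal{UE}$ cannot increase. You instead first partition the cake into slices on which the valuing set $V(x)$ is constant, invoke Theorem \ref{uoexistence} to certify that assigning each slice to a minimiser of $|P_j|$ over $V$ is optimal, and then check that Length Game does exactly this. The underlying insight---that density $1/|P_i|$ makes the shortest-support agent the pointwise best recipient---is the same in both. Your version buys you more rigour (the paper's exchange step tacitly assumes $I\subseteq P_j$ and only treats a single moved interval), and it reuses existing machinery rather than re-proving optimality from scratch; the paper's version is shorter but sketchier.
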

\begin{proof}
Given situations $P$, let $A$ be the allocation produced by Length Game and $A'$
a different allocation. That is, there must be some interval $I\subseteq A_i$
such that $I\subseteq A'_j$ for $i\neq j$. Observe that $i\prec j$, otherwise
$I$ would have been in $A_j$. This means $|P_i|\leq|P_j|$. The efficiency gained
from $I$ in both cases is then:
\begin{align}
\frac{|I\cap P_i|}{|P_i|}\leq\frac{|I\cap P_j|}{|P_j|}
\end{align}
because $|I\cap P_i|=|I\cap P_j|=|I|$. As such $A'$ cannot improve on the
utilitarian efficiency of $A$.
\end{proof}

The more interesting question, of course, is what happens if the agents
strategise. That is, we want to find the equilibria of Length Game. To proceed
we need some more terminology. An $n$-tuple of strategies, $(S_1,...,S_n)$,
submitted to Length Game as input we shall call a \emph{profile}. The region of
the cake valued by agent $i$ and only agent $i$, $P_i\backslash\bigcup P_j$, is
agent $i$'s \emph{uncontested} region. Given a profile $(S_1,...,S_n)$ and the
resulting allocation $(A_1,...,A_n)$, if $S_i=A_i$ for all $i$, we say the
profile is \emph{reduced}.

We use the
standard notion of a pure strategy equilibrium: resistance to deviation by a
single agent.

\begin{definition}
Let $S=(S_1,...,S_n)$ be a profile and $A=(A_1,...,A_n)$ the resulting
allocation. We say that $S$ is in \emph{equilibrium} if there is no $i$ for
which there exists a $S'_i$ such that the profile $S'=(S_1,...,S'_i,...,S_n)$
produces an allocation $(A'_1,...,A'_n)$ where $u_i(A'_i)>u_i(A_i))$.

An allocation is an equilibrium if it is produced by an equilibrium profile.
\end{definition}

Reduced profiles are convenient because they simplify the strategic
considerations of the agents. If an agent sees a region of cake they could get
by claiming it, they should claim it, without paying heed to the other agents.

\begin{example}
Consider three agents, $P_1=[0,0.5]$, $P_2=[0,0.6]$, $P_3=[0.5,1]$. Consider the
non-reduced profile, $([0,0.4],[0.1,0.5],[0.5,1])$, where the tie is broken in
favour of agent 2. That is, the allocation is
$([0,0.1],[0.1,0.5],[0.5,1])$. It is not entirely clear what agent 2 should do.
They could claim some of $[0.5,0.6]$ and get it allocated to them instead of 3,
but by doing so they will lose their tie with 1 and the cake associated with it.

On the other hand, suppose the profile is $([0,0.1],[0.1,0.5],[0.5,1])$. Now
agent 2 has a clear incentive to claim $[0.5,0.6]$ as there can be no loss in
utility from doing so.
\end{example}

As such we would like to restrict our attention to reduced
profiles. We need two lemmata to show that there is no loss of generality in
doing so.

\pagebreak

\begin{lemma}
Given any profile of Length Game, there exists a reduced profile producing the
same allocation.
\end{lemma}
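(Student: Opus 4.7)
The natural approach is to build the reduced profile directly out of the allocation itself: given a profile $S=(S_1,\ldots,S_n)$ producing allocation $A=(A_1,\ldots,A_n)$, I would set $S'_i := A_i$ and show that $S'=(S'_1,\ldots,S'_n)$ is reduced and produces the same allocation.

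The first observation is that, by the construction of Length Game, the allocated portions $A_1,\ldots,A_n$ are pairwise disjoint: each $A_j$ is a subset of $S_j$ stripped of everything claimed by earlier agents in $\prec$, so no cake is assigned twice. This disjointness is the engine of the whole argument. When I now feed $S'$ into Length Game, the mechanism picks some ordering $\prec'$ according to $|S'_i|=|A_i|$ (with ties broken arbitrarily) and outputs
\begin{align}
A'_i = S'_i \setminus \bigcup_{j\prec' i} S'_j = A_i \setminus \bigcup_{j\prec' i} A_j.
\end{align}
By disjointness of the $A_j$, the subtracted set is disjoint from $A_i$, so $A'_i = A_i$, which gives the desired equality of allocations and at the same time shows $S'_i = A_i = A'_i$, i.e.\ that $S'$ is reduced.

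The only subtlety worth flagging is that the ordering $\prec'$ generated by the mechanism on input $S'$ need not coincide with the original $\prec$, since the lengths $|A_i|$ can differ from $|S_i|$ and ties may be broken differently. The argument above, however, is insensitive to this: disjointness of the $A_j$ means that any linear order $\prec'$ consistent with $|A_i|$ yields the same output. There is therefore no real obstacle; the main content of the lemma is simply the observation that one can always ``trim'' each agent's claim down to what they actually received without changing anything downstream.
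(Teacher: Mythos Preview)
Your proof is correct and follows essentially the same approach as the paper: set $S'_i:=A_i$ and use disjointness of the allocated portions to conclude that Length Game on $S'$ returns $A$ again regardless of the order chosen. If anything, you are slightly more explicit than the paper in noting that the ordering $\prec'$ induced by $S'$ need not agree with the original $\prec$, and that this does not matter.
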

\begin{proof}
Let $S=(S_1,...,S_n)$ be a profile and $A=(A_1,...,A_n)$ the resulting
allocation. Construct $S'=(S'_1,...,S'_n)$ such that $S'_i=A_i$.

As portions are disjoint, so must be all $S'_i$. Thus regardless of the order
constructed by Length Game, given $S'$ the mechanism would simply allocate
$S'_i$ to agent $i$, as $|S'_i\backslash S'_j|=0$ for all $j$. Then $S'$ is a
reduced
profile producing $A$.
\end{proof}

\begin{lemma}
A Length Game profile is in equilibrium only if the associated reduced profile
is in equilibrium.
\end{lemma}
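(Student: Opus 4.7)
The plan is to prove the contrapositive: assuming the reduced profile $S'$ admits a profitable unilateral deviation, I exhibit one for the original profile $S$. Let agent $i$ deviate from $S'$ to strategy $T_i$, obtaining allocation $B_i$ with $u_i(B_i) > u_i(A_i)$. I claim that deviating to the same $T_i$ from $S$ yields an allocation $B'_i$ for $i$ with $u_i(B'_i) \geq u_i(B_i)$, hence strictly better than $u_i(A_i)$, exhibiting the desired profitable deviation in $S$.

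By the piecewise uniform utility formula \eqref{PU utility}, this reduces to $|B'_i \cap P_i| \geq |B_i \cap P_i|$, which I establish via the pointwise inclusion that every $x \in T_i \cap P_i$ allocated to $i$ by the reduced deviation is also allocated to $i$ by the original deviation. I argue the contrapositive of this pointwise claim. Fix $x \in T_i \cap P_i$ with $x \notin B'_i$: then $x \in S_j$ for some agent $j$ placed before $i$ in the length-ordering against $S_{-i}$, so $|S_j| \leq |T_i|$. Since $j$ claims $x$, the undeviated original profile allocates $x$ to some $k$ with $x \in A_k \subseteq S_k$, namely the earliest agent in the original length-ordering whose strategy contains $x$. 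Because $j$ also claims $x$, we have $|S_k| \leq |S_j|$. Chaining, $|A_k| \leq |S_k| \leq |S_j| \leq |T_i|$, placing $k$ before $i$ in the length-ordering against $A_{-i}$, so $x \in A_k$ is taken away from $i$ by the reduced deviation as well.

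The main obstacle is the consistent treatment of length ties. The chain of inequalities above may collapse to equalities, and the mechanism's prerogative to break length ties arbitrarily threatens the final placement of $k$ relative to $i$. This is resolved by fixing a single deterministic tiebreaking rule (for instance by agent index) that both the original and reduced orderings inherit; whenever the tiebreak places $j$ before $i$ in the original deviation and $k$ precedes $j$ under the same rule, the same rule places $k$ before $i$ in the reduced deviation by transitivity, which keeps the argument above intact in the degenerate case.
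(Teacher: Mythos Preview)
Your central claim---that deviating to the same $T_i$ against $S_{-i}$ yields at least as much for $i$ as against the reduced profile---is false, and the gap is precisely the case $k=i$ that your pointwise argument silently skips. If $x\in A_i$, then in the reduced deviation no opponent claims $x$ (the $A_j$ are pairwise disjoint), so $i$ keeps $x$; but in the original deviation some $j$ with $x\in S_j$ and $|S_j|\le |T_i|$ may now precede $i$ and take $x$. Your chain ``$|A_k|\le|S_k|\le|S_j|\le|T_i|$, placing $k$ before $i$'' is vacuous when $k=i$, since in the reduced deviation agent $i$ plays $T_i$, not $A_i$.

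Concretely: two agents, $P_1=P_2=[0,1]$, $S_1=[0,0.3]$, $S_2=[0,0.4]$, so $A_1=[0,0.3]$, $A_2=[0.3,0.4]$ and the reduced profile is $([0,0.3],[0.3,0.4])$. The deviation $T_1=[0,0.5]$ is profitable against the reduced profile: agent $2$ (length $0.1$) goes first, and agent $1$ receives $B_1=[0,0.3]\cup[0.4,0.5]$ of length $0.4>0.3$. But the same $T_1$ against $S_2=[0,0.4]$ gives agent $2$ (length $0.4$) priority over all of $[0,0.4]$, leaving $B'_1=[0.4,0.5]$ of length $0.1<0.3$. So $u_1(B'_1)<u_1(A_1)<u_1(B_1)$, and the same deviation is \emph{not} profitable in $S$.

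The paper's proof sidesteps this by looking only at an interval $I\subseteq A^*_i$ with $I\nsubseteq A_i$, i.e.\ cake that $i$ newly acquires in the reduced deviation. For such $I$ the owner in $A$ is necessarily some $j\neq i$ (or no one), so the analogue of your $k$ cannot be $i$, and the bound $|M_i|\le|S'_j|\le|S_j|$ then shows $i$ still wins $I$ against $S_{-i}$. To repair your route you must separate the points of $T_i\cap P_i$ into those already in $A_i$ and those newly gained; the monotonicity you assert holds only for the latter.
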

\begin{proof}
Let $S$ be a Length Game profile and $S'$ the associated reduced profile.
Observe that $|S_i|\geq |S'_i|$ for all $i$.

Suppose $S'$ is not an equilibrium profile. Then some agent $i$ has an available
strategy, $M_i$, such that $u_i(A^*_i)>u_i(A_i)$, where $A^*$ is the allocation
produced by $(S'_1,...,M_i,...,S'_n)$.

As $u_i(A^*_i)>u_i(A_i)$ there must be some sub-interval $I\subseteq A^*_i$,
$I\nsubseteq A_i$. There are two cases to consider.

If $I$ is not allocated to anyone in $S$, then clearly agent $i$ has incentive
to claim $I$ in $S$, so $S$ is not an equilibrium profile.

If $I\subseteq A_j$, it follows that $I\subseteq S'_j$, so if $i$ is allocated
$I$ by playing $M_i$ it must be the case that $|M_i|\leq|S'_j|$. Since
$|S'_j|\leq |S_j|$, $|M_i|\leq|S_j|$, and $i$ has incentive to play $M_i$ in
$S$, so it is not an equilibrium profile.
\end{proof}

With reduced profiles, we can clearly see that an agent has no incentive to
leave their uncontested region unclaimed.

\begin{lemma}\label{uncontested lemma}
A reduced Length Game profile with well behaved agents is in equilibrium only if
every agent's strategy includes their uncontested region.
\end{lemma}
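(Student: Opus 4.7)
The plan is to argue by contrapositive: assume some agent $i$ in a reduced profile fails to claim their full uncontested region $U_i = P_i \setminus \bigcup_{j \neq i} P_j$, and then exhibit a profitable unilateral deviation. Writing $I = U_i \setminus S_i$ (a set of positive measure by hypothesis), the natural candidate deviation is $S'_i = S_i \cup I$.

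To verify that this deviation strictly increases $i$'s utility, I first establish a structural observation about reduced profiles: the strategies $S_1, \ldots, S_n$ are pairwise disjoint. This falls out immediately from the definition, since $S_j = A_j = S_j \setminus \bigcup_{k \prec j} S_k$ forces $S_j \cap S_k = \emptyset$ for every $k \prec j$, and for any $j \neq i$ exactly one of $i \prec j$ or $j \prec i$ holds.

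Second, I observe that $I$ is untouched by every other agent's strategy: by well-behavedness $S_j \subseteq P_j$, and since $I \subseteq U_i$ is disjoint from every $P_j$ with $j \neq i$, we have $I \cap S_j = \emptyset$. Combining the two observations, $S'_i$ is disjoint from $S_j$ for all $j \neq i$, so regardless of the new order $\prec'$ induced by the longer length $|S'_i| = |S_i| + |I|$, the mechanism returns $A'_i = S'_i \setminus \bigcup_{k \prec' i} S_k = S'_i$. Applying \eqref{PU utility}, the utility jumps from $|S_i|/|P_i|$ to $(|S_i| + |I|)/|P_i|$, so $i$ strictly prefers $S'_i$ to $S_i$, contradicting equilibrium.

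The main subtlety worth flagging is the worry that submitting a longer strategy moves $i$ backward in the order and thereby costs $i$ some of the cake they already held; the disjointness lemma for reduced profiles is exactly what dissolves this worry, because $i$'s new allocation is determined entirely by $S'_i$ itself and is immune to whoever ends up preceding $i$ under $\prec'$. Once those two set-theoretic facts are assembled, the rest of the argument is routine.
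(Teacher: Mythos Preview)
Your proof is correct and follows the same idea as the paper's one-line argument: since well-behaved agents other than $i$ never claim any part of $U_i$, agent $i$ can profitably add the unclaimed portion $I$ to their strategy. You have been more careful than the paper in explicitly dispatching the worry that lengthening $S_i$ might push $i$ later in $\prec$ and cost them cake; your disjointness observation for reduced profiles makes this rigorous, whereas the paper leaves it implicit.
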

\begin{proof}
Since agents are well behaved, no other agent would claim $i$'s uncontested
region, so $i$ can always increase their utility by claiming it.
\end{proof}

This gives us all the tools we need to characterise the equilibrium.

\begin{proposition}\label{LG equilibrium}
A well behaved reduced Length Game profile is in equilibrium if and only if:
\begin{enumerate}
\item
$\displaystyle\bigcup_{i=1}^n P_i\subseteq\bigcup_{i=1}^n S_i$ (All the valued
cake is allocated).
\item
Whenever $|S_i\cap P_j|\neq 0$ for $i\neq j$, $|S_i|\leq|S_j|$.
\end{enumerate}
\end{proposition}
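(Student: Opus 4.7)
The plan is to prove the two directions separately, resting on two structural observations about a reduced Length Game profile with well-behaved agents. First, the strategies $\{S_i\}$ are pairwise disjoint up to measure zero, since any overlap $S_i \cap S_j$ with $i \prec j$ would be subtracted from $A_j$, contradicting reducedness. Second, well-behavedness gives $S_i \subseteq P_i$, so by \eqref{PU utility} the current payoff of agent $i$ simplifies to $|S_i|/|P_i|$, reducing utility comparisons to length comparisons.

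For the forward ($\Rightarrow$) direction, I would argue contrapositively and exhibit a profitable deviation whenever either condition is violated. If condition 1 fails, there is an interval $I \subseteq P_j$ disjoint from every $S_k$; by pairwise disjointness, $S_k \cap (S_j \cup I) = \emptyset$ for all $k \neq j$, so after $j$ deviates to $S_j \cup I$ they receive the entire enlarged set and utility grows strictly. If condition 2 fails at a pair $i \neq j$ with $|S_i \cap P_j| > 0$ and $|S_i| > |S_j|$, pick a subslice $X \subseteq S_i \cap P_j$ with $0 < |X| < |S_i| - |S_j|$ and have $j$ deviate to $S_j \cup X$. The new length $|S_j|+|X|$ is still strictly below $|S_i|$, so $j$ now precedes $i$ in the order; pairwise disjointness again ensures that no other $S_k$ intersects $S_j \cup X$, so $j$'s portion is exactly $S_j \cup X$ and utility grows by $|X|/|P_j| > 0$.

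For the reverse ($\Leftarrow$) direction, I would fix an arbitrary deviation $M_i \subseteq P_i$ and show the resulting portion $A^*_i$ satisfies $|A^*_i| \leq |S_i|$. Condition 1 lets us partition $M_i = (M_i \cap S_i) \sqcup \bigsqcup_{k \neq i}(M_i \cap S_k)$. If $|M_i| \leq |S_i|$ the inclusion $A^*_i \subseteq M_i$ finishes the argument. If $|M_i| > |S_i|$, then for every $k \neq i$ with $M_i \cap S_k \neq \emptyset$ we have $S_k \cap P_i \neq \emptyset$, so condition 2 forces $|S_k| \leq |S_i| < |M_i|$; hence every such $k$ strictly precedes $i$ in the new order and each $M_i \cap S_k$ is stripped off. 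The residue is $A^*_i = M_i \cap S_i \subseteq S_i$, and either way the deviation fails to improve $i$'s utility.

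The main technical hurdle I anticipate is the bookkeeping around the new ordering $\prec'$ induced by $i$'s deviation. The forward argument depends on choosing the perturbation small enough to preserve the deviator's relative position, while the reverse argument leans on the strict inequality $|S_k| < |M_i|$ to side-step any dependence on tie-breaking. Letting pairwise disjointness of the reduced profile do the heavy lifting at every step is what keeps both halves short and avoids an intricate case split on the order.
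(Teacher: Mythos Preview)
Your argument is correct and, in the ``only if'' direction, essentially matches the paper's (indeed it is more careful than the paper, which simply asserts that $j$ can claim some of $S_i\cap P_j$ without specifying how little). In the ``if'' direction you take a genuinely different route: the paper argues by contradiction, isolating a single interval $I\subseteq A'_i\setminus A_i$, invoking the uncontested-region lemma to place $I$ in some $S_j$, and then showing $|S'_i|>|S_j|$ forces $j\prec' i$ so that $j$ retains $I$ after all. You instead bound the whole allocation at once by partitioning $M_i$ across the disjoint $S_k$ and using condition~2 to push every relevant $k$ ahead of $i$. Your approach is cleaner and makes the role of reducedness (pairwise disjointness of the $S_k$) explicit, whereas the paper's interval-chasing is shorter but leaves more to the reader.

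Two small points to tighten. First, you restrict the deviation to $M_i\subseteq P_i$, but the paper's equilibrium definition allows arbitrary strategies; you should note that this is without loss, since replacing $M_i$ by $M_i\cap P_i$ can only shorten $i$'s strategy (hence move $i$ weakly earlier in $\prec'$) while discarding nothing of value, so $u_i$ weakly increases. Second, in the condition-2-failure case your sentence ``no other $S_k$ intersects $S_j\cup X$'' is not literally true, since $S_i$ meets $X$; what you actually use, and have already established, is that the only such $k$ is $i$ and $|S_j|+|X|<|S_i|$ guarantees $j\prec' i$.
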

\begin{proof}
For the if direction, assume 1 and 2 hold. Assume, for contradiction, that the
profile is not in equilibrium: some agent $i$ by submitting $S'_i\neq S_i$ can
force the allocation $A'$, where $u_i(A'_i)>u_i(A_i)$. Thus there is some
$I\subseteq A'_i, I\nsubseteq A_i$. $I$ cannot be in $i$'s uncontested region,
by Lemma \ref{uncontested lemma} we know that $I\subset S_i$, and since the
profile is reduced $S_i=A_i$. Therefore $I\subseteq P_j\cap P_i$, $j\neq i$. We
can choose this $j$ such that $|A_j\cap I|\neq 0$: by 1 we know that $I$ is
allocated, and by well behavedness we know that it is allocated to agents that
value it. Since $A_j=S_j$, we have that $|S_j\cap P_i|\neq0$. By 2, we have that
$|S_j|\leq|S_i|$.

Consider $|S'_i|$. We know that $A'_i\subseteq S'_i$, and $u_i(A'_i)>u_i(A_i)$,
so with \eqref{PU utility} we can derive that:
\begin{align}
\frac{|A'_i\cap P_i|}{|P_i|}>\frac{|A_i\cap P_i|}{|P_i|}
\end{align}
hence $|A'_i|>|A_i|$, and $|S'_i|>|S_i|\geq|S_j|$. But if $|S'_i|>|S_j|$, $j$
will come before $i$ in $\prec$, and be allocated $I$. This gives the desired
contradiction.

For the only if direction, first assume 1 fails. Then there is some cake valued
by some $i$ that is not allocated to any $j$. Agent $i$ can claim that cake and
increase their utility, so the profile is not in equilibrium. Next, assume 2
fails. Then there exist some $i,j$ such that $|S_i\cap P_j|\neq 0$ but
$|S_i|>|S_j|$. Agent $j$ can claim some of $|S_i\cap P_j|$ and be allocated it
because
$|S_i|>|S_j|$, so the profile is not in equilibrium.
\end{proof}

Note that with reduced profiles, we can use $S_i$ and $A_i$ interchangeably:
that is, Proposition \ref{LG equilibrium} also states the requirements for an
equilibrium allocation.

Barring pairwise disjoint preferences, equilibria are non-unique. In fact, there
are infinitely many of them. However, from a utilitarian point of view this is
largely irrelevant as all well behaved equilibria are identical in terms of
payoffs.

\begin{definition}
Two allocations, $A$ and $B$, are \emph{utilitarian equivalent} if
$u_i(A_i)=u_i(B_i)$ for all $i$.
\end{definition}

Before we prove that all well behaved equilibria are utilitarian equivalent we
need an auxiliary result, which is interesting in its own right.

\begin{proposition}\label{LGPareto}
All well behaved equilibria of Length Game are Pareto efficient.
\end{proposition}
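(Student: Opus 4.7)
The plan is to derive a contradiction from the total amount of valued cake that a putative Pareto improvement would have to allocate.

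By the two lemmata preceding Proposition \ref{LG equilibrium}, I may assume that the equilibrium profile $(S_1,\ldots,S_n)$ producing $A$ is reduced, so $A_i = S_i$ for every $i$. Well behavedness gives $S_i \subseteq P_i$, and this passes unchanged to the reduced profile, so $A_i \subseteq P_i$ and in particular $|A_i \cap P_i| = |A_i|$ for every $i$. This identification is the only reason I need to pass to a reduced profile at the outset.

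Next, I combine condition 1 of Proposition \ref{LG equilibrium}, namely $\bigcup_i P_i \subseteq \bigcup_i S_i = \bigcup_i A_i$, with the reverse inclusion $\bigcup_i A_i \subseteq \bigcup_i P_i$ coming from well behavedness. Writing $C := \bigcup_i P_i$ and using disjointness of portions, this yields $\sum_i |A_i \cap P_i| = \sum_i |A_i| = |C|$. For any other allocation $B$, the sets $B_i \cap P_i$ are pairwise disjoint (because the $B_i$ are) and each is contained in $C$, so $\sum_i |B_i \cap P_i| \leq |C|$. Thus $A$ already saturates the natural ceiling on total effective measure.

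Suppose for contradiction that some allocation $B$ Pareto dominates $A$. From \eqref{PU utility} we have $u_i(X) = |X \cap P_i|/|P_i|$, so Pareto dominance becomes $|B_i \cap P_i| \geq |A_i \cap P_i|$ for all $i$ with strict inequality for at least one $i$. Summing gives $\sum_i |B_i \cap P_i| > \sum_i |A_i \cap P_i| = |C|$, which contradicts the ceiling just established. I do not anticipate a genuine obstacle in this argument; the crux is simply noticing that well behavedness plus condition 1 forces every allocated point of cake to be valued by its recipient, so there is no slack left for any agent to gain without another losing strictly.
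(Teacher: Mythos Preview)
Your proof is correct and follows essentially the same route as the paper's: both argue that the equilibrium allocation saturates the bound $\sum_i |A_i\cap P_i|\le |\bigcup_i P_i|$, and that a Pareto improvement would force the sum strictly above this ceiling. You are slightly more careful than the paper in spelling out why one may pass to a reduced profile and why well behavedness then yields $A_i\subseteq P_i$, but this is a matter of exposition rather than a different argument.
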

\begin{proof}
Let us first observe that we cannot allocate more cake than is available.
Equivalently, we cannot allocate more valued cake than we have. Thus for any
allocation $A$:
\begin{align}
\sum_{i=1}^n|A_i\cap P_i|\leq\left|\ \bigcup_{i=1}^n P_i\ \right|
\end{align}

If $A$ is a Length Game equilibrium, we can strengthen the above into an
equality; we have seen in Proposition \ref{LG equilibrium} that all the valued
cake is allocated.

Now, suppose that $A'$ is a Pareto improvement on $A$. That is, for all $i$,
$u_i(A'_i)\geq u_i(A_i)$ and for some $j$, $u_j(A'_j)\geq u_j(A_j)$. Invoking
\eqref{PU utility}:
\begin{align}
\frac{|A'_i\cap P_i|}{|P_i|}\geq\frac{|A_i\cap P_i|}{|P_i|}
\end{align}
for all $i$, and:
\begin{align}
\frac{|A'_j\cap P_j|}{|P_j|}>\frac{|A_j\cap P_j|}{|P_j|}
\end{align}
for some $j$.

Multiplying out the denominators this is equivalent to $|A'_i\cap
P_i|\geq|A_i\cap P_i|$ and $|A'_j\cap P_j|>|A_j\cap P_j|$, and thus:
\begin{align}
\sum_{i=1}^n|A'_i\cap P_i|>\sum_{i=1}^n|A_i\cap P_i|=\left|\ \bigcup_{i=1}^n
P_i\ \right|
\end{align}
which gives us an impossibility.
\end{proof}

\begin{proposition}\label{LGequivalence}
All well behaved equilibria of Length Game are utilitarian equivalent.
\end{proposition}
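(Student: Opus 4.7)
Since agents are well behaved we have $A_i, B_i \subseteq P_i$, so \eqref{PU utility} reduces $u_i(A_i)$ and $u_i(B_i)$ to $|A_i|/|P_i|$ and $|B_i|/|P_i|$, making utilitarian equivalence equivalent to showing $|A_i| = |B_i|$ for every $i$. By the preceding reduction lemmata we may assume that $A$ and $B$ are themselves reduced equilibrium profiles. The plan is to exhibit a canonical subset of agents whose portion lengths are determined by the cake cutting situation alone, peel it off, and induct on the number of agents.

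For a well behaved reduced equilibrium $A$, set $m_A = \min_i |A_i|$ and $\mathfrak{A}_A = \{i : |A_i| = m_A\}$. First I would use condition 2 of Proposition \ref{LG equilibrium} to show $\bigcup_{i \in \mathfrak{A}_A} A_i = D(\mathfrak{A}_A, [0,1])$: by condition 1, every sub-interval of some $P_i$ with $i \in \mathfrak{A}_A$ is allocated, and by condition 2 any recipient $j$ of such an interval must satisfy $|A_j| \leq m_A$, forcing $j \in \mathfrak{A}_A$. This yields $m_A = avg(\mathfrak{A}_A, [0,1])$, and the chain $|D(\mathfrak{A}', [0,1])| \geq \sum_{i \in \mathfrak{A}'} |A_i| \geq m_A \cdot \#\mathfrak{A}'$ shows that $\mathfrak{A}_A$ minimises $avg(\cdot, [0,1])$. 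For any minimiser $\mathfrak{A}'$ those inequalities must collapse to equalities, forcing $|A_i| = m_A$ for each $i \in \mathfrak{A}'$, hence $\mathfrak{A}' \subseteq \mathfrak{A}_A$. So $\mathfrak{A}_A$ is the inclusion-maximal minimiser of $avg(\cdot, [0,1])$ --- a set depending only on $(P_1, \ldots, P_n)$ --- and consequently $\mathfrak{A}_A = \mathfrak{A}_B$ with $m_A = m_B$.

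This identifies a common subset $\mathfrak{A}$ of agents with $|A_i| = |B_i|$ for all $i \in \mathfrak{A}$. To conclude I would verify that the restrictions of $A$ and $B$ to the agents in $\mathcal{A} \setminus \mathfrak{A}$ on the residual cake $[0,1] \setminus D(\mathfrak{A}, [0,1])$ remain well behaved reduced equilibria of the smaller cake cutting situation obtained by replacing each remaining $P_j$ with $P_j \setminus D(\mathfrak{A}, [0,1])$, and apply induction on the number of agents. The main obstacle is the extremal argument pinning down $\mathfrak{A}_A$ as an invariant of the situation: both directions --- that it is a minimiser of $avg(\cdot,[0,1])$ and that every minimiser lies inside it --- are delicate consequences of the equilibrium conditions and must be handled carefully before the inductive machinery can be set in motion.
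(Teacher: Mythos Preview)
Your argument is correct, but it follows a genuinely different route from the paper's. The paper argues by direct contradiction: given two non-equivalent reduced equilibria $A,A'$, it partitions the agents into winners $W$, losers $L$ and neutrals $N$, uses Proposition \ref{LGPareto} to ensure $L\neq\emptyset$, and then exploits the valued-cake budget identity $\sum_i|A_i|=\sum_i|A'_i|=\big|\bigcup_iP_i\big|$. Picking any $j\in L$ and an interval $I\subseteq A_j\setminus A'_j$, the two equilibrium inequalities chain to $|A'_k|<|A_k|$ for whichever $k$ receives $I$ in $A'$, so $k\in L$; hence cake never leaves $L$, contradicting the fact that $L$ strictly loses total length. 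Your approach instead identifies the bottom stratum $\mathfrak{A}_A=\{i:|A_i|=m_A\}$ as the inclusion-maximal minimiser of $avg(\cdot,[0,1])$, an object depending only on $(P_1,\ldots,P_n)$, and peels it off to induct. This is longer but more structural: it effectively rediscovers the layered decomposition underlying Mechanism \ref{procaccia}, so that the subsequent proposition (that Mechanism \ref{procaccia}'s output is an equilibrium) becomes almost a by-product rather than a separate argument. The paper's proof is quicker and avoids the $avg$ machinery entirely; yours makes the shape of the equilibrium explicit. One small point worth recording when you flesh out the induction: you should note that no remaining agent $j\notin\mathfrak{A}$ can have $P_j\subseteq D(\mathfrak{A},[0,1])$, since that would force $|A_j|=0<m_A$ and hence $j\in\mathfrak{A}_A$, so the residual preferences $P'_j$ stay non-empty and the smaller situation is well posed.
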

\begin{proof}
Let $S$ and $S'$ be two reduced, well behaved equilibrium profiles that are not
utilitarian equivalent. Let $W$ be the set of all $i$ such that
$u_i(A'_i)>u_i(A_i)$. From Proposition \ref{LGPareto} we know that there is also
a non-empty set $L$ consisting of agents $j$ where $u_j(A'_j)<u_j(A_j)$. All
other agents
will constitute the set $N$.

Recall that in equilibria the entire valued cake is allocated, giving us the
identity:
\begin{align}
\sum_{i\in W}|A'_i|+\sum_{i\in N}|A'_i|+\sum_{i\in L}|A'_i|=\sum_{i\in
W}|A_i|+\sum_{i\in N}|A_i|+\sum_{i\in L}|A_i|
\end{align}
\begin{align}
\sum_{i\in W}|A'_i|+\sum_{i\in L}|A'_i|=\sum_{i\in W}|A_i|+\sum_{i\in L}|A_i|
\end{align}
\begin{align}\label{pos}
\sum_{i\in W}|A'_i|-\sum_{i\in W}|A_i|=\sum_{i\in L}|A_i|-\sum_{i\in L}|A'_i|
\end{align}

We claim that both sides of \eqref{pos} must be positive. If not, this would
mean that all agents in $L$ attain less utility in $A'$ despite the fact that
they have at least as much cake between them as they did before. This is only
possible if some agent receives some cake that they do not value, but this
cannot be the case because well behaved equilibria of Length Game are not
wasteful.

Consider a $j\in L$. As $u_j(A'_j)<u_j(A_j)$, there exists an interval
$I\subseteq P_j$ such that $I\subseteq A_j, I\nsubseteq A'_j$. However $I$ is
valued, so $I\subseteq A'_k$ for some $k$. As $S'$ is an equilibrium,
$|S'_k|\leq|S'_j|$. As $S$ is an equilibrium, $|S_j|\leq|S_k|$. Given that the
profiles are reduced and $u_j(A'_j)<u_j(A_j)$, we complete the chain to get
$|S'_k|\leq|S'_j|<|S_j|\leq|S_k|$. $|S'_k|<|S_k|$, hence $u_k(A'_k)<u_k(A_k)$,
hence $k\in L$. Thus the agents in $L$ do not lose any cake to agents outside of
it, and as allocations are not wasteful at least one agent in $L$ must be no
worse off in $A'$ than in $A$, giving us a contradiction.
\end{proof}

\section{Characterisations}\label{strchar}

Having shown that all well behaved equilibria of Length Game have the same
payoffs, we know wish to ask what these payoffs are. An important result is that
from a consequentialist perspective, this mechanism is not new. In equilibrium
it produces the same payoffs as Mechanism \ref{procaccia}.

\begin{proposition}
Consider a cake cutting situation with piecewise uniform preferences
$(P_1,...,P_n)$. Let $(A_1,...,A_n)$ be the allocation produced by Mechanism
\ref{procaccia}. Then $(A_1,...,A_n)$ is a well behaved equilibrium profile of
Length Game given the same preferences.
\end{proposition}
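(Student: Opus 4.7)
The plan is to reverse-engineer Mechanism~\ref{procaccia} into a sequence of batches $\mathfrak{A}_1,\mathfrak{A}_2,\ldots,\mathfrak{A}_m$ with associated residual cakes $\mathfrak{X}_1=[0,1]$, $\mathfrak{X}_{k+1}=\mathfrak{X}_k\setminus D(\mathfrak{A}_k,\mathfrak{X}_k)$, and per-batch lengths $\alpha_k=avg(\mathfrak{A}_k,\mathfrak{X}_k)$. By the \emph{exact} property of the allocation, every $i\in\mathfrak{A}_k$ receives a portion $A_i\subseteq P_i$ of length $|A_i|=\alpha_k$, which is already well-behavedness. To show that the profile $S_i=A_i$ is an equilibrium, I would check the two conditions of Proposition~\ref{LG equilibrium}.

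Condition (1), that all valued cake is allocated, is immediate: the exact allocation at round $k$ distributes the entirety of $D(\mathfrak{A}_k,\mathfrak{X}_k)$ among the agents of $\mathfrak{A}_k$, so any interval $I\subseteq P_i$ either disappears before $i$'s round (absorbed into $D(\mathfrak{A}_l,\mathfrak{X}_l)$ for some $l<k$) or is in $D(\mathfrak{A}_k,\mathfrak{X}_k)$ and is allocated at round $k$ itself.

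Condition (2), that $|A_i\cap P_j|\neq 0$ with $i\neq j$ forces $|A_i|\leq|A_j|$, has two ingredients. First, a round-ordering observation: if $i\in\mathfrak{A}_k$ and $j\in\mathfrak{A}_l$ with $l<k$, then $A_i\cap P_j\subseteq A_i\subseteq\mathfrak{X}_k\subseteq\mathfrak{X}_l$, and being valued by $j\in\mathfrak{A}_l$ it lies in $D(\mathfrak{A}_l,\mathfrak{X}_l)$; but that set was removed to form $\mathfrak{X}_{l+1}\supseteq\mathfrak{X}_k$, contradicting $A_i\cap P_j\subseteq\mathfrak{X}_k$. Hence $k\leq l$. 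Second, a monotonicity lemma: $\alpha_1\leq\alpha_2\leq\cdots\leq\alpha_m$. The two together yield $|A_i|=\alpha_k\leq\alpha_l=|A_j|$.

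The main obstacle is the monotonicity lemma, and it is the step that really uses the minimising choice of $\mathfrak{A}_k$. The idea is that at round $k$ the set $\mathfrak{A}_k\cup\mathfrak{A}_{k+1}$ is itself a candidate subset of the remaining agents, so $\alpha_k\leq avg(\mathfrak{A}_k\cup\mathfrak{A}_{k+1},\mathfrak{X}_k)$. Since $\mathfrak{X}_{k+1}$ differs from $\mathfrak{X}_k$ exactly by $D(\mathfrak{A}_k,\mathfrak{X}_k)$, one has the inclusion
\[D(\mathfrak{A}_{k+1},\mathfrak{X}_k)\setminus D(\mathfrak{A}_{k+1},\mathfrak{X}_{k+1})\subseteq D(\mathfrak{A}_k,\mathfrak{X}_k),\]
from which
\[|D(\mathfrak{A}_k\cup\mathfrak{A}_{k+1},\mathfrak{X}_k)|\leq|D(\mathfrak{A}_k,\mathfrak{X}_k)|+|D(\mathfrak{A}_{k+1},\mathfrak{X}_{k+1})|=\alpha_k\#\mathfrak{A}_k+\alpha_{k+1}\#\mathfrak{A}_{k+1}.\]
Dividing by $\#\mathfrak{A}_k+\#\mathfrak{A}_{k+1}$ and using $avg(\mathfrak{A}_k\cup\mathfrak{A}_{k+1},\mathfrak{X}_k)\geq\alpha_k$ rearranges to $\alpha_k\leq\alpha_{k+1}$. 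With the lemma in hand, the equilibrium conditions fall into place and the proposition follows.
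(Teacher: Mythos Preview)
Your argument is correct and follows the same route as the paper: both verify the two conditions of Proposition~\ref{LG equilibrium}, using the round-ordering observation and the monotonicity of the per-round averages $\alpha_k$. The only difference is that you actually \emph{prove} the monotonicity lemma $\alpha_1\leq\alpha_2\leq\cdots\leq\alpha_m$, whereas the paper merely asserts that ``subroutines are executed in order of increasing $avg(\mathfrak{A},\mathfrak{X})$''; your version is thus slightly more complete.
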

\begin{proof}
It suffices to show that $S=(A_1,...,A_n)$ satisfies the hypotheses of Proposition \ref{LG
equilibrium}. Recall that a subroutine of Mechanism \ref{procaccia} takes a
subset of agents, $\mathfrak{A}$, and of cake, $\mathfrak{X}$ as input.
$D(\mathfrak{A},\mathfrak{X})$ is all the regions of $\mathfrak{X}$ valued by at
least one agent in $\mathfrak{A}$ and:
\begin{align}
avg(\mathfrak{A},\mathfrak{X})=\frac{|D(\mathfrak{A},\mathfrak{X})|}{\#\mathfrak
{A } }
\end{align}

To see that all the valued cake is allocated, consider an arbitrary interval
$I\subseteq P_i$ for some $i$. Consider the subroutine on
$\mathfrak{A},\mathfrak{X}$ where $i\in\mathfrak{A}$. Let $I'\subseteq I$ be the
part of $I$ that was not yet allocated. That is, $I'=I\cap\mathfrak{X}$. As
$I'\subseteq P_i, I'\subseteq D(\mathfrak{A},\mathfrak{X})$. So after this
subroutine $I'$ will certainly be allocated to some agent.

To see that whenever $|A_i\cap P_j|\neq 0$ for $i\neq j$, $|A_i|\leq|A_j|$., we
consider two cases:

Case one: there is a subroutine on $\mathfrak{A}$ and $\mathfrak{X}$ such that
$i,j\in\mathfrak{A}$. The mechanism allocates all agents in $\mathfrak{A}$ a
portion of length $avg(\mathfrak{A},\mathfrak{X})$, so $|A_i|=|A_j|$.

Case two: there are subroutines on $\mathfrak{A}'$, $\mathfrak{X}'$ and
$\mathfrak{A}^*$, $\mathfrak{X}^*$ such that $i\in\mathfrak{A}'$ and
$j\in\mathfrak{A}^*$. Suppose $|A_i\cap P_j|\neq 0$. Then the subroutine on
$\mathfrak{A}'$, $\mathfrak{X}'$ must be executed first, because all of $P_j$
would be allocated after the subroutine on $\mathfrak{A}^*$. Since subroutines
are executed in order of increasing $avg(\mathfrak{A},\mathfrak{X})$,
$avg(\mathfrak{A}',\mathfrak{X}')\leq avg(\mathfrak{A}^*,\mathfrak{X}^*)$. So
$|A_i|\leq |A_j|$.
\end{proof}

\begin{corollary}
All well behaved equilibria of Length Game are envy free.
\end{corollary}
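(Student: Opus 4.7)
The plan is to argue directly from the equilibrium characterisation in Proposition \ref{LG equilibrium}, rather than trying to transfer envy freeness from Mechanism \ref{procaccia} via the preceding proposition. The latter would be tempting, since we know the output of Mechanism \ref{procaccia} is both envy free and a Length Game equilibrium, and Proposition \ref{LGequivalence} tells us all equilibria share the same diagonal payoffs $u_i(A_i)$. However, envy freeness requires the off-diagonal inequality $u_i(A_i)\geq u_i(A_j)$, and utilitarian equivalence only pins down the diagonal; the off-diagonal entries $u_i(A_j)$ can in principle be redistributed among the agents $j\neq i$ between different equilibria. So a direct proof is cleaner.

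I would first reduce to the convenient setting: by the two lemmas preceding Proposition \ref{LG equilibrium}, it suffices to consider a reduced well behaved equilibrium profile $S=(S_1,\dots,S_n)$ with allocation $A=(A_1,\dots,A_n)$, so that $S_i=A_i$ and $A_i\subseteq P_i$ for every $i$. In particular, using \eqref{PU utility},
\begin{align}
u_i(A_i)=\frac{|A_i|}{|P_i|},\qquad u_i(A_j)=\frac{|A_j\cap P_i|}{|P_i|}.
\end{align}

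Now suppose, for contradiction, that agent $i$ envies agent $j$, i.e.\ $u_i(A_j)>u_i(A_i)$. Multiplying through by $|P_i|$ gives $|A_j\cap P_i|>|A_i|$. Since $A_j\cap P_i\subseteq A_j$, this forces $|A_j|\geq|A_j\cap P_i|>|A_i|$. On the other hand, the strict inequality $|A_j\cap P_i|>|A_i|\geq 0$ implies $|A_j\cap P_i|\neq 0$, so that condition 2 of Proposition \ref{LG equilibrium} (applied to the pair $j,i$) yields $|A_j|\leq|A_i|$. These two inequalities contradict each other, so no such $i,j$ exist, and $A$ is envy free.

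The main obstacle is conceptual rather than computational: one must resist the temptation to piggyback on the envy freeness of Mechanism \ref{procaccia} through utilitarian equivalence, since equivalence on the diagonal does not in general preserve envy freeness. Once one accepts that the equilibrium conditions themselves must be used, the combinatorial fact that any cake you wish to steal from $j$ forces your strategy to be longer than $j$'s — but then the equilibrium condition forbids you from stealing it — makes the argument essentially a one-line contradiction.
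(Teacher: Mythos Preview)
Your proof is correct and takes a genuinely different route from the paper's. The paper argues by transfer: Mechanism~\ref{procaccia} produces an envy free allocation which is itself a well behaved equilibrium, and since Proposition~\ref{LGequivalence} makes all well behaved equilibria utilitarian equivalent, they must all be envy free. You instead work directly from the equilibrium characterisation in Proposition~\ref{LG equilibrium}: if $|A_j\cap P_i|>|A_i|$ then $|A_j\cap P_i|\neq 0$, so condition~2 gives $|A_j|\leq|A_i|$, contradicting $|A_j|\geq|A_j\cap P_i|>|A_i|$.

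Your observation about the transfer argument is well taken. Utilitarian equivalence only pins down the diagonal entries $u_i(A_i)$; the off-diagonal entries $u_i(A_j)$ can indeed differ between two utilitarian equivalent equilibria, so knowing that one equilibrium is envy free does not, on its own, force another to be. To close that gap one would have to invoke condition~2 of Proposition~\ref{LG equilibrium} on the second equilibrium anyway, which is precisely your argument, and at that point the detour through Mechanism~\ref{procaccia} and Proposition~\ref{LGequivalence} becomes redundant. Your direct route is therefore both shorter and more self-contained; it also makes the corollary independent of the connection to Mechanism~\ref{procaccia}, which is conceptually cleaner. A minor remark: you do not actually need the reduction to reduced profiles, since for any well behaved profile one already has $A_i\subseteq S_i\subseteq P_i$, giving $u_i(A_i)=|A_i|/|P_i|$ directly; but the reduction does no harm.
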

\begin{proof}
Mechanism \ref{procaccia} produces an envy free allocation which is also a well
behaved equilibrium. Proposition \ref{LGequivalence} establishes that all well
behaved equilibria of Length Game are utilitarian equivalent, so they must all
be envy free.
\end{proof}

\begin{corollary}
Mechanism \ref{procaccia} is Pareto efficient.
\end{corollary}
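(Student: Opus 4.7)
The plan is to chain together two results that have already been established in this section. The immediately preceding proposition shows that if we feed the allocation $(A_1,\ldots,A_n)$ produced by Mechanism \ref{procaccia} back into Length Game as a strategy profile, it constitutes a well behaved equilibrium. Proposition \ref{LGPareto} then tells us that every well behaved equilibrium of Length Game yields a Pareto efficient allocation. Composing these two facts gives the conclusion directly.

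Concretely, I would first invoke the previous proposition to assert that the allocation produced by Mechanism \ref{procaccia} arises as a well behaved equilibrium profile of Length Game. Next, I would apply Proposition \ref{LGPareto} to that profile to conclude that the allocation it produces is Pareto efficient. Since Length Game on a reduced profile simply returns the profile itself as the allocation, the Pareto efficient allocation is exactly the one output by Mechanism \ref{procaccia}.

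There is no real obstacle here: the work has been done in Proposition \ref{LGPareto} and the preceding proposition, and the corollary is essentially a syllogism. The only minor care is to note that the allocation returned by Mechanism \ref{procaccia}, when viewed as a strategy profile, is reduced and well behaved (each $S_i = A_i \subseteq P_i$ since Mechanism \ref{procaccia} only allocates to agents cake they value), which is what permits us to apply Proposition \ref{LGPareto}.
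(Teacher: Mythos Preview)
Your proposal is correct and follows essentially the same route as the paper: the paper's proof is the terse citation ``Proposition \ref{LGequivalence} and \ref{LGPareto}'', relying implicitly on the immediately preceding proposition to identify the output of Mechanism \ref{procaccia} with a well behaved Length Game equilibrium, and then invoking Proposition \ref{LGPareto}. Your write-up actually makes the logic more explicit than the paper does, and your observation that Proposition \ref{LGequivalence} is not strictly needed here is sound---the preceding proposition together with Proposition \ref{LGPareto} suffices.
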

\begin{proof}
Proposition \ref{LGequivalence} and \ref{LGPareto}.
\end{proof}

This gives us two mechanisms which produce different allocations with sincere
agents, but have the same equilibria given strategic behaviour. One mechanism
induces truthful behaviour in agents, the other admits a dominant strategy
equilibrium. This should be reminiscent of a result from implementation theory:

\begin{theorem}[Revelation Principle]
Given a cake cutting situation with finitely representable preferences, if a
mechanism $M_1$ has dominant strategy equilibria, there exists a truthful
mechanism $M_2$ that produces utilitarian equivalent allocations.
\end{theorem}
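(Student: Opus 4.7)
The plan is to apply the standard revelation principle construction from implementation theory, adapted to the cake cutting setting. Given $M_1$ with a dominant strategy equilibrium, for each agent $i$ and each finitely representable utility function $u_i$, let $\sigma_i(u_i)$ denote a dominant strategy for agent $i$ in $M_1$ when their true preferences are $u_i$. If more than one such strategy exists, I fix a selection rule (e.g.\ lexicographic on representations) so that $\sigma_i$ is a well-defined function of $u_i$. I then define $M_2$ to be the revelation protocol that takes $(u_1,\ldots,u_n)$ as input and returns $M_1(\sigma_1(u_1),\ldots,\sigma_n(u_n))$.

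I would then verify the two required properties. For truthfulness, suppose agent $i$'s true preferences are $u_i$ but they report some $u'_i$, while the other agents report $u_{-i}$. The output of $M_2$ is the allocation produced by $M_1$ when agent $i$ plays $\sigma_i(u'_i)$ and the other agents play $\sigma_j(u_j)$. Since $\sigma_i(u_i)$ is a dominant strategy in $M_1$ with respect to true preferences $u_i$, it weakly dominates $\sigma_i(u'_i)$ regardless of what the others play; in particular, agent $i$'s utility is at least as high when $\sigma_i(u_i)$ is played. Hence truthful reporting is weakly dominant in $M_2$.

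For utilitarian equivalence, when every agent reports sincerely in $M_2$, by construction the output is $M_1(\sigma_1(u_1),\ldots,\sigma_n(u_n))$, which is precisely the allocation produced by $M_1$ at its dominant strategy equilibrium with true preferences $(u_1,\ldots,u_n)$. The two mechanisms thus deliver the identical allocation, and in particular the same per-agent utilities, so they are utilitarian equivalent in the sense defined earlier.

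The main obstacle is conceptual rather than technical: ensuring that $\sigma_i$ is genuinely a function of $u_i$ alone. The assumption of finite representability is what makes this possible, since it guarantees that $M_2$ can take utility functions as input and deterministically compute $\sigma_i(u_i)$. If multiple dominant strategies exist, some tie-breaking convention must be fixed; but any such convention works, because all dominant strategies yield the same payoff to $i$ against any fixed profile of opponents, and it is only $i$'s payoff that matters for the truthfulness argument. Beyond this, the argument is a routine transcription of the classical revelation principle.
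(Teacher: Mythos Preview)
Your proposal is correct and follows exactly the same approach as the paper: the paper merely cites Myerson and sketches in one sentence that $M_2$ takes the utility functions as input, simulates the agents' dominant-strategy play in $M_1$, and outputs the resulting allocation. You have simply spelled out this construction and its verification in more detail than the paper does.
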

\begin{proof}
This is a well known result in Economics. See, for instance, \cite{Myerson83}.

The idea behind the construction is that $M_2$, termed the
\emph{direct revelation} mechanism, takes the utility functions of the agents as
input. It then simulates the behaviour of the agents given $M_1$ and thus
produces the allocation.
\end{proof}

In our case Length Game is the na\"{i}ve mechanism, and Mechanism
\ref{procaccia} is the direct revelation equivalent. This is akin to the
contrast between first and second price auctions: with sincere agents the first
price auction generates higher revenue, while its equilibrium is equivalent to
that of the strategy-proof second price auction. Within auction theory, the
Revenue Equivalence theorem suggests that this is the best one can hope to
achieve. We prove a similar result with respect to the utilitarian efficiency of
piecewise uniform cake cutting situations.

\begin{theorem}
If a mechanism produces an allocation with a higher utilitarian
efficiency in some piecewise uniform situation than Mechanism \ref{procaccia},
there is a situation in which it produces an allocation with a lower utilitarian
efficiency.
\end{theorem}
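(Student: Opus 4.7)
The approach is by contradiction. Assume $M$ weakly dominates Mechanism \ref{procaccia} in utilitarian efficiency, with strict inequality on some situation $P_0$; write $A=M(P_0)$ and $A^*=\textrm{procaccia}(P_0)$. The corollary just established tells us $A^*$ is Pareto efficient, so $A$ --- which has strictly greater utilitarian sum --- cannot Pareto-dominate $A^*$, forcing the existence of an agent $j$ with $u_j(A_j)<u_j(A^*_j)$. Because $A^*_j\subseteq P_j$ in procaccia's output, this translates to $|A_j\cap A^*_j|<|A^*_j|$: the portion $M$ gives $j$ strictly misses a positive-length chunk of $A^*_j$.

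Next, I would consider the companion situation $P'$ obtained by replacing $j$'s preferences with $P'_j=A^*_j$, keeping the others fixed. Under $P'$, procaccia's $avg$-minimising recursion immediately identifies $\{j\}$ (or an enlargement thereof) as a minimiser --- $avg(\{j\},[0,1])=|A^*_j|$ equals the minimum $avg$ that arose in the $P_0$ recursion --- awards $j$ all of $A^*_j$, and then recurses on the remaining agents over $[0,1]\setminus A^*_j$, yielding an allocation whose utilitarian efficiency is at least $1+\sum_{i\neq j}u_i(A^*_i)$. If $M$'s equilibrium on $P'$ still produces $A$, then $u'_j(A_j)=|A_j\cap A^*_j|/|A^*_j|<1$ while the other agents' utilities are unchanged, so $\mathcal{UE}(M(P'))<\mathcal{UE}(\textrm{procaccia}(P'))$, contradicting the assumed domination.

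The principal difficulty is verifying that $M$ on $P'$ does produce $A$. When $j$'s true preferences contract from $P_j$ to $A^*_j$, a deviation that was suboptimal for $j$ on $P_0$ can become attractive on $P'$, so strategic invariance of the reports that generated $A$ is not automatic. The cleanest resolution is via the revelation principle: replace $M$ by its truthful direct-revelation counterpart $\widetilde{M}$, whose outputs depend on reported preferences directly rather than on equilibrium responses, and reduce the problem to comparing two truthful mechanisms $\widetilde{M}$ and procaccia. The equivalence of procaccia with Length Game equilibria (Propositions \ref{LG equilibrium}, \ref{LGPareto}, \ref{LGequivalence}) then constrains any truthful competitor: any allocation strictly beating procaccia in utilitarian efficiency on some input must fail the equilibrium characterisation of Proposition \ref{LG equilibrium} on that input, and the structural analysis of that failure (using Pareto efficiency of Length Game equilibria) produces the required $P'$ on which $\widetilde{M}$ --- and hence $M$ --- underperforms procaccia.
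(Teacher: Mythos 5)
You have correctly located the crux, but your resolution of it does not work. Changing the \emph{loser}'s preferences from $P_j$ to $A^*_j$ gives you no control over what $M$ does on $P'$: agent $j$ is precisely the agent with an incentive to behave differently once their preferences change, so there is no indistinguishability argument forcing $M(P')=A$, and you concede as much. The appeal to the revelation principle does not repair this: the paper's statement of it applies only to mechanisms that already possess dominant strategy equilibria, which an arbitrary competitor $M$ need not have, and your final step (``the structural analysis of that failure \dots produces the required $P'$'') is a restatement of the theorem rather than a proof of it. The paper instead modifies the \emph{winner}: it sets $P'_i=[0,1]$ for an agent $i$ who gains under $M$, leaving everyone else fixed. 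Since the other agents' preferences are unchanged and $i$ profits from replaying the strategy they used in the original situation, the mechanism cannot tell the two situations apart and can be driven to reproduce the allocation it gave before; that forcing device is exactly what your construction lacks.

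Separately, even granting $M(P')=A$, your utilitarian comparison does not close. You have $\mathcal{UE}(M(P'))=u'_j(A_j)+\sum_{i\neq j}u_i(A_i)$ with $u'_j(A_j)<1$, and a lower bound of $1+\sum_{i\neq j}u_i(A^*_i)$ for Mechanism \ref{procaccia} on $P'$ (itself a claim needing verification, since shrinking $P_j$ can alter the whole recursion); the contradiction would require $\sum_{i\neq j}\bigl(u_i(A_i)-u_i(A^*_i)\bigr)<1-u'_j(A_j)$, but that surplus is exactly what made $M$ beat Mechanism \ref{procaccia} on $P_0$ and can exceed $1-u'_j(A_j)$. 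The paper's accounting avoids this: among the agents who lose under the competitor it picks $j$ with the largest $|P_j|$; in the case $|P_j|<1$, because the favoured agent now values the whole cake, any interval $I$ shifted to them gains only $|I|$ while costing $|I|/|P_j|>|I|$, which yields the strict reversal against Mechanism \ref{procaccia}'s allocation on the modified situation; a second case handles $|P_j|=1$ via a further misreport by $j$. As it stands, your proposal is missing both the forcing step and this quantitative argument, so the proof does not go through.
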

\begin{proof}
The key in this proof is that if a mechanism produces allocation $(A_1,...,A_n)$
in situation $(P_1,...,P_n)$, then in any situation of the form
$(P_1,...,P'_i,...,P_n)$ agent $i$ can force allocation $(A_1,...,A_i,...,A_n)$
by casting the same strategy they would have cast had their preferences been
$P_i$.

Suppose that in situation $P=(P_1,...,P_n)$ Mechanism \ref{procaccia} produces
allocation $A=(A_1,...,A_n)$ while mechanism $M^*$ produces allocation
$A^*=(A^*_1,...,A^*_n)$ such that\\ $\mathcal{UE}(A^*)>\mathcal{UE}(A)$. There
must therefore be some $i$ such that $u_i(A_i^*)>u_i(A_i)$. Given Proposition
\ref{LGPareto}, there is also a $j$ with $u_j(A_j^*)<u_j(A_j)$. If there is more
than one such $j$, pick one with the largest $|P_j|$. We consider two cases.

Case one: $|P_j|<1$. Consider the situation $P'$ where $P'_i=[0,1]$,
$P'_k=P_k$ for $k\neq i$. Under mechanism \ref{procaccia} the allocation
produced is $A'$. Observe that $|A'_i|\leq|A_i|$: agent $i$ can force allocation
$A$, and since they do not it must be because they have no incentive in doing
so. Since $|A^*_i|>|A_i|$, $|A^*_i|>|A'_i|$. To achieve this, agent $i$ must be
given some cake by $M^*$ that Mechanism \ref{procaccia} gave to $j$ instead.
There is some interval $I\subseteq A'_j$, $I\subseteq A^*_i$. Observe that
giving $I$ to $i$ raises $i$'s utility by $|I|$, but lowers $j$'s by
$|I|/|P_j|$, $|P_j|< 1$. Since we picked $j$ to have the largest $|P_j|$ out of
all agents that lost utility, this means that
$\mathcal{UE}(A^*)<\mathcal{UE}(A')$.

Case two: $|P_j|=1$. Consider the situation $P'$ as above: agent $i$ with
preferences $[0,1]$ pretends to value $P_i$ to force allocation $A^*$ in $M^*$.
Suppose $j$ pretends to value $|P^*_j|<1$. If this will yield them a larger
slice, then $A^*$ is not an equilibrium. If this does not, then we fall back to
case one and in the situation $(P'_1,...,P^*_j,...P'_n)$, $M^*$ would attain a
lower utilitarian efficiency than Mechanism \ref{procaccia}.
\end{proof}

\chapter{Conclusion}\label{con}

We have summarised the main results in the history of cake cutting, placing our
focus on the mechanisms themselves, rather than measure theoretic existence
results. This has revealed that concerns of efficiency and truthfulness are
relatively new developments in the field.

As we have seen there may be good reasons for this: impossibility results abound
when optimal allocations are concerned. The general cake cutting model is too
broad to allow such allocations to be effected. Even if the desired optimum
exists, obtaining it may be impossible if it is against the interests of the
agents to do so.

We have, however, found a candidate for the ``next best" solution in the case of
piecewise uniform preferences: a mechanism that is never dominated by another,
first presented in \cite{Chen10} and given a superficially different, but
equivalent, characterisation here.

While the number of unresolved questions is vast, perhaps the most pertinent one
here is whether there are similar results for piecewise constant preferences.
Likewise, do non-trivial truthful mechanisms exist in such a case? The
restriction to piecewise uniform preferences allowed us to sidestep a plethora
of issues that would arise in such a situation. In fact, we conjecture that
given the simplicity of the preferences Mechanism \ref{procaccia} is in some
sense unique: perhaps all envy free, non-wasteful mechanisms must produce
allocations which coincide with the equilibria of Length Game. 

In a more general setting, even with the
combined tools
of Mathematics, Economics and Computer Science at our disposal it would seem
that further progress will be no cakewalk.

\appendix

\chapter{Table of Mechanisms}\label{table}

\begin{tabular}{l | l | l | l}
Mechanism & Comments & Introduced in & Page\\
\hline
Cut and Choose & 2 agents, R-W, Pr, EF, WT & Prehistoric
&\pageref{Cut and Choose}\\
Last Diminisher &  R-W, Pr, WT & \cite{Steinhaus48}
&\pageref{Last Diminisher}\\
Selfridge's Algorithm& 3 agents, R-W, Pr, EF,
WT&\cite{Woodall80}&\pageref{Selfridge's Algorithm}\\
Four Knives& 3 agents, MK, Pr, EF, WT&\cite{Stromquist80}&\pageref{Four
Swords}\\
Austin's Scheme& 2 agents, MK, Pr, EF, Eq,
WT&\cite{Austin82}&\pageref{Austin's Scheme}\\
Four Agent Moving Knife& 4 agents, MK, Pr, EF,
WT&\cite{Brams97}&\pageref{4 Agent Moving Knife}\\
Mechanism
\ref{Even nlogn}&R-W, Pr, WT&\cite{Even84}&\pageref{Even nlogn}\\
Mechanism
\ref{nonconstructive}&RP, Pr, EF, Eq,
Tr&\cite{Chen10},\cite{Mossel10}&\pageref{nonconstructive}\\
Mechanism
\ref{procaccia}&RP, Pr, EF,
Tr&\cite{Chen10}&\pageref{procaccia}\\
Lex Order&RP, Tr&Present work&\pageref{Lex Order}\\
Length Game&RP, Nv&Present work&\pageref{Length Game}\\
\end{tabular}
\vfill
\begin{tabular}{l l}
Legend &\\
\hline
R-W&Robertson-Webb protocol\\
MK&Moving knife protocol\\
RP&Revelation protocol\\
Pr&Proportional mechanism\\
EF&Envy free mechanism\\
Eq&Equitable mechanism\\
Nv&Na\"{i}ve mechanism\\
WT&Weakly truthful mechanism\\
Tr&Truthful mechanism\\
\end{tabular}

\chapter{Robertson-Webb Formulations}\label{RW}

\setlength{\parindent}{0in}

In this appendix we give Robertson-Webb formulations of mechanisms that appear
in this text. Recall that the allowed queries are \texttt{eval(a,b)} and
\texttt{cut(a,x)}. We will use subscripts to indicate the agent queried. That
is, \texttt{eval$_\textrm{\texttt{i}}$(a,b)} would query agent $i$ to evaluate
the slice $[a,b]$. If it is understood that $X=[x_1,x_2]$ is a slice, we may
write \texttt{eval$_\textrm{\texttt{i}}$(X)} instead of
\texttt{eval$_\textrm{\texttt{i}}$(x$_1$,x$_2$)}. As before, $A$ is the
allocation and $A_i$ is the portion of
agent $i$ in $A$. $\mathcal{A}$ is the set of agents. $\#\mathcal{A}=n$.\\

{\bf Mechanism \ref{Cut and Choose}: Cut and Choose}\\
\vspace{-0.3cm}\\
$a=$\texttt{cut$_\textrm{\texttt{1}}$(0,0.5)}\\
{\bf
if }\texttt{eval$_\textrm{\texttt{2}}$(0,a)}$>$\texttt{eval$_\textrm{\texttt{2}}
$(a,1)}:\\
\hspace*{0.5cm}$A_1=[a,1]$, $A_2=[0,a]$\\
{\bf else: }\\
\hspace*{0.5cm}$A_1=[0,a]$, $A_2=[a,1]$\\
{\bf return }$A$\\

{\bf Mechanism \ref{Last Diminisher}: Last Diminisher}\\
\vspace{-0.3cm}\\
$s=0$\\
$l=1$\\
{\bf while }$\mathcal{A}\neq\emptyset$\\
\hspace*{0.5cm}{\bf for }$i$ in $\mathcal{A}:$\\
\hspace*{1cm}{\bf if }\texttt{eval$_\textrm{\texttt{i}}$(s,l)}$>1/n$:\\
\hspace*{1.5cm}last=$i$\\
\hspace*{1.5cm}{\bf if }$\#\mathcal{A}>1$:\\
\hspace*{2cm}$l=$\texttt{cut$_\textrm{\texttt{1}}$(s,1/n)}\\
\hspace*{0.5cm}$A_{\textrm{last}}=[s,l]$\\
\hspace*{0.5cm}$s=l$\\
\hspace*{0.5cm}$l=1$\\
\hspace*{0.5cm}$\mathcal{A}=\mathcal{A}\backslash \{$last$\}$\\
{\bf return }$A$\\

\pagebreak

{\bf Mechanism \ref{Selfridge's Algorithm}: Selfridge's Algorithm}\\
\vspace{-0.3cm}\\
$a=$\texttt{cut$_\textrm{\texttt{1}}$(0,1/3)}\\
$b=$\texttt{cut$_\textrm{\texttt{1}}$(a,1/3)}\\
$X=$arg\hspace*{-0.7cm}$\displaystyle\max_{\{[0,a],[a,b],[b,1]\}}$\texttt{
eval$_\textrm{\texttt{2}}$(x)}$=[x_1,x_2]$\\
$Z=$arg\hspace*{-0.7cm}$\displaystyle\min_{\{[0,a],[a,b],[b,1]\}}$\texttt{
eval$_\textrm{\texttt{2}}$(x)}$=[z_1,z_2]$\\
$Y=[0,1]\backslash X\cup Z=[y_1,y_2]$\\
$v=$\texttt{eval$_\textrm{\texttt{2}}$(Y)}\\
$c=$\texttt{cut$_\textrm{\texttt{2}}$(x$_1$,Y)}\\
$X'=[x_1,c]$\\
{\bf if
}\texttt{eval$_\textrm{\texttt{3}}$(X')}$\geq$\texttt{eval$_\textrm{\texttt{3}}
$(Y)} and
\texttt{eval$_\textrm{\texttt{3}}$(X')}$\geq$\texttt{eval$_\textrm{\texttt{3}}
$(Y)}:\\
\hspace*{0.5cm}$A_3=A_3\cup{X'}$\\
\hspace*{0.5cm}Case$=1$\\
\hspace*{0.5cm}{\bf if
}\texttt{eval$_\textrm{\texttt{1}}$(Y)}$\geq$\texttt{eval$_\textrm{\texttt{1}}
$(Z)}\\
\hspace*{1cm}$A_1=A_1\cup{Y}$\\
\hspace*{1cm}$A_2=A_2\cup{Z}$\\
\hspace*{0.5cm}{\bf else }:\\
\hspace*{1cm}$A_1=A_1\cup{Z}$\\
\hspace*{1cm}$A_2=A_2\cup{Y}$\\
{\bf else if
}\texttt{eval$_\textrm{\texttt{3}}$(Y)}$\geq$\texttt{eval$_\textrm{\texttt{3}}
$(X')} and
\texttt{eval$_\textrm{\texttt{3}}$(Y)}$\geq$\texttt{eval$_\textrm{\texttt{3}}
$(Z)}:\\
\hspace*{0.5cm}$A_3=A_3\cup{Y}$\\
\hspace*{0.5cm}{\bf if
}\texttt{eval$_\textrm{\texttt{1}}$(Z)}$\geq$\texttt{eval$_\textrm{\texttt{1}}
$(X')}\\
\hspace*{1cm}$A_1=A_1\cup{Z}$\\
\hspace*{1cm}$A_2=A_2\cup{X'}$\\
\hspace*{0.5cm}Case$=2$\\
\hspace*{0.5cm}{\bf else }:\\
\hspace*{1cm}$A_1=A_1\cup{X'}$\\
\hspace*{1cm}$A_2=A_2\cup{Z}$\\
\hspace*{1cm}Cut and Choose($\{2,3\},[c,x_2]$)\\
{\bf else}:\\
\hspace*{0.5cm}$A_3=A_3\cup{Z}$\\
\hspace*{0.5cm}{\bf if
}\texttt{eval$_\textrm{\texttt{1}}$(Y)}$\geq$\texttt{eval$_\textrm{\texttt{1}}
$(X')}\\
\hspace*{1cm}$A_1=A_1\cup{Y}$\\
\hspace*{1cm}$A_2=A_2\cup{X'}$\\
\hspace*{0.5cm}Case$=2$\\
\hspace*{0.5cm}{\bf else }:\\
\hspace*{1cm}$A_1=A_1\cup{X'}$\\
\hspace*{1cm}$A_2=A_2\cup{Y}$\\
\hspace*{1cm}Cut and Choose($\{2,3\},[c,x_2]$)\\
{\bf if }Case$==1$:\\
\hspace*{0.5cm}$u=$\texttt{eval$_\textrm{\texttt{2}}$([c,x$_2$])}\\
\hspace*{0.5cm}$d=$\texttt{cut$_\textrm{\texttt{2}}$(0,1/3*u)}\\
\hspace*{0.5cm}$e=$\texttt{cut$_\textrm{\texttt{2}}$(d,1/3*u)}\\
\hspace*{0.5cm}{\bf if
}\texttt{eval$_\textrm{\texttt{3}}$(c,d)}$\geq$\texttt{eval$_\textrm{\texttt{3}}
$(d,e)} and
\texttt{eval$_\textrm{\texttt{3}}$(c,d)}$\geq$\texttt{eval$_\textrm{\texttt{3}}
$(e,x$_2$)}:\\
\hspace*{1cm}$A_3=A_3\cup{[c,d]}$\\
\hspace*{1cm}{\bf if
}\texttt{eval$_\textrm{\texttt{1}}$(d,e)}$\geq$\texttt{eval$_\textrm{\texttt{1}}
$(e,x$_2$)}\\
\hspace*{1.5cm}$A_1=A_1\cup{[d,e]}$\\
\hspace*{1.5cm}$A_2=A_2\cup{[e,x_2]}$\\
\hspace*{1cm}{\bf else }:\\
\hspace*{1.5cm}$A_1=A_1\cup{[e,x_2]}$\\
\hspace*{1.5cm}$A_2=A_2\cup{[d,e]}$\\
\hspace*{0.5cm}{\bf else if
}\texttt{eval$_\textrm{\texttt{3}}$(d,e)}$\geq$\texttt{eval$_\textrm{\texttt{3}}
$(c,d)} and
\texttt{eval$_\textrm{\texttt{3}}$(d,e)}$\geq$\texttt{eval$_\textrm{\texttt{3}}
$(e,x$_2$)}:\\
\hspace*{1cm}$A_3=A_3\cup{[d,e]}$\\
\hspace*{1cm}{\bf if
}\texttt{eval$_\textrm{\texttt{1}}$(c,d)}$\geq$\texttt{eval$_\textrm{\texttt{1}}
$(e,x$_2$)}\\
\hspace*{1.5cm}$A_1=A_1\cup{[c,d]}$\\
\hspace*{1.5cm}$A_2=A_2\cup{[e,x_2]}$\\
\hspace*{1cm}{\bf else }:\\
\hspace*{1.5cm}$A_1=A_1\cup{[e,x_2]}$\\
\hspace*{1.5cm}$A_2=A_2\cup{[c,d]}$\\
\hspace*{0.5cm}{\bf else }:\\
\hspace*{1cm}$A_3=A_3\cup{[e,x_2]}$\\
\hspace*{1cm}{\bf if
}\texttt{eval$_\textrm{\texttt{1}}$(d,e)}$\geq$\texttt{eval$_\textrm{\texttt{1}}
$(c,d)}\\
\hspace*{1.5cm}$A_1=A_1\cup{[d,e]}$\\
\hspace*{1.5cm}$A_2=A_2\cup{[c,d]}$\\
\hspace*{1cm}{\bf else }:\\
\hspace*{1.5cm}$A_1=A_1\cup{[c,d]}$\\
\hspace*{1.5cm}$A_2=A_2\cup{[d,e]}$\\
{\bf else }:\\
\hspace*{0.5cm}$u=$\texttt{eval$_\textrm{\texttt{3}}$([c,x$_2$])}\\
\hspace*{0.5cm}$d=$\texttt{cut$_\textrm{\texttt{3}}$(0,1/3*u)}\\
\hspace*{0.5cm}$e=$\texttt{cut$_\textrm{\texttt{3}}$(d,1/3*u)}\\
\hspace*{0.5cm}{\bf if
}\texttt{eval$_\textrm{\texttt{2}}$(c,d)}$\geq$\texttt{eval$_\textrm{\texttt{2}}
$(d,e)} and
\texttt{eval$_\textrm{\texttt{2}}$(c,d)}$\geq$\texttt{eval$_\textrm{\texttt{2}}
$(e,x$_2$)}:\\
\hspace*{1cm}$A_2=A_2\cup{[c,d]}$\\
\hspace*{1cm}{\bf if
}\texttt{eval$_\textrm{\texttt{1}}$(d,e)}$\geq$\texttt{eval$_\textrm{\texttt{1}}
$(e,x$_2$)}\\
\hspace*{1.5cm}$A_1=A_1\cup{[d,e]}$\\
\hspace*{1.5cm}$A_3=A_3\cup{[e,x_2]}$\\
\hspace*{1cm}{\bf else }:\\
\hspace*{1.5cm}$A_1=A_1\cup{[e,x_2]}$\\
\hspace*{1.5cm}$A_3=A_3\cup{[d,e]}$\\
\hspace*{0.5cm}{\bf else if
}\texttt{eval$_\textrm{\texttt{2}}$(d,e)}$\geq$\texttt{eval$_\textrm{\texttt{2}}
$(c,d)} and
\texttt{eval$_\textrm{\texttt{2}}$(d,e)}$\geq$\texttt{eval$_\textrm{\texttt{2}}
$(e,x$_2$)}:\\
\hspace*{1cm}$A_2=A_2\cup{[d,e]}$\\
\hspace*{1cm}{\bf if
}\texttt{eval$_\textrm{\texttt{1}}$(c,d)}$\geq$\texttt{eval$_\textrm{\texttt{1}}
$(e,x$_2$)}\\
\hspace*{1.5cm}$A_1=A_1\cup{[c,d]}$\\
\hspace*{1.5cm}$A_3=A_3\cup{[e,x_2]}$\\
\hspace*{1cm}{\bf else }:\\
\hspace*{1.5cm}$A_1=A_1\cup{[e,x_2]}$\\
\hspace*{1.5cm}$A_3=A_3\cup{[c,d]}$\\
\hspace*{0.5cm}{\bf else }:\\
\hspace*{1cm}$A_2=A_2\cup{[e,x_2]}$\\
\hspace*{1cm}{\bf if
}\texttt{eval$_\textrm{\texttt{1}}$(d,e)}$\geq$\texttt{eval$_\textrm{\texttt{1}}
$(c,d)}\\
\hspace*{1.5cm}$A_1=A_1\cup{[d,e]}$\\
\hspace*{1.5cm}$A_3=A_3\cup{[c,d]}$\\
\hspace*{1cm}{\bf else }:\\
\hspace*{1.5cm}$A_1=A_1\cup{[c,d]}$\\
\hspace*{1.5cm}$A_3=A_3\cup{[d,e]}$\\
{\bf return }$A$\\

{\bf Mechanism \ref{Even nlogn}}\\
\vspace{-0.3cm}\\
Subroutine($\mathcal{A}$,$[0,1]$)\\
{\bf return }$A$\\
\vspace{-0.15cm}\\
{\bf Subroutine($\mathfrak{A},[s,t]$):}\\
{\bf if }$\#\mathfrak{A}=1$\\
\hspace*{0.5cm}$A_i=[s,t]$, $i\in\mathfrak{A}$\\
\hspace*{0.5cm}{\bf break}\\
{\bf for }$i$ in $\mathfrak{A}:$\\
\hspace*{0.5cm}$v=$\texttt{eval$_\textrm{\texttt{i}}$(s,t)}\\
\hspace*{0.5cm}$m_i=$\texttt{cut$_\textrm{\texttt{i}}$(s,1/2*v)}\\
Form array $L$ where $L[i]\in\mathfrak{A}$ and $m_{L[i]}<m_{L[j]}$ implies
$i<j$.\\
{\bf for }$j\leq\#\mathfrak{A}/2:$\\
\hspace*{0.5cm}$\mathfrak{A}=\mathfrak{A}\cup\{L[j]\}$\\
Subroutine($\mathfrak{A_1}$,$[s,m_{\lfloor\#\mathfrak{A}/2\rfloor}]$)\\
Subroutine($\mathfrak{A}\backslash\mathfrak{A_1}$,$[m_{\lfloor\#\mathfrak{A}
/2\rfloor},t]$)\\

\bibliographystyle{alpha}
\bibliography{references.bib}

\end{document}